\newcommand{\bb}[1]{\mathbb{#1}}
\newcommand{\cl}[1]{\mathcal{#1}}
\newcommand{\ff}[1]{\mathfrak{#1}}
\newcommand{\N}{\mathbb{N}}
\newcommand{\R}{\mathbb{R}}
\newcommand{\C}{\mathbb{C}}
\newcommand{\K}{\mathbb{K}}
\newcommand{\rank}{\operatorname{rank}}
\def\>{\rangle}
\def\<{\langle}
\newcommand{\map}[1]{\mathcal{#1}}
\newcommand{\Tr}{\operatorname{Tr}}
\newcommand{\Vect}{\operatorname{Vec}}
\newcommand{\dist}{\operatorname{dist}}
\newcommand{\id}{\operatorname{\textsl{id}}}
\newcommand{\im}{\operatorname{Im}}
\newcommand{\Sym}{{\operatorname{Sym}}}
\newcommand{\Asym}{{\operatorname{Asym}}}
\newcommand{\ran}{\operatorname{ran}}
\newcommand{\spn}{\operatorname{span}}
\newcommand{\CSEP}{\operatorname{CSEP}}
\newcommand{\EB}{\operatorname{EB}}
\newcommand{\Herm}{\operatorname{Herm}}
\newcommand{\SEP}{\operatorname{SEP}}
\newcommand{\PSD}{\operatorname{PSD}}
\newcommand{\IPT}{\ensuremath{\operatorname{IPT}}}
\newcommand{\PPT}{\ensuremath{\operatorname{PPT}}}
\newcommand{\AND}{\text{ and }}
\newcommand{\FORAL}{\:\text{ for all }\:}
\newcommand{\qand}{\quad\text{and}\quad}
\newcommand{\qfor}{\quad\text{for}\quad}
\newcommand{\qforal}{\quad\text{for all}\quad}
\newcommand{\ol}{\overline}
\newcommand{\wt}{\widetilde}
\newcommand{\hS}{{\widehat {\map S} }}
\newenvironment{spmatrix}{\left(\begin{smallmatrix}}{\end{smallmatrix}\right)}
\def\thm@space@setup{%
  \thm@preskip=6mm plus 1mm minus 2mm
  \thm@postskip=4mm plus 1mm minus 2mm
}
\theoremstyle{plain}
\newtheorem{thm}[equation]{Theorem}
\newtheorem{theo}[equation]{Theorem}
\newtheorem{conj}[equation]{Conjecture}
\newtheorem{cor}[equation]{Corollary}
\newtheorem{prop}[equation]{Proposition}
\newtheorem{lem}[equation]{Lemma}
\newtheorem{lemma}[equation]{Lemma}
\theoremstyle{definition}
\newtheorem{example}[equation]{Example}
\newtheorem{rem}[equation]{Remark}
\numberwithin{equation}{section}
\begin{document}

\title[Positive maps and  entanglement in real Hilbert spaces]{Positive maps and  entanglement in real Hilbert spaces}

\author[G.~Chiribella]{Giulio Chiribella} 
\address{Department of Computer Science, The University of Hong Kong, Pok Fu Lam Road, Hong Kong}
\email{giulio.chiribella@cs.ox.ac.uk}

\author[K.R.~Davidson]{Kenneth R. Davidson}
\address{Department of Pure Mathematics, University of Waterloo, Waterloo, ON, Canada N2L 3G1}
\email{krdavidson@uwaterloo.ca}

\author[V.I.~Paulsen]{Vern I.~Paulsen}
\address{Institute for Quantum Computing and Department of Pure Mathematics, University of Waterloo,
Waterloo, Waterloo, ON, Canada N2L 3G1}
\email{vpaulsen@uwaterloo.ca}

\author[M.~Rahaman]{Mizanur Rahaman}
\address{Univ Lyon, ENS Lyon, UCBL, CNRS, Inria, LIP, F-69342, Lyon Cedex 07, France}
\email{mizanur.rahaman@ens-lyon.fr}

\begin{abstract}
The theory of positive maps plays a central role in operator algebras and functional analysis, and has countless applications in quantum information science.  The theory was originally developed  for operators acting on complex Hilbert spaces, and less is known about its variant on real Hilbert spaces.   In this article we study positive maps acting on a full matrix algebra over the reals, pointing out a number of fundamental differences with the complex case and discussing their  implications in  quantum information.    

We  provide a necessary and sufficient condition for a real map to admit a positive complexification, and connect the existence of positive maps with non-positive complexification  with the existence of mixed states that are entangled in real Hilbert space quantum mechanics, but separable in the complex version, providing explicit examples both for the maps and for the states. Finally, we  discuss entanglement breaking   and PPT maps, and we show  that a straightforward   real version of the PPT-squared conjecture is false even in dimension 2.  Nevertheless, we show that the original PPT-squared conjecture implies a different conjecture for real maps,  in  which the PPT property  is  replaced by a stronger property of invariance under partial transposition (IPT).   When the IPT property is assumed, we prove an asymptotic version of the conjecture.  
\end{abstract}

\maketitle

\section{Introduction}

Positive and completely positive maps play a central role  in the theory of operator algebras and in the related fields of dilation theory, 
random matrix theory, free probability, semidefinite optimization theory and quantum information \cites{CMW,optimization,NC}.   
In quantum theory, completely positive maps  are ubiquitous as they represent general  evolutions of quantum states, 
while positive maps are a powerful tool for  characterizing quantum entanglement  \cites{peres1996separability,Horodeckis,horodecki1997separability,terhal2000schmidt,terhal2000bell,sanpera2001schmidt,clarisse2005characterization,ranade2007jamiolkowski,chruscinski2009spectral,skowronek2009cones} (see \cite{horodecki2009quantum} for a review).

Much of the  theory of positive maps  was originally developed in the case where  the underlying number field is the complex numbers.  
In contrast, the case of  operator algebras over the reals  has not been looked at as extensively. 
Notable exceptions are the works of Ruan \cites{Ruan,Ruan1}, and, very recently,  Blecher and  Tepsan  \cite{BlecherTepsan}. 
The real variant quantum theory  was considered in an early work by St\"uckelberg  \cite{stueckelberg1960quantum}, and was later found to differ from standard complex quantum theory in a number of essential features.    
Araki  \cite{araki1980characterization} observed that, while the state space of composite systems  in complex quantum theory  is equal to  the product of the dimension of  the components, in real quantum theory it is strictly larger.   
Dually, this fact implies that product observables are not sufficient to identify general entangled states \cites{wootters1990local,hardy2012limited}.   
Explicitly, Wootters \cite{wootters1990local} provided an example of two mixed states that have orthogonal support, and yet give rise to the same correlations for all possible local observables.  
Building on this example, it was later shown that there exist  evolutions in real quantum theory  that are indistinguishable when acting on a single system, but can be  distinguished perfectly when applied locally on  an entangled state \cite{chiribella2010}.  
These  phenomena  can be summarized by the statement: \textit{real quantum theory violates the Local Tomography Axiom} (see \cite{chiribella2021process} for a review), often assumed in axiomatic reconstructions of standard finite-dimensional quantum theory (see e.g. \cites{Hardy01,CDP11,DB11,MM11,Hardy11}).       
Further differences between real and complex quantum theory were observed in \cites{wootters2012entanglement,aleksandrova2013real,chiribella2013quantum,wootters2014rebit}.    
Very recently, an experimental test that distinguishes between complex quantum theory and real quantum theory was proposed \cite{Renou}. 

In this paper  we develop  the theory of positive maps acting on a full matrix algebra  over the reals, and discuss their applications to the study of entanglement in real quantum mechanics. Our motivation stems from a fact that was observed recently in \cite{CDPR}: complexification of a positive isometry on a real space is not necessarily an isometry in the complex case. This leads to a natural question: whether complexifications of real positive maps are always positive over the complex numbers?   
To answer the question we characterize the subset of  real maps with  a positive complexification, providing explicit examples of positive real maps outside this subset.  
Then, we prove that the existence of positive real maps with non-positive complexification is equivalent to the existence of quantum states that are entangled in real quantum theory, but are separable when regarded as elements of the larger state space of  complex quantum theory. 
These states, of which we also provide explicit examples, cannot violate Bell inequalities  \cite{bell1964einstein}  (see \cite{brunner2014bell} for a review), 
even if  pre-processing operations are allowed \cite{popescu1995bell} and if an arbitrarily large number of identically prepared systems is available \cites{navascues2011activation,palazuelos2012superactivation}.         
Operationally,  the existence of real-entangled, but complex-separable states  can be interpreted in a resource-theoretic framework 
\cite{wu2021operational}, as the ability to prepare a larger class of states by performing  a larger set of local operations that are not restricted to have real matrix elements in a given basis.

Finally, we characterize the   maps that break  quantum entanglement on the real domain,  extending a classic characterization by Horodecki, Shor and Ruskai \cite{HSR}, and exploring the relation between entanglement breaking maps and another class of completely positive maps, called PPT, for which the Choi matrix has positive partial transpose.
In the complex case, Christandl  conjectured (see  \cite{RJKHW} ) that $\Phi \circ \Phi$ is entanglement breaking for every PPT map $\Phi$ and in \cite{CMW} (also see \cite{CYT}) it was shown that this conjecture (now known as the PPT-squared conjecture) holds in dimension $3$. 

In stark contrast, here we show that a straightforward real-vector-space analogue of the PPT-squared conjecture is false in even dimension 2: PPT  maps defined on real matrix algebras  do not necessarily square to maps that break entanglement in the real domain.   
Nevertheless, we show that the (complex) PPT-squared  conjecture implies another conjecture, which  we call the IPT-squared conjecture.   The conjecture is that   $\Phi \circ \Phi$ breaks entanglement on the real domain for every completely positive map $\Phi:  M_n (\R) \to M_n (\R)$ whose Choi matrix   is invariant under partial transposition.
We  call such maps IPT and prove that the IPT-squared conjecture holds asymptotically. 
Moreover, when the IPT maps under consideration are trace preserving and unital, we prove that a finite number of iterations yields a map that is  entanglement breaking on the real domain.

The rest of the paper is structured as follows.  
In Section \ref{sec:defpos} we discuss the notion of positivity on real matrix algebras, pointing out the need to include 
 commutation with the adjoint
 in the definition of a positive map.  
In Section \ref{sec:choi},  we  consider  $p$-positive maps and characterize their Choi matrices. 
The results of this section are mostly adaptations of existing results in the complex case. 
In Section \ref{sec:poscomplex} we characterize the set of $p$-positive real maps that admit a $p$-positive complexification, 
and in Section \ref{sec:examples1} we provide concrete examples of $p$-positive real maps that violate this condition.   
We then move to the study of entanglement on real Hilbert spaces, characterizing entanglement witnesses and showing that the existence of positive real maps with non-positive complexification is equivalent to the existence of mixed states that are entangled in real quantum theory but separable in the complex version (Section \ref{sec:witnesses}).  
Examples of real-entangled but complex-separable quantum states are provided in Section \ref{sec:examples2}.  
In Section \ref{sec:breaking} we consider the notion of entanglement $p$-breaking channels, which transform arbitrary entangled states into states with Schmidt rank at most $p$, and show an explicit example of a map that breaks complex-entanglement but preserves real-entanglement.      
Finally, in Section \ref{sec:PPT}, we conclude with a discussion of the relation between entanglement breaking channels and PPT channels. We show that in the real case there are PPT channels whose square is not real entanglement breaking. We then propose an alternative to the PPT-squared conjecture in the real case, that we call the IPT-squared conjecture. We show that several results concerning the PPT-squared conjecture have direct analogues for the IPT-squared conjecture that are also true.

\section{Positivity preservation and commutation with the adjoint}\label{sec:defpos} 

Let $\bb K$ be either $\C$ or $\R$. 
A matrix    $P \in M_n(\K)$ is called positive (denoted $P\ge0$) if it is of the form $P  =  A^*  A$ for some $A\in  M_n (\K)$, where $A^*$ is the adjoint of $A$ 
(for $\bb K =  \R$, the adjoint  $A^*$ coincides with the transpose $A^t$).    
 The set of all positive matrices in $M_n(\bb K)$ will be denoted by $\PSD(\bb K^n)$.

A matrix is Hermitian if $A=A^*$, and we write $\Herm(\bb K^n)$ for the set of Hermitian matrices in $M_n(\bb K)$.  
For a Hermitian matrix $A \in \Herm(\bb K^n)$, positivity is equivalent to the condition $\<  v  | A  v\>  \ge0$ for every $v\in  \K^n$.    
Here, the product is defined as $  \<  v  |  w\>  =  v^*  w$, regarding the vectors $v,w \in  \K^n$ as column vectors.

We say that a linear map $\Phi :  M_n  (\K)  \to  M_{m}  (\K)$ is {\em positivity preserving} if $\Phi  (P)  \ge0$ for all $P\ge 0$, 
and {\em commutes with the adjoint} if $\Phi  (A^*)   =  \Phi(A)^*$ for all $A\in M_n (\K)$.    In the complex field, commutation with the adjoint is equivalent to the condition that $\Phi$ maps Hermitian matrices into Hermitian matrices. For this reason,   maps that commute with the adjoint are sometimes called {\em Hermitian preserving}. In the real field, however, commutation with the adjoint is a stronger property than preservation of Hermiticity:   a map $\Phi  $  commutes with the adjoint  if and only if 
\[
\Phi \left( \Sym_n(\R)  \right)  \subseteq   \Sym_m(\R)  
\qand  \Phi   \left( \Asym_n(\R)  \right)   \subseteq  \Asym_m(\R)  \,, 
\]
where  $\Sym_n(\R) = \{ A \in M_n(\R) : A^t = A \}$  and   $\Asym_n(\R) = \{ A \in M_n(\R) : A^t = -A \}$ are the spaces of symmetric and antisymmetric matrices, respectively.   Later in the paper (Lemma~\ref{lemma-pos}), we will show  that $\Phi$ commutes with the adjoint if and only if its Choi matrix is Hermitian. 

If the map $\Phi$ is positivity preserving and  commutes with the adjoint, we call it  {\em positive}.    
In the complex field,   the set of positive maps coincides with the set of positivity preserving maps, because preservation of positivity  implies  commutation with the adjoint.  This implication, however, does not hold on the real field.  
Counterexamples are abundant. For example, the map  
\[
 \Phi   :     M_2 (\R)  \to  M_1  (\R),   \,\, 
 \begin{pmatrix} a  &  b  \\  c  &  d \end{pmatrix}   \to   (a+ d  + b-c )/2 
\] 
is positivity preserving but  does not commute with the adjoint.

In special cases, commutation with the adjoint follows from positivity preservation, plus additional properties, such as unitality and unit norm:

\begin{prop}
If $\Phi: M_n(\R) \to M_{m}(\R)$ is positivity preserving, unital and (operator) norm 1, then $\Phi$  is commutes with the adjoint.
\end{prop}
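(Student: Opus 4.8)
The plan is to split the problem using the equivalence recalled above: $\Phi$ commutes with the adjoint if and only if $\Phi\big(\Sym_n(\R)\big) \subseteq \Sym_m(\R)$ and $\Phi\big(\Asym_n(\R)\big) \subseteq \Asym_m(\R)$. I would establish these two inclusions separately, and I expect them to rely on different hypotheses: the symmetric inclusion from positivity preservation alone, and the antisymmetric inclusion from unitality together with the norm bound.

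For the symmetric part, I would first observe that every positive matrix in $M_m(\R)$ is automatically symmetric, being of the form $A^t A$, so positivity preservation already forces $\Phi$ to carry positive matrices to symmetric ones. Since any $S \in \Sym_n(\R)$ is a difference of two positive matrices, for instance $S = (S + cI) - cI$ with $c \geq \|S\|$, linearity shows that $\Phi(S)$ is a difference of symmetric matrices and hence symmetric. This step is routine and uses neither unitality nor the norm constraint.

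The substance of the argument is the antisymmetric inclusion. Fix $K \in \Asym_n(\R)$; by homogeneity I may assume $\|K\| \leq 1$. The key device is the one-parameter family $U(\theta) := \cos\theta\, I + \sin\theta\, K$. A short computation gives $U(\theta)^t U(\theta) = \cos^2\theta\, I - \sin^2\theta\, K^2$, and since $-K^2 = K^t K \geq 0$ has norm $\|K\|^2 \leq 1$, this is dominated by $I$; hence $\|U(\theta)\| \leq 1$ for every $\theta$. Using $\|\Phi\| = 1$ I obtain $\|\Phi(U(\theta))\| \leq 1$, while unitality and linearity give $\Phi(U(\theta)) = \cos\theta\, I + \sin\theta\, M$ with $M := \Phi(K)$.

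To extract information about $M$ I would test against a fixed unit vector $v$ and set $g(\theta) := \|\Phi(U(\theta))\, v\|^2 = \|\cos\theta\, v + \sin\theta\, Mv\|^2$. Fixing $v$ before differentiating is the crucial move: it makes $g$ a smooth scalar function of $\theta$, sidestepping the non-differentiability of the operator norm (or of $\lambda_{\max}$) at the degenerate point $\theta = 0$, which is the step I expect to be the main obstacle if one instead tries to differentiate the norm directly. Since $g(\theta) \leq \|\Phi(U(\theta))\|^2 \leq 1$ for all $\theta$ while $g(0) = \|v\|^2 = 1$, the function attains a global maximum at $\theta = 0$, so $g'(0) = 0$. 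Computing the derivative gives $g'(0) = 2\,\langle v, Mv\rangle = 2\, v^t M v$, whence $v^t M v = 0$ for all $v$. For a real matrix this is exactly the assertion that $M = \Phi(K)$ is antisymmetric, which completes the antisymmetric inclusion and hence the proof.
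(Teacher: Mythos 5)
Your proof is correct and follows essentially the same route as the paper: both treat the symmetric inclusion as immediate from positivity preservation, and both handle the antisymmetric inclusion by perturbing the identity along an antisymmetric direction (your $\cos\theta\, I + \sin\theta\, K$ versus the paper's $I + tC$), exploiting that the norm of this perturbation grows only to second order while a nonzero symmetric part of the image would force first-order growth. The only difference is in how that first-order obstruction is extracted — the paper evaluates the quadratic form at an eigenvector of the symmetric part of $\Phi(C)$ and derives a contradiction for small $t$, whereas you impose the optimality condition $g'(0)=0$ on the smooth scalar function $\theta \mapsto \|\Phi(U(\theta))v\|^2$ to conclude $v^t\,\Phi(K)\,v = 0$ for all $v$; both come down to the same vanishing of the quadratic form.
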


\begin{proof}
Since $\Phi$ is positivity preserving,   $\Phi    \left( \Sym_n(\R)  \right)  \subseteq   \Sym_m(\R)$.     
We now show that unitality and the norm-1 condition imply that   $\Phi   \left( \Asym_n(\R)  \right)  \subseteq  \Asym_m(\R) $. 

The proof is by contradiction. Let $C$ be an antisymmetric matrix, with $\|C\|  =1$ without loss of generality.  
Suppose that $\Phi(C) = X$ is not antisymmetric. Then $B = \frac12(X+X^t) \ne 0$.
Since $B^t=B \ne0$, the spectrum of $B$ is real and not $\{0\}$.
Let $\lambda$ be an eigenvalue of $B$ with $|\lambda| = \|B\|$.

Since $C$ is antisymmetric and norm 1, we have  $-I_n \le C^2 \le 0$ and $-1$ is an eigenvalue of $C^2$.
For $t\in \R$, 
\[
 \| I_n + tC \|^2 = \| I_n - t^2C^2\| = 1 + t^2.
\]

Now choose the sign of $t$ so that $|t\lambda|$ belongs to the spectrum of $tB$; then 
\[
 \| \Phi(I_n + tC) \| = \| I_m + tX \| \ge   1 + |t\lambda| . 
\]
For sufficiently small $t$, this yields a contradiction with the norm 1 property of $\Phi$. 
\end{proof}

One may ask if commutation with the adjoint follows from positivity preservation and from the property $\|  \Phi \|  =  \| \Phi  (I_n) \|$, without requiring unitality. 
The following counterexample answers the question in the negative.

\begin{example}
We provide an example of a positivity preserving  map  $\Phi$   on $M_2 (\R)$ that satisfies the condition  $\|  \Phi \|  =  \| \Phi  (I_2) \|$ but  does not commute with the adjoint.  
Define
\[
 \Phi (A)  =  \Phi  \begin{pmatrix}  a& b  \\  c   & d\end{pmatrix} =    \begin{pmatrix}   \frac{a+d}2 &  0 \\  0  & \frac{b-c}2 \end{pmatrix}\,.
\]   
 The map $\Phi$  is positivity preserving, since $\Phi  \begin{spmatrix} a &  b \\  b & d\end{spmatrix}  =  \frac  {a+d}2 \,  E_{11}$, for any positive matrix.  Moreover, it satisfies 
\begin{align*}  
 \left\|  \Phi  (A)  \right\| =  \max \left\{  \left| \frac{a+d}2\right|\, ,   \left|  \frac{b-c}2\right|\right\}  
 \le  \left\|  A \right\| 
\end{align*}
Hence, $\|  \Phi \|  = \|  \Phi  (I_2)\|=1$. However, $\Phi$ maps antisymmetric matrices into symmetric ones, and therefore does not commute with the adjoint. 
\end{example}

We  conclude with an elementary lemma that will be useful later.

\begin{lemma}\label{lem:rankoneisenough}
Let  $\Phi  :  M_n ({\K})  \to  M_m ({\K})$  be a map that commutes with the adjoint.  Then, $\Phi$ is positive if and only if   $  \< w   |  \Phi  (  v v^*)   w\>\ge 0$ for every pair of vectors $  v\in  \K^n$ and $w\in\K^m$, where $vv^*$  is the rank-1 matrix with entries $  (vv^*)_{ij} :  =   v_i  \overline v_j$,  $\overline v_j$ denoting the complex conjugate of $v_j$.   
\end{lemma}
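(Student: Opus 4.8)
The plan is to prove the two implications separately, noting that only the converse requires real work. For the forward direction, if $\Phi$ is positive then for any $v\in\K^n$ the rank-one matrix $vv^*$ is positive semidefinite, hence $\Phi(vv^*)\ge 0$, and so $\langle w\,|\,\Phi(vv^*)\,w\rangle\ge 0$ for every $w\in\K^m$ by the characterization of positivity through the quadratic form recalled in Section~\ref{sec:defpos}. This direction uses nothing beyond the definitions.

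For the converse I would assume $\langle w\,|\,\Phi(vv^*)\,w\rangle\ge 0$ for all $v,w$, and take $P\ge 0$ arbitrary. Since $P$ is Hermitian and positive semidefinite, the spectral theorem (valid over both $\R$ and $\C$) furnishes a decomposition $P=\sum_k v_k v_k^*$ as a finite sum of rank-one positive matrices, where $v_k=\sqrt{\lambda_k}\,u_k$ with $\lambda_k\ge 0$ the eigenvalues and $u_k$ the orthonormal eigenvectors of $P$. By linearity of $\Phi$, for every $w\in\K^m$,
\[
 \langle w\,|\,\Phi(P)\,w\rangle=\sum_k\langle w\,|\,\Phi(v_k v_k^*)\,w\rangle\ge 0,
\]
each summand being nonnegative by hypothesis. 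This shows that the quadratic form of $\Phi(P)$ is nonnegative.

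The hard part — and the reason the standing hypothesis that $\Phi$ commutes with the adjoint cannot be dropped — is that over the reals nonnegativity of the quadratic form does not by itself yield positivity of $\Phi(P)$. Indeed, every antisymmetric matrix $C\in\Asym_m(\R)$ satisfies $\langle w\,|\,C\,w\rangle=w^t C w=0$ for all $w\in\R^m$, so the quadratic-form condition only controls the symmetric part of $\Phi(P)$. To finish I would invoke commutation with the adjoint: since $P=P^*$ and $\Phi(A^*)=\Phi(A)^*$, the image $\Phi(P)$ is Hermitian, so its antisymmetric part vanishes. A Hermitian matrix with nonnegative quadratic form is positive semidefinite, whence $\Phi(P)\ge 0$, and $\Phi$ is positive. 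In the complex case this last step is automatic, since nonnegativity of $\langle w\,|\,M\,w\rangle$ for all $w\in\C^n$ already forces $M$ to be Hermitian; the separate Hermiticity argument is precisely what is needed to treat $\K=\R$ on the same footing.
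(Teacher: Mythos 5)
Your proof is correct; the paper states this lemma without proof, treating it as elementary, and your argument is exactly the expected one. You also correctly isolate the one genuinely real-specific point — that over $\R$ a nonnegative quadratic form only controls the symmetric part, so the standing hypothesis that $\Phi$ commutes with the adjoint is what guarantees $\Phi(P)$ is Hermitian and hence positive — which matches the discussion in Section~\ref{sec:defpos}.
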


\begin{rem}
Since the span of the cone $\PSD(\R^n)$ is only $\Sym_n(\R)$, some authors prefer to study maps on this smaller domain in the real case.  However, to better contrast the real and complex case, we find it more convenient to keep the domain $M_n(\K)$,   and to allow $\K$ to be either $\R$ or $\C$.  Note that maps defined on $M_n (\R)$ have a uniquely defined complexification, which allows us to compare the properties of a real map with the properties of its complex version.  In contrast, the complexification is not uniquely defined for maps defined only on $\Sym_n(\R)$.  For example,  the maps $X \mapsto X$ and $X \mapsto X^T$ are identical    as maps on $\Sym_n(\R)$, but the former is completely positive while the latter is not.   For applications in quantum information, it is also worth noting that $\Sym_n(\R) \otimes \Sym_m(\R) \subsetneq \Sym_{mn}(\R)$ as subspaces of $M_n(\R) \otimes M_m(\R) \simeq M_{mn}(\R)$, so the whole theory of  maps on composite systems becomes fractious if one attempts this viewpoint. 

Similarly, some authors \cite{choi:biquad}, \cite{KMcCSZ} prefer to study {\it biquadratic polynomials}. Given a linear map $\Phi: M_n(\R) \to M_m(\R)$ the biquadratic polynomial that it generates is the polynomial in $n+m$ variables given by
\[ p_{\Phi}(x,y) = y^t \Phi(xx^t) y, \,\, x \in \R^n, \, \, y \in \R^m.\]
But $p_{\Phi} = p_{\Psi} \iff (\Phi - \Psi)(\Sym_n) = (0),$ so these polynomials are in a one-to-one correspondence with maps from $\Sym_n(\R)$ into $\Sym_m(\R)$ and again do not capture the extension of such maps to the full matrix algebras. 

We would also like to point out that positive maps on $\Sym_n(\R)$ to itself behave ``nicely" in the sense that most of the well-known results for positive maps on complex matrices continue to hold in this setting. For example, for the complex field, a result of  Russo and Dye  (Corollary 2.9 in \cite{paulsenbook}) states that a positive map on a unital C*-algebra attains its norm at the identity. This result is false in the real case (i.e maps on $M_n(\R)$)as the positive matrices don't span the vector space $M_n(\R)$. However, this result is true for positive maps defined on $\Sym_n(\R)$. The proof is a minor modification of the original one and we choose not to include here. 

One shortcoming of our approach is that maps like $\Phi(X) = X - X^t$, have the property that both $\Phi$ and $- \Phi$ are positive. 
Thus, the cone of positive maps on $M_n(\R)$ is not pointed.
However, as we show in Remark~\ref {R:pointed cones}, the cone of $p$-positive maps is pointed for $p\ge2$.
\end{rem}

\section{$p$-positive maps and their Choi representation}\label{sec:choi} 

A linear map $\Phi  :  M_n ({\K})  \to  M_m ({\K})$ is  {\em $p$-positive} if  the map 
$  \Phi  \otimes  \id_{p}:   M_n ({\K})   \otimes M_p ({\K})  \to  M_m ({\K}) \otimes M_p ({\K}) $ is positive, where $\id  _p$ is the identity map on $M_p ({\K})$.   
Positive maps correspond to the special case $p=1$.   A map is {\em completely positive} if it is $p$-positive for every $p\in  \N$.

 A linear map $\Phi  :  M_n ({\K})  \to  M_m ({\K})$ is conveniently represented in terms of its Choi matrix~\cite{choi1975completely},  defined as  
\begin{align*} 
C_\Phi:=\sum_{i,j} E_{i,j}\otimes \Phi(E_{i,j}) \,,
\end{align*} 
where $E_{i,j}$ are the standard matrix units of $M_n ({\K}) $.     A related representation is the Jamio\l kowski matrix~\cite{jamiolkowski1972linear} 
$J_\Phi:=\sum_{i,j} E_{i,j}\otimes \Phi(E_{j,i})$, originally introduced in the study of positive maps.

We will  often use the following lemma,  the complex version of which is well known.

\begin{lemma}\label{lemma-pos}
A linear map $\Phi:M_n({\K})\rightarrow M_m({\K})$ is 
\begin{enumerate}
\item[(i)]  commuting with the adjoint  if and only if its Choi matrix $C_\Phi$ is Hermitian,
\item[(ii)] $p$-positive if and only if $C_\Phi$ is Hermitian and $ \<  V  |  C_\Phi  V\>\ge 0$ for  every  vector  $V  \in  \K^n\otimes \K^m$ with Schmidt rank at most $p$.  
\item[(iii)] $p$-positive if and only if its adjoint map $\Phi^*$ is $p$-positive.    
\end{enumerate}
\end{lemma}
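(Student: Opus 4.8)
The plan is to handle the three parts in order, with (ii) carrying essentially all the analytic content and (i), (iii) reducing to short computations or to (ii). Throughout I write $\langle A,B\rangle_{\mathrm{HS}}=\Tr(A^*B)$ for the Hilbert--Schmidt inner product and record the elementary fact $(C_\Phi)_{(i,k),(j,l)}=(\Phi(E_{ij}))_{kl}$, which I will use repeatedly.

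\textbf{Part (i).} Since $\Phi$ and the adjoint are both additive and $(\ )^*$-compatible, and the $E_{ij}$ span $M_n(\K)$, commutation with the adjoint is equivalent to $\Phi(E_{ij})^*=\Phi(E_{ji})$ for all $i,j$ (recall $E_{ij}^*=E_{ji}$). On the other hand I would compute $C_\Phi^*=\sum_{i,j}E_{ji}\otimes\Phi(E_{ij})^*$ and relabel $i\leftrightarrow j$ to get $C_\Phi^*=\sum_{i,j}E_{ij}\otimes\Phi(E_{ji})^*$. Because the $\{E_{ij}\}$ are linearly independent in the first tensor leg, comparing termwise with $C_\Phi=\sum_{i,j}E_{ij}\otimes\Phi(E_{ij})$ shows $C_\Phi=C_\Phi^*$ if and only if $\Phi(E_{ij})=\Phi(E_{ji})^*$ for all $i,j$, which is exactly the condition above. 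This settles (i).

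\textbf{Part (ii).} First I dispose of the Hermiticity clause. Checking on elementary tensors $A\otimes B$, one sees that $\Phi\otimes\id_p$ commutes with the adjoint if and only if $\Phi$ does; hence $p$-positivity of $\Phi$ forces $\Phi$ to commute with the adjoint, so $C_\Phi$ is Hermitian by (i), while in the converse direction Hermiticity of $C_\Phi$ is assumed. With commutation in hand, Lemma~\ref{lem:rankoneisenough} applies to $\Psi:=\Phi\otimes\id_p:M_{np}(\K)\to M_{mp}(\K)$, so $\Psi$ is positive iff $\langle w\,|\,\Psi(\xi\xi^*)\,w\rangle\ge 0$ for all $\xi\in\K^n\otimes\K^p$ and $w\in\K^m\otimes\K^p$. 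The heart of the proof is a direct expansion of this scalar: writing $\xi=\sum_{i,\alpha}\xi_{i\alpha}\,e_i\otimes f_\alpha$ and $w=\sum_{k,\alpha}w_{k\alpha}\,g_k\otimes f_\alpha$ and carrying out the sums over the $M_p$-indices, I would establish the identity
\[
\langle w\,|\,\Psi(\xi\xi^*)\,w\rangle=\langle V\,|\,C_\Phi\,V\rangle,\qquad V_{ik}:=\sum_\alpha\overline{\xi_{i\alpha}}\,w_{k\alpha}.
\]
Reading $V=(V_{ik})$ as an $n\times m$ matrix gives $V=\overline{\Xi}\,W^t$ with $\Xi=(\xi_{i\alpha})$ and $W=(w_{k\alpha})$, so $\rank V\le p$, i.e. $V$ has Schmidt rank at most $p$; conversely every matrix of rank $\le p$ factors as $\overline{\Xi}\,W^t$, so every Schmidt-rank-$\le p$ vector arises from some pair $(\xi,w)$. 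Thus $\langle w\,|\,\Psi(\xi\xi^*)\,w\rangle\ge 0$ for all $\xi,w$ is equivalent to $\langle V\,|\,C_\Phi\,V\rangle\ge 0$ for all $V$ of Schmidt rank at most $p$, which is (ii).

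\textbf{Part (iii).} I would argue at the level of the tensored map. Since $\id_p$ is self-adjoint for the Hilbert--Schmidt inner product, $(\Phi\otimes\id_p)^*=\Phi^*\otimes\id_p$, so it suffices to show that a map $\Psi$ commuting with the adjoint is positive iff $\Psi^*$ is. Commutation means $\Psi$ preserves the orthogonal decomposition $M_n(\R)=\Sym_n(\R)\oplus\Asym_n(\R)$, and the Hilbert--Schmidt adjoint of such a block map preserves the same decomposition, so $\Psi^*$ again commutes with the adjoint; positivity preservation of $\Psi^*$ is then equivalent, via $\langle v\,|\,\Psi^*(Q)\,v\rangle=\langle\Psi(vv^*),Q\rangle_{\mathrm{HS}}$ together with the self-duality of the cone $\PSD$, to positivity preservation of $\Psi$. (Alternatively one can relate the Choi matrices directly: $C_{\Phi^*}$ is obtained from $C_\Phi$ by swapping the two tensor legs and taking the entrywise complex conjugate, each of which preserves Hermiticity, Schmidt rank and nonnegativity of the quadratic form, so (iii) also follows from (ii).)

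\textbf{Main obstacle.} The real work is the bookkeeping in (ii): pinning down the conjugations so that the identity $\langle w\,|\,\Psi(\xi\xi^*)\,w\rangle=\langle V\,|\,C_\Phi\,V\rangle$ holds on the nose over $\C$ (where the bars are genuine), verifying that the coefficient matrix $V=\overline{\Xi}\,W^t$ has rank exactly its Schmidt rank, and checking that the correspondence $(\xi,w)\mapsto V$ is onto all Schmidt-rank-$\le p$ vectors. Everything else is either the short index computation of (i) or the standard cone-duality argument of (iii).
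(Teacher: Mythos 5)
Your proof is correct. Note, though, that the paper itself does not write out proofs of (i) and (ii): it cites the complex-field references (de Pillis, Jamio\l kowski, Ranade--Ali) and states that the real versions are sufficiently similar to omit, so the content you supply for those parts --- the termwise comparison of $C_\Phi$ with $C_\Phi^*$ for (i), and for (ii) the reduction via Lemma~\ref{lem:rankoneisenough} to the identity $\< w\,|\,(\Phi\otimes\id_p)(\xi\xi^*)\,w\> = \< V\,|\,C_\Phi\,V\>$ with $V=\overline{\Xi}\,W^t$ of rank at most $p$ --- is exactly the standard argument being alluded to, and your bookkeeping of the conjugations and of the surjectivity of $(\xi,w)\mapsto V$ onto Schmidt-rank-$\le p$ vectors checks out. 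Where you genuinely diverge is (iii): the paper's proof works entirely at the level of Choi matrices, deriving $C_{\Phi^*}=S^*C_\Phi^t S$ from the trace identity $\Tr(C_\Phi(A^t\otimes B))=\Tr(\Phi(A)B)$ and then observing that the swap-and-transpose preserves Hermiticity and the Schmidt-rank-$p$ quadratic-form condition, so that (iii) follows immediately from (ii); your primary route instead argues at the level of maps, using $(\Phi\otimes\id_p)^*=\Phi^*\otimes\id_p$, preservation of the $\Sym\oplus\Asym$ decomposition, and self-duality of the $\PSD$ cone. Both are valid; the paper's version is shorter because it recycles (ii) and the explicit formula for $C_{\Phi^*}$ (which is reused later, e.g.\ in Theorem~\ref{Rentanglementbreaking characterization}), while yours is self-contained and makes the cone-duality mechanism explicit --- and your parenthetical alternative for (iii) is precisely the paper's argument. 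One small point worth pinning down in your (iii): the paper defines $\Phi^*$ through the bilinear trace pairing rather than the sesquilinear Hilbert--Schmidt one; for maps commuting with the adjoint the two adjoints coincide, so nothing breaks, but you should say so if you keep the Hilbert--Schmidt formulation.
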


\begin{proof}
The complex version  of (i) was proven in \cites{de1967linear,ranade2007jamiolkowski},  while the complex version of (ii) was proven in  \cite{jamiolkowski1972linear} 
(for $p=1$, see also \cite{stormer-book}*{Proposition~4.1.11}) and \cites{ranade2007jamiolkowski,chruscinski2009spectral} (general $p$).   
 The proofs of the real versions are sufficiently similar to those in the complex case that we omit them.
    
To prove (iii), one can use   the relation
\begin{align}\label{choiinverse}
 \Tr (C_\Phi (A^t\otimes B)) = \Tr (\Phi(A)B) ,
\end{align}
valid for arbitrary matrices  $A\in M_n({\K})$ and $B\in M_m({\K})$.  
  Using this relation, one obtains  
\begin{align*}
  \Tr  (    C_{\Phi^*}    (  B^t  \otimes A ))    
  &=   \Tr  ( \Phi^*  (B)   A  )   
  =  \Tr  (  B  \Phi  (A))   \\&
  =   \Tr(   C_\Phi   (   A^t \otimes B)) 
  =  \Tr  (     C_\Phi^t  \, S     (  B^t \otimes A) S^*  )\, ,
\end{align*}
where $S : \K^m \otimes \K^n \to \K^n \otimes \K^m $  is the linear operator defined by $S ( w\otimes v)  =  v\otimes w$ for all $v \in  \K^n$ and $w\in \K^m$.  
Since $A$ and $B$ are arbitrary, we conclude 
\begin{align}\label{choiadjoint}
 C_{\Phi^*} = S^* C_\Phi^t S .
\end{align}  
Clearly, $C_{\Phi^*}$ is Hermitian  if and only if $C_\Phi$ is, and  $\<  W  |   C_{\Phi^*}  W\>  \ge 0 $   for every vector $W  \in  \K^m\otimes \K^n $ 
with Schmidt rank $\le p$ if and only if $\<  V  |   C_{\Phi}  V\>  \ge 0 $   for every vector $V  \in  \K^n\otimes \K^m $ with Schmidt rank $\le p$. 
Hence, part (ii) of this lemma  implies that $\Phi^*$ is $p$-positive if and only if $\Phi$ is.  
\end{proof}

\section{Real maps with $p$-positive complexification}\label{sec:poscomplex}

Given a linear map $\Phi: M_n(\R) \to M_m(\R)$ its {\em complexification} is the  linear map
$\wt{\Phi}: M_n(\C) \to M_m(\C),$ defined by
\[
 \wt{\Phi}(X+iY) = \Phi(X) + i \Phi(Y) \qforal X, Y \in M_n(\R).
\]

We now characterize the set of  maps with a $p$-positive complexification.  
The characterization uses  the correspondence between  matrices  $A =  (a_{ij}) \in  M_{n,m}  (\K)$ and 
vectors  in $\Vect  (A)  \in  \K^n\otimes \K^m$ given by  
\begin{align*} 
 \Vect (A)   : =   \sum_{i=1}^n \sum_{j=1}^m  \,  a_{ij}  \,  e^{(n)}_i\otimes e^{(m)}_j ,
\end{align*} 
where, for $r \in  \{m,n\}$,   $\{e^{(r)}_i\}_{i=1}^r$  is the standard orthonormal basis for $\K^r$.

\begin{theo}\label{theo:ChoiCondition}
A map $\Phi  :  M_n (\R)  \to  M_m  (\R)$  has a $p$-positive complexification  if and only if its Choi matrix   $C_\Phi$ is Hermitian and satisfies the condition 
$\<  \Vect  (X)  |  C_\Phi  \Vect (X)  \>    +  \<  \Vect  (Y)  |  C_\Phi  \Vect (Y)  \>  \ge 0$ 
for every pair of matrices $X  ,  Y  \in  M_{n,m} (\R)$ such that $\rank (X+  i  Y)  \le p$.
\end{theo}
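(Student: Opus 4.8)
The plan is to reduce the statement to the complex version of Lemma~\ref{lemma-pos}(ii) applied to the complexification $\wt\Phi$, and then to unwind the resulting Schmidt-rank condition using the $\Vect$ correspondence. The one genuinely content-bearing step will be a short algebraic identity; everything else is bookkeeping.

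First I would record the elementary but crucial observation that the Choi matrix is unchanged under complexification. Since the matrix units $E_{i,j}$ are real, $\wt\Phi(E_{i,j}) = \Phi(E_{i,j})$, and hence
\[
 C_{\wt\Phi} = \sum_{i,j} E_{i,j}\otimes \wt\Phi(E_{i,j}) = \sum_{i,j} E_{i,j}\otimes \Phi(E_{i,j}) = C_\Phi ,
\]
now regarded as an element of $M_{nm}(\C)$. In particular $C_\Phi$ is a \emph{real} matrix, so it is Hermitian as a complex matrix precisely when it is symmetric; this matches the Hermiticity hypothesis in the statement. Because the complexification is unique (see the Remark), ``$\Phi$ admits a $p$-positive complexification'' simply means ``$\wt\Phi$ is $p$-positive.'' Applying Lemma~\ref{lemma-pos}(ii) over the complex field to $\wt\Phi$, this holds if and only if $C_{\wt\Phi} = C_\Phi$ is Hermitian and $\< V | C_\Phi V\> \ge 0$ for every $V \in \C^n\otimes\C^m$ of Schmidt rank at most $p$. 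Writing $V = \Vect(Z)$ for $Z \in M_{n,m}(\C)$, the Schmidt rank of $V$ equals $\rank(Z)$, so the positivity condition becomes $\<\Vect(Z)\,|\, C_\Phi \Vect(Z)\> \ge 0$ for all $Z$ with $\rank(Z)\le p$.

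The remaining step, which is the heart of the matter, is the identity that splits this complex quadratic form into the two real terms in the statement. Writing $Z = X + iY$ with $X, Y \in M_{n,m}(\R)$ and setting $u = \Vect(X)$, $v = \Vect(Y)$ (real vectors, so $\Vect(Z) = u + iv$), I would expand
\[
 \<\Vect(Z)\,|\, C_\Phi \Vect(Z)\> = (u^t - i v^t)\, C_\Phi\, (u+iv) = u^t C_\Phi u + v^t C_\Phi v + i\big(u^t C_\Phi v - v^t C_\Phi u\big).
\]
Using that $C_\Phi$ is symmetric, the scalar $v^t C_\Phi u$ equals its own transpose $u^t C_\Phi^t v = u^t C_\Phi v$, so the imaginary part vanishes and the expression reduces to $\<\Vect(X)| C_\Phi \Vect(X)\> + \<\Vect(Y)| C_\Phi \Vect(Y)\>$. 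Since the constraint $\rank(X+iY)\le p$ is exactly $\rank(Z)\le p$, this shows the stated inequality is equivalent to the positivity condition in Lemma~\ref{lemma-pos}(ii), and the whole chain of equivalences settles both directions of the theorem at once.

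I do not anticipate a serious obstacle: the argument is a clean translation once the identity $C_{\wt\Phi}=C_\Phi$ is in hand. The only points requiring care are verifying that the Schmidt rank of $\Vect(Z)$ equals $\rank(Z)$ (standard, via the singular value decomposition of $Z$) and the cancellation of the cross terms, which hinges solely on the symmetry of $C_\Phi$. It is worth emphasizing that this cancellation is the source of the phenomenon driving the rest of the paper: for a symmetric $C_\Phi$ the complex quadratic form is governed by the \emph{sum} $\<\Vect(X)|C_\Phi\Vect(X)\> + \<\Vect(Y)|C_\Phi\Vect(Y)\>$ over pairs $(X,Y)$ with $\rank(X+iY)\le p$, a strictly stronger constraint than requiring nonnegativity on each real summand separately, which is exactly the real $p$-positivity condition.
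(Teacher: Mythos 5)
Your proposal is correct and follows essentially the same route as the paper's proof: both reduce to Lemma~\ref{lemma-pos}(ii) applied to $\wt\Phi$ via the identification $C_{\wt\Phi}=C_\Phi$, translate Schmidt rank of $\Vect(Z)$ into $\rank(Z)$, and use the symmetry of the real matrix $C_\Phi$ to cancel the cross terms in the expansion of $\<\Vect(X+iY)|C_\Phi\Vect(X+iY)\>$. The paper merely states the key identity without displaying the cross-term computation, which you carry out explicitly.
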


\begin{proof}   
By Lemma \ref{lemma-pos},   $\widetilde \Phi$ is $p$-positive if and only if its Choi matrix $  C_{\widetilde \Phi}   =  C_\Phi$ is Hermitian 
and satisfies the condition $\<   V  |  C_\Phi   V\> \ge0 $ for every vector $V  \in  \C^{n}\otimes \C^m$ with Schmidt rank at most $p$.    
In the vector notation, this condition is equivalent to   $\<  \Vect (A)  |  C_\Phi  \Vect  (A)  \>  \ge 0 $ for every matrix 
$A  \in  M_{n,m}  (\C)$ with $\rank  (A) \le p$.  
Writing $A   =  X  +  i  Y$, with $X,  Y  \in  M_{n,m }  (\R)$, and using the fact that  $C_\Phi$ is a symmetric matrix, we obtain   
\[
 \<  \Vect (A)  |  C_\Phi  \Vect  (A)  \>    =  \<  \Vect (X)  |  C_\Phi  \Vect  (X)  \>  +  \<  \Vect (Y)  |  C_\Phi  \Vect  (Y)  \> .  \qedhere
\]
\end{proof}

\begin{cor}\label{cor:2p}
If a  map  $\Phi: M_n(\R) \to M_{m}(\R)$   is $2p$-positive, then its complexification is $p$-positive.  
\end{cor}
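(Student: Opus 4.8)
The plan is to read the statement off Theorem~\ref{theo:ChoiCondition}, verifying its two hypotheses in turn. Since a $2p$-positive map is in particular positive, Lemma~\ref{lemma-pos}(i) guarantees that the Choi matrix $C_\Phi$ is Hermitian, which settles the Hermiticity requirement. What remains is the inequality $\< \Vect(X) | C_\Phi \Vect(X)\> + \< \Vect(Y) | C_\Phi \Vect(Y)\> \ge 0$ for every pair of real matrices $X, Y \in M_{n,m}(\R)$ with $\rank(X + iY) \le p$.

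The guiding idea is that the two summands can be controlled \emph{separately}, so that no genuine interaction between $X$ and $Y$ is ever needed. First I would fix such an $X, Y$ and take a rank decomposition of the complex matrix, $X + iY = \sum_{k=1}^p u_k v_k^t$ with $u_k \in \C^n$ and $v_k \in \C^m$. Splitting each factor into real and imaginary parts, $u_k = a_k + i b_k$ and $v_k = c_k + i d_k$ with $a_k, b_k \in \R^n$ and $c_k, d_k \in \R^m$, and then taking real and imaginary parts of the whole sum, I obtain $X = \sum_{k=1}^p (a_k c_k^t - b_k d_k^t)$ and $Y = \sum_{k=1}^p (a_k d_k^t + b_k c_k^t)$. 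Each of these is a sum of $2p$ real rank-one matrices, so $\rank(X) \le 2p$ and $\rank(Y) \le 2p$.

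Next I would invoke the correspondence $\Vect(u v^t) = u \otimes v$, under which the real rank of a matrix $Z$ equals the Schmidt rank of the vector $\Vect(Z) \in \R^n \otimes \R^m$. From the rank bounds just obtained, both $\Vect(X)$ and $\Vect(Y)$ have Schmidt rank at most $2p$, so Lemma~\ref{lemma-pos}(ii) applied to the $2p$-positive map $\Phi$ gives $\< \Vect(X) | C_\Phi \Vect(X)\> \ge 0$ and $\< \Vect(Y) | C_\Phi \Vect(Y)\> \ge 0$ individually. Adding these yields the desired inequality, and Theorem~\ref{theo:ChoiCondition} then concludes that $\wt{\Phi}$ is $p$-positive.

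The one step deserving real care — and the place where a hasty argument would go astray — is the rank count. One might hope to package $X$ and $Y$ into a single real object of rank $p$ and thereby avoid the factor of $2$, but the real/imaginary splitting of each rank-one complex dyad genuinely produces \emph{two} real dyads, which is precisely why the hypothesis must be $2p$-positivity rather than $p$-positivity. Everything else is routine: because each quadratic form is shown to be nonnegative on its own, no cross terms between the $X$- and $Y$-forms ever arise.
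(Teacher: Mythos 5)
Your proposal is correct and follows essentially the same route as the paper: Hermiticity of $C_\Phi$ from Lemma~\ref{lemma-pos}, the bound $\rank(X),\rank(Y)\le 2p$ for $A=X+iY$ of rank $\le p$, and then Lemma~\ref{lemma-pos}(ii) applied separately to $\Vect(X)$ and $\Vect(Y)$ before invoking Theorem~\ref{theo:ChoiCondition}. The only cosmetic difference is that the paper obtains the rank bound in one line from $X=\tfrac12(A+\overline A)$ and subadditivity of rank, whereas you verify the same bound by an explicit rank-one decomposition.
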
  

\begin{proof}
Immediate from Theorem \ref{theo:ChoiCondition}, Lemma \ref{lemma-pos}, and the fact that, for $A  =  X+  i Y$,   
$\rank  (X)  =  \rank  (A +  \overline A)   \le  2 \rank (A)$, and similarly for $\rank  (Y)$. 
\end{proof}

\begin{rem} \label {R:pointed cones}
We saw earlier that the cone of positive maps on $M_n(\R)$ is not pointed. 
Now, Corollary \ref{cor:2p}  implies that the cone of $2$-positive maps is pointed, since if $\Phi$ is $2$-positive, 
then $\tilde{\Phi}: M_n(\C) \to M_m(\C)$ is positive, which implies that $-\tilde{\Phi}$ is not positive, and hence $-\Phi$ is not $2$-positive.  
Thus, for every $p>1$ the cone of $p$-positive maps between two real matrix algebras is pointed.    This observation was made by one of the referees of this paper. 
\end{rem}

Note that the Corollary \ref{cor:2p}  establishes that if $\Phi$ is completely positive, then the complexification is also completely positive, a fact proved in Lemma 2.3 in \cite{BlecherTepsan}. 
In particular, this corollary shows that  $2$-positivity on the real field is sufficient for positivity of the complexification.  
In general, however, $2$-positivity is not necessary: for example, the maps   $\Lambda_q  :  M_n (\R) \to M_n(\R)$ defined by 
\begin{align*} 
 \Lambda_q    (A)   =   \Tr(A)\,   I_n     -  q   A    
\end{align*} 
have $p$-positive complexification whenever $q  \le 1/p$ \cite{takasaki1983geometry}.
They are not $(p+1)$-positive (and therefore not $2p$ positive) for $q  > 1/(p+1)$, as one can verify from the relation 
$\<  V_l |  C_{\Lambda_q}  V_l\>  =  1-  q l    $, with  $V_l  :=  \sum_{i=1}^{l}  \,  e_i^{(n)} \otimes e_i^{(n)}/\sqrt {l}$.    
Similar counterexamples can be constructed using the spectral methods developed by Chru{\'s}ci{\'n}ski and Kossakowski in \cite{chruscinski2009spectral}.

\section{Examples of positive real maps with non-positive complexification}\label{sec:examples1}

In this section, we provide several examples of positive maps whose complexification is not.
The first deals with $p$-positivity for general $p$.
The others deal with $p=1$.

\begin{example}
We provide an example of a $(2p\!-\!1)$-positive real map such that its complexification is not $p$-positive. This example shows that Corollary~\ref{cor:2p} is sharp.

Define $\Gamma_q:M_{2p}(\R) \to M_{2p}(\R)$ by 
\begin{align*}
  \Gamma_q(A) =   \Tr ( A)   I_{2p}  -   q (     O_+  A  O_+  +  O_-  A  O_-  ) ,
\end{align*}    
where $O_+   = \begin{spmatrix}   0  &  I_p  \\  I_p  &  0  \end{spmatrix}$ and $O_-   = \begin{spmatrix}    I_p  &  0  \\   0&   -I_p   \end{spmatrix} $. 
The Choi matrix of this map is 
\[
 C_{\Gamma_q}   =  I_{2p}    \otimes  I_{2p}   -   q  \big( \Vect  (O_+)   \Vect  (O_+)^t   +  \Vect  (O_-)   \Vect  (O_-)^t \big) .
\]
Note that the vectors  $\Vect  (O_+)$ and $\Vect  (O_-)$ are orthogonal and each has length $\sqrt{2p}$,
so that 
\[
 \cl P = \frac1{2p} \big( \Vect  (O_+)   \Vect  (O_+)^t   +  \Vect  (O_-)   \Vect  (O_-)^t \big)
\]
is a projection.
Thus $C_{\Gamma_q}   =  I_{2p}    \otimes  I_{2p}   -   2pq \cl P$.
Moreover the unit vectors in $\ran \cl P = \spn\{ O_+, O_- \}$ are $\{ \frac1{\sqrt{2p}} (\cos\theta\, O_+ + \sin\theta\, O_- ) : \theta\in [0, 2\pi] \}$.
Now $O_\theta := \cos\theta\, O_+ + \sin\theta\, O_- \in \cl O_{2n}$ is an orthogonal matrix for all $\theta$.  

Given a unit vector $V = \Vect(A)$, so that $\Tr (A^tA) = 1$, one has 
\begin{align*}
 \<  V   |  C_{\Gamma_q}  V\>  &= \|V\|^2 -  2pq \| \cl P V \|^2 \\&
 = 1  - 2p q  \max_{\substack{W = \cl P W\\ \|W\|=1}}   \<  V |  W\>^2 \\&
 = 1 - \sqrt{2p} q \max_{\theta \in [0,2\pi]} \Tr(A^tO_\theta) .
\end{align*}
If the unit vector $V$ has Schmidt number at most $2p-1$, i.e., $\rank  (A)  \le 2p-1$, we obtain 
\[
  \Tr(A^t O_\theta )  \le  \Tr |A|   \le  \sqrt{ 2p-1} . 
\]  
Hence, we obtain the inequality $\<  V   |  C_{\Gamma_q}  V \> \ge   1-  q  \sqrt{2p(2p-1)}$ for every vector $V$ with Schmidt rank $\le 2p-1$.    
By Lemma \ref{lemma-pos}, we conclude that the map $\Phi$ is $(2p \!-\! 1)$-positive for every $\sqrt{ 2p(2p-1)}q \le 1$.  
  
Take $q$ with $\frac1{2p} < q \le \frac 1{\sqrt{ 2p(2p-1)}}$.
Then $\Gamma_q$ is  $(2p \!-\! 1)$-positive, but its complexification $\widetilde {\Gamma_q}$ is not $p$-positive.
To see this, let 
\[
 V = \frac1{2\sqrt{p}} \Vect(O_+ + i O_-) .
\]
Then $\rank(O_+ + i O_-) = p \rank \begin{spmatrix} i & 1 \\ 1 & -i \end{spmatrix} = p$.
Compute
\[
 \<  V |C_{\widetilde {\Gamma_q}}  |V\>   = \<  V |C_{\Gamma_q}  |V\> =  1 - 2pq < 0 .
\]
Therefore $\widetilde {\Gamma_q}$ is not $p$-positive by  Lemma \ref{lemma-pos} again.
\end{example}

In the rest of the section we focus on the $p=1$ case.

\begin{example}
Let us start with a simple example of positivity preserving real map with non-positive complexification.  
Define $\Phi_s:M_n(\R) \to M_n(\R)$ by
\begin{align*} 
 \Phi_s(A) = A + s(A-A^t) .
\end{align*}    
Then, $\Phi_s$ is a positivity preserving for all $s \in \R$. 
However it  does not commute with the adjoint if $s \ne 0$. 
Hence $\wt{\Phi}_s$  does not commute with the adjoint, and thus is not positive.
Explicitly, if we set 
\[
 P = E_{1,1}+E_{2,2} + i (E_{1,2} -  E_{2,1}) \ge 0 , 
\]
where $E_{i,j}$ are the standard matrix units, then 
\[
 \wt{\Phi}_s(P) = E_{1,1} + E_{2,2} + (1+2s)i(E_{1,2}-E_{2,1}) ,
\]
which  has eigenvalue $-2s$, and therefore is not positive for $s>0$. 
\end{example}

For the complex field, a result of   Russo and Dye  (Corollary 2.9 in \cite{paulsenbook})  states that a positive map 
$\Phi$ on a unital C*-algebra satisfies the condition $\|\Phi\| = \| \Phi(I) \|$.  
Hence, a necessary condition for positivity of the complexification  is that  $\Phi$  is positive  and satisfies the Russo-Dye condition $\|  \Phi  \|   =  \|  \Phi  (I) \|$.    
We now provide  several counterexamples, of increasing difficulty, showing that the Russo-Dye condition is not sufficient for positive complexification.

\begin{example}
Here we provide an example of a  positive, unital, norm one map on $M_2(\R)$ whose complexification is not positive.
Define
\[
 \Phi(A) = \Phi \begin{pmatrix} a & b \\ c & d\end{pmatrix} =
 \begin{pmatrix} d & \frac{b-c}2 \\ \frac{c-b}2 & d\end{pmatrix} .
\]
This map is clearly unital and commuting with the adjoint.
It is positive preserving since $\Phi \begin{spmatrix} a & b \\ b & d\end{spmatrix} = d I_2$.
Calculation shows that $\|  \Phi  \|   =  1 = \|\Phi(I)\|$.  
Finally, the non-positivity of $\wt \Phi$ can be seen from the relation 
$  \wt \Phi  \begin{spmatrix}         \lambda    &     i   \\   -i    &  1/\lambda   \end{spmatrix}   
=   \begin{spmatrix}         1/\lambda    &     i   \\   -i    &  1/\lambda   \end{spmatrix}$ for   $\lambda  >  1$. 
\end{example}

\begin{example}
Here we provide an example of a positive trace-preserving  map on $M_2(\R)$ that satisfies the Russo-Dye condition but has a non-positive complexification. 
Define
\begin{align*} 
 \Psi  (A)   &=  \Psi \begin{pmatrix}      a  &   b   \\   c    &  d      \end{pmatrix}  
 :=    \begin{pmatrix}      0    &    \frac{b-c}2   \\   \frac{c-b} 2   &  a+d      \end{pmatrix} \\&
\end{align*} 
One has  $\|  \Psi  \|    =  \|  \Psi  (I_2) \|  =  2$.    
On the other hand,  $\wt \Psi$  is not  positive, as one can see from the relation 
\[
  \wt \Psi   \begin{pmatrix}      1    &    i   \\   -i    &  1      \end{pmatrix}    =    \begin{pmatrix}      0    &    i   \\   -i    &  2      \end{pmatrix}    .
\]
\end{example}

\begin{example}
Finally, we provide an example of  positive, trace-preserving and unital map on $M_3 (\R)$ that  
satisfies the Russo-Dye condition, but has a non-positive complexification.  
Define the one-parameter family of maps for $t\in [0,1]$, 
\begin{align*} 
 \rho_t (A)   &
 = \rho_t   \begin{pmatrix}      a_{11}  &   a_{12}      & a_{13}    \\    a_{21}  &   a_{22}    & a_{23}   \\     a_{31}  &   a_{32}   &  a_{33}     \end{pmatrix}  
 :=   \begin{pmatrix}      
  \frac{  a_{11 }+  a_{22}  +  2  a_{33}  }4  &  \frac{  (a_{12}  -  a_{21})  t}{2 \sqrt 2}    &  0  \\   
  \frac{  (a_{21}  -  a_{12})  t} {2 \sqrt 2}    &    \frac{  a_{11 }+  a_{22}  +  2  a_{33}  }4   &  0      \\  
  0  &  0  &   \frac{  a_{11} +  a_{22} }2     
 \end{pmatrix} .
\end{align*} 
This map is  clearly unital, trace-preserving, and positive.

Set $x  = \frac{a_{11}  +  a_{22}}2$,   $y  = \frac{a_{12}  -  a_{21}}2$,  $z =  a_{33}$,  and  $r  =  \max  \{   \sqrt{x^2  +  y^2}, |z| \}$.
Compute
\begin{align*}
 \|   A  \|   &\ge  \left \| \begin{pmatrix} a_{11}  &  a_{12} & 0   \\  a_{21}  &  a_{22} & 0  \\  0  &  0  &  a_{33}  \end{pmatrix} \right\|  \\&
 \ge \max \Big\{ \left\| \frac12 \begin{pmatrix}  a_{11}  & a_{12} \\  a_{21}  &   a_{22}  \end{pmatrix}
 + \frac12 \begin{pmatrix}  a_{22}  & -a_{21} \\  -a_{12}  &   a_{11}  \end{pmatrix} \right\|, \ |z| \Big\} \\&
 = \max \Big\{ \left\| \begin{pmatrix}  x  & y \\  -y  &   x \end{pmatrix} \right\|, \ |z| \Big\}
=  \max  \left\{  \sqrt{ x^2   +  y^2 }  ,  |z| \right\}    = r   .
\end{align*} 
The three columns of $\rho_t(A)$ are orthogonal. Hence
\begin{align*}
 \|  \rho_t  (A)\|  &= \left \| \begin{pmatrix} \frac{x+z}2  &  \frac{yt}{\sqrt2} & 0   \\  \frac{-yt}{\sqrt2}  &  \frac{x+z}2  & 0  \\  0  &  0  &  x  \end{pmatrix} \right\| 
 \le  \max  \left\{  \sqrt{ \left(\frac{   x  +  z}2 \right)^2 +  \frac{ t^2\,  y^2}  2 }  ,  |x| \right\} \\&
 \le  \max  \left\{  \sqrt{ \left(\frac{   |x|  +  |z|}2 \right)^2 +  \frac{ y^2}  2 }  ,  r \right\} .
\end{align*}
Let $s = \frac{|x|}r$ and $c = \frac{|y|}r$ and note that $s^2+c^2 \le 1$.
Hence
\begin{align*}
 \Big(\frac{ |x| + |z|}2 \Big)^2 +  \frac{ y^2}  2  &\le \frac{r^2}4 \big( (s+1)^2 + 2c^2 \big) 
 \le \frac{r^2}4 \big( s^2 + 2s + 1+ 2 - 2s^2 \big) \\&
 = \frac{r^2}4 \big( 4 - (1-s)^2 \big) \le r^2.
\end{align*} 
Therefore 
\[
 \|  \rho_t  (A)\| \le r \le \|A\|.
\]
Thus $\rho_t$ has norm 1 for every $t\in  [0,1]$.  
  
On the other hand,  $\wt{ \rho_t}$  is not  positive for every $t  > 1/\sqrt 2$ because
\[
  \wt {\rho_t}   \begin{pmatrix}      1    &    i    &  0   \\   -i    &  1  &  0  \\   0& 0&  0      \end{pmatrix}    
 = \frac 12 \begin{pmatrix}      1    &    i t \sqrt 2    &  0   \\   -i  t\sqrt 2    &  1  &  0  \\   0& 0&  0      \end{pmatrix}    .
\]  
We   also observe that  $\wt{ \rho_t}$ is positive for $t \le \frac1{\sqrt2}$. Indeed, if $A = \big[ a_{ij} \big] \ge 0$, 
\begin{align*} 
  \wt{ \rho_t}(A) = \wt{ \rho_t} \begin{pmatrix}  a_{11}  & a_{12}  & 0 \\ \ol{a_{12}}  & a_{22} & 0 \\  0  &  0  &  a_{33}  \end{pmatrix}   
 &=    \begin{pmatrix}  \frac{  a_{11 } +  a_{22}  +  2  a_{33}  }4  &  \frac{ i  {\im}   (a_{12} )   t}{ \sqrt 2}    &  0  \\            
 \frac{ -i  {\im}   (a_{12} ) t}{ \sqrt 2}     &    \frac{  a_{11 }+  a_{22}  +  2  a_{33}  }4   &  0   \\  0  &  0  & \frac{  a_{11 }+  a_{22} }2  \end{pmatrix} .
\end{align*} 
Since
\[
 \Big|  \frac{ i \, {\im}   (a_{12} )   \,  t}{ \sqrt 2}  \Big|^2 \le \frac{|a_{12}|^2}4 \le \frac{a_{11}a_{22} }4 \le \Big( \frac{a_{11} + a_{22}}4 \Big)^2 ,
\]
we obtain $\wt{ \rho_t}(A)  \ge 0$.
 
This example extends  to any dimension $n  >3$  by defining  
\[
 \sigma_{n,t}  \begin{pmatrix}  A  &   B   \\   C   &    D   \end{pmatrix}   :  =     \begin{pmatrix}  \rho_t(A)   &   0  \\  0  &  D \end{pmatrix}
\]
for $A \in M_3 (\R)$,  $B  \in  M_{2,n-3} (\R)$,  $C\in  M_{n-3,2} (\R)$ and $D  \in  M_{n-3,n-3}  (\R)$.  
The map $\sigma_{n,t}$ for $t \in(\frac1{\sqrt2},1]$ and $n>3$ is an example of a positive, unital, trace-preserving, norm-1 map on $M_n$ that has no positive complexification. 
\end{example}

\section{Witnesses of real entanglement}\label{sec:witnesses}

We now show that the existence of positive real maps with non-positive complexification is equivalent to the existence of quantum states that are entangled in real Hilbert space quantum mechanics, and yet are separable when regarded as states on  complex Hilbert spaces.     The equivalence is established by proving a duality between the cone of real positive maps and the cone of separable states in real Hilbert space quantum mechanics.

For $\K  \in  \{\R, \C\}$,  the cone of separable matrices is   

\[
 \SEP(\K^n\otimes \K^m)=
 \Big\{   \sum_{i} A_i\otimes B_i  :  A_i\in \PSD(\K^n), \,  B_i\in \PSD(\K^m) \Big\} .
\]
The elements of $\SEP(\K^n\otimes \K^m)$ will be called {\em $\K$-separable}, and the elements of   
$\PSD (\K^n\otimes \K^m) \setminus   \SEP(\K^n\otimes \K^m)$ will be called {\em $\K$-entangled}.  Note that the separable bipartite elements over various fields (real, complex and quaternions) have been considered before (see \cite{Hildebrand}). We take a closer look at the real case and consider various level of separability in terms of Schmidt rank.
So more generally, we consider the cone of positive matrices  with Schmidt rank at most $p$ \cite{terhal2000schmidt}.
We call these matrices  {\em $\K$-$p$-separable}  and denote them by 
\[
 \SEP_p(\K^n\otimes \K^m) =
 \Big\{ \sum_{j}  \Vect(A_j)  \Vect(A_j)^*  :  A_j  \in  M_{n,m}  (\K) \, , \rank ( A_j) \le p \Big\} .
\]

Equivalently, $\SEP_p(\K^n \otimes \K^m)$ is the cone generated by the set of all positive matrices of the form $uu^*$ for $u \in \K^n \otimes \K^m$ that can be expressed as a sum of $p$ or fewer elementary tensors. 

Note that $ \SEP_1(\K^n\otimes \K^m)  =   \SEP(\K^n\otimes \K^m)$. 
The elements of   $\PSD (\K^n\otimes \K^m) \setminus   \SEP_p(\K^n\otimes \K^m)$ will be called {\em $\K$-$p$-entangled}. 
 Note also that for $p \ge min \{n,m \}$ we have that $\SEP_p(\K^n \otimes \K^m) = \PSD(\K^n \otimes \K^m)$.

We also define the cone consisting  of all real positive  matrices that are   $\C$-$p$-separable, denoted as
\begin{align*}
 \CSEP_p(\R^n \otimes \R^m)   &:= \SEP_p(\C^n \otimes \C^m) \cap \PSD(\R^n \otimes \R^m)   \\ 
   &  = \Big\{   \sum_{j}    \Vect(A_j)  \Vect(A_j)^* +  \Vect(\overline A_j)  \Vect(\overline A_j)^*   : \\[-2.0ex] 
     &  \qquad   \qquad  A_j  \in  M_{n,m}  (\C)  \AND \rank ( A_j) \le p\,  \FORAL  j    \Big\} .
\end{align*}
For $p=1$, we use the notation  $ \CSEP(\R^n \otimes \R^m) :  =  \CSEP_1(\R^n \otimes \R^m) $. 

We now provide an equivalent characterization of $ \CSEP_p(\R^n \otimes \R^m)$, which shows that every $\C$-$p$-separable real matrix is also $\R$-$2p$-separable:

\begin{prop}\label{prop:csepp}   One has   
\begin{align*}
 \CSEP_p(\R^n \otimes \R^m)     &  = \Big\{   \sum_{j}    \Vect(X_j)  \Vect(X_j)^t +  \Vect(Y_j)  \Vect(Y_j)^t   : \\[-2ex]
     &  \qquad \qquad     X_j, Y_j    \in  M_{n,m}  (\R)    , \,  \rank ( X_j+  i  Y_j  ) \le p\, , \forall j   \Big \}   
\end{align*}
and $\CSEP_p(\R^n \otimes \R^m) \subset \SEP_{2p}(\R^n \otimes \R^m)$. 
\end{prop}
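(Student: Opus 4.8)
The plan is to prove the two assertions in turn, in both cases passing between a complex matrix $A_j = X_j + i Y_j \in M_{n,m}(\C)$ of rank at most $p$ and the pair of real matrices $(X_j, Y_j)$. The single computational input is the identity obtained by expanding the rank-one terms $\Vect(A_j)\Vect(A_j)^*$ and $\Vect(\ol{A_j})\Vect(\ol{A_j})^*$ into their real and imaginary parts; everything else is bookkeeping with ranks.

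For the asserted equality, I would start from the description of $\CSEP_p(\R^n \otimes \R^m)$ recorded just above the statement, namely as the cone of sums $\sum_j \Vect(A_j)\Vect(A_j)^* + \Vect(\ol{A_j})\Vect(\ol{A_j})^*$ with $\rank(A_j) \le p$. Writing $A_j = X_j + iY_j$ with $X_j, Y_j \in M_{n,m}(\R)$ and using the linearity of $\Vect$, set $x_j = \Vect(X_j)$ and $y_j = \Vect(Y_j)$, so that $\Vect(A_j) = x_j + i y_j$ and $\Vect(\ol{A_j}) = x_j - i y_j$. Expanding each rank-one term produces $x_j x_j^t + y_j y_j^t$ together with an imaginary antisymmetric correction $\pm\, i(y_j x_j^t - x_j y_j^t)$, and the two corrections are negatives of one another; hence the sum of the two terms equals $2\big(\Vect(X_j)\Vect(X_j)^t + \Vect(Y_j)\Vect(Y_j)^t\big)$. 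Since $\rank(X_j + iY_j) = \rank(A_j) \le p$ and the rank is unchanged under the rescaling $A_j \mapsto A_j/\sqrt{2}$ that absorbs the factor $2$, this gives both inclusions at once: every generator of $\CSEP_p$ has the claimed real form, and conversely every real positive sum $\sum_j \Vect(X_j)\Vect(X_j)^t + \Vect(Y_j)\Vect(Y_j)^t$ with $\rank(X_j + iY_j) \le p$ is $\C$-$p$-separable of the required type, hence lies in $\SEP_p(\C^n \otimes \C^m) \cap \PSD(\R^n \otimes \R^m)$.

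For the inclusion $\CSEP_p(\R^n \otimes \R^m) \subset \SEP_{2p}(\R^n \otimes \R^m)$, I would argue term by term on the real description just obtained. Each generator is a sum of two real rank-one matrices $\Vect(X_j)\Vect(X_j)^t$ and $\Vect(Y_j)\Vect(Y_j)^t$, and for real matrices $\Vect(B)\Vect(B)^t = \Vect(B)\Vect(B)^*$, so it suffices to bound the Schmidt ranks, i.e.\ the matrix ranks of $X_j$ and $Y_j$. Since $X_j = \tfrac12(A_j + \ol{A_j})$ and $Y_j = \tfrac1{2i}(A_j - \ol{A_j})$ with $\rank(\ol{A_j}) = \rank(A_j) \le p$, subadditivity of rank gives $\rank(X_j), \rank(Y_j) \le 2\rank(A_j) \le 2p$ --- the very estimate already used in the proof of Corollary~\ref{cor:2p}. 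Hence each term is a generator of $\SEP_{2p}(\R^n \otimes \R^m)$, and so is the whole sum.

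The computation is elementary, so there is no serious obstacle; the only points needing care are the cancellation of the antisymmetric imaginary parts in the first identity and the consistent use of the fact that the Schmidt rank of $\Vect(A)$ equals the matrix rank of $A$, which is what turns the bound $\rank(X_j), \rank(Y_j) \le 2p$ into membership in $\SEP_{2p}(\R^n \otimes \R^m)$.
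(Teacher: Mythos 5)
Your proposal is correct and follows essentially the same route as the paper: decompose $A_j = X_j + iY_j$, verify that $\Vect(A_j)\Vect(A_j)^* + \Vect(\ol{A_j})\Vect(\ol{A_j})^*$ reduces to the real sum $\Vect(X_j)\Vect(X_j)^t + \Vect(Y_j)\Vect(Y_j)^t$ (up to the harmless factor of $2$, which you rightly absorb by rescaling since these are cones), and then use $\rank(X_j),\rank(Y_j)\le 2\rank(A_j)\le 2p$ for the inclusion into $\SEP_{2p}$. Your write-up is in fact slightly more careful than the paper's, which states the key identity without the factor of $2$.
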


\begin{proof}
The characterization follows from decomposing  $A_j$ as $A_j  =  X_j   +  i  Y_j$,  with $X_j,  Y_j  \in  M_{m,n}  (\R)$,  and from the relation  
\[
 \Vect(A_j)  \Vect(A_j)^* +  \Vect(\overline A_j)  \Vect(\overline A_j)^* =   \Vect(X_j)  \Vect(X_j)^t +  \Vect(Y_j)  \Vect(Y_j)^t   .
\]  
The inclusion equality follows from the first and from the fact that 
$\rank  (A_j) \le p$ implies  $\rank  (X)\le 2p$ and $\rank(Y) \le 2p$.  
\end{proof}

In the next section we will provide an example showing that the inclusion $\CSEP_p(\R^n \otimes \R^m) \subset \SEP_{2p}(\R^n \otimes \R^m)$ is strict.

\begin{rem}
Note that the cone  $\SEP_p(\K^n \otimes \K^m)$ is closed.
To see this it is enough to show that the convex set consisting of the elements in these cones of trace 1 is a closed convex set. 
But the elements in $\SEP_p(\K^n \otimes \K^m)$ of trace one is the convex hull of the set of  the matrices of the form $A =  V V^*$ where  $V$ is a unit vector with Schmidt rank $\le p$. 
The set of such matrices is compact  and, in finite dimensions, 
the convex hull of a compact set is again a compact set by Caratheodory's Theorem. 
The cone $\CSEP_p(\R^n \otimes \R^m)$ is also closed since it is the intersection of two closed sets.
\end{rem}

The complex version of part (1) of the next theorem was proven in \cite{skowronek2009cones}.

\begin{thm}\label{dualcones} 
Let $\Phi : M_n(\R) \to M_m(\R)$ be a  map that commutes with the adjoint and let $C_{\Phi}$ be its Choi matrix. Then,
\begin{enumerate}
\item[(i)] $\Phi$ is $p$-positive if and only if
\[
 \Tr (C_{\Phi} \,  P) \ge 0 \qforal P \in \SEP_p(\K^n \otimes \K^m),
\]

\item[(ii)] The complexification $\wt{\Phi}:M_n(\C) \to M_m(\C)$ is  $p$-positive if and only if
\[
 \Tr (C_{\Phi} \,P) \ge 0 \qforal P \in \CSEP_p(\R^n \otimes \R^m).
\]
\end{enumerate}
\end{thm}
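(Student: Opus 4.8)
The plan is to derive both equivalences directly from the Choi-matrix characterizations already in hand, the only real work being to rewrite the scalar $\langle V \mid C_\Phi V\rangle$ as a trace pairing and to match the generators of the two separability cones against the test vectors appearing in those characterizations. Throughout I would use the elementary identity
\[
 \Tr(C_\Phi\, VV^*) = \langle V \mid C_\Phi V\rangle ,
\]
valid because the right-hand side is a scalar, hence equal to its own trace, after which cyclicity of the trace moves $V^*$ to the right.

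For part (i), I would first note that since $\Phi$ commutes with the adjoint, Lemma~\ref{lemma-pos}(i) gives that $C_\Phi$ is Hermitian; as $\Phi$ is a real map, $C_\Phi$ is in fact real symmetric, so the pairing is with the transpose. Recall from its definition that $\SEP_p(\R^n\otimes\R^m)$ is the cone generated by the rank-one matrices $VV^*$ with $V$ of Schmidt rank at most $p$. Hence, by linearity of the trace, the condition $\Tr(C_\Phi P)\ge0$ for all $P\in\SEP_p(\R^n\otimes\R^m)$ holds if and only if $\langle V \mid C_\Phi V\rangle\ge0$ for every such $V$. By Lemma~\ref{lemma-pos}(ii), this together with the Hermiticity of $C_\Phi$ is exactly $p$-positivity of $\Phi$, proving (i).

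For part (ii), I would invoke Theorem~\ref{theo:ChoiCondition}: the complexification $\wt\Phi$ is $p$-positive if and only if $C_\Phi$ is Hermitian and
\[
 \langle \Vect(X) \mid C_\Phi \Vect(X)\rangle + \langle \Vect(Y) \mid C_\Phi \Vect(Y)\rangle \ge 0
\]
for every pair $X,Y\in M_{n,m}(\R)$ with $\rank(X+iY)\le p$. Applying the trace identity above to each summand (now with real vectors and real symmetric $C_\Phi$), the left-hand side equals $\Tr\!\big(C_\Phi\,(\Vect(X)\Vect(X)^t+\Vect(Y)\Vect(Y)^t)\big)$. By Proposition~\ref{prop:csepp}, the matrices $\Vect(X)\Vect(X)^t+\Vect(Y)\Vect(Y)^t$ with $\rank(X+iY)\le p$ are precisely the generators of $\CSEP_p(\R^n\otimes\R^m)$. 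Thus, again by linearity of the trace, the displayed condition over all such pairs is equivalent to $\Tr(C_\Phi P)\ge0$ for all $P\in\CSEP_p(\R^n\otimes\R^m)$, giving (ii).

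There is no serious obstacle: both directions of each equivalence are formal once the cone-generator descriptions and the trace-pairing identity are in place, and the substantive content has already been absorbed into Lemma~\ref{lemma-pos}, Theorem~\ref{theo:ChoiCondition}, and Proposition~\ref{prop:csepp}. The one point requiring a word of care is the passage from nonnegativity on all generators to nonnegativity on the whole cone: this is immediate from linearity of the trace, since every element of either cone is a finite sum of generators, but it should be stated explicitly so that the ``only if'' directions (which test $C_\Phi$ against an arbitrary $P$ in the cone) are cleanly separated from the ``if'' directions (which need only the generators $VV^*$, respectively $\Vect(X)\Vect(X)^t+\Vect(Y)\Vect(Y)^t$).
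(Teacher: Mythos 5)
Your proposal is correct and follows essentially the same route as the paper's own proof: both parts reduce to the generator descriptions of $\SEP_p$ and $\CSEP_p$, the identity $\Tr(C_\Phi VV^*)=\langle V\mid C_\Phi V\rangle$, and then Lemma~\ref{lemma-pos}(ii) for (i) and Theorem~\ref{theo:ChoiCondition} for (ii). Your explicit remarks on the Hermiticity of $C_\Phi$ coming from Lemma~\ref{lemma-pos}(i) and on passing from generators to the full cone by linearity are points the paper leaves implicit, but they do not change the argument.
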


\begin{proof} 
The set $\SEP_p(\K^n \otimes \K^m)$ is the convex hull of its rank-one  elements, of the form $ \Vect (A) \Vect  (A)^*$ with $\rank  (A) \le p$.  
The condition   $\Tr (C_{\Phi}P) \ge 0 \qforal P\in \SEP_p(\K^n \otimes \K^m)$ is equivalent to  
\[
 0  \le    \Tr ( C_\Phi  \,   \Vect (A) \Vect  (A)^*  )  = \<  \Vect (A)|  C_\Phi   \Vect  (A)\>
\]
for all matrices with rank at most $p$.  
In turn, this condition is equivalent to $\<  V  |  C_\Phi  V\>  \ge0$ for every vector $V$ with Schmidt number at most $p$.   
By Lemma \ref{lemma-pos}, this condition is equivalent to $p$-positivity of $\Phi$.  This proves (i).      

We now  prove (ii).   
By Proposition \ref{prop:csepp},    the condition   $\Tr (C_{\Phi}  P) \ge 0$ for all  $P \in \CSEP_p(\R^n \otimes \R^m)$ is equivalent to  
\begin{align*}
 0    &\le    \Tr   ( C_\Phi  \,     (  \Vect (X) \Vect  (X)^t  +  \Vect (Y) \Vect  (Y)^t   )  )    \\
   &   = \<  \Vect (X)|  C_\Phi   \Vect  (X)\>  +   \<  \Vect (Y)|  C_\Phi   \Vect  (Y)\> \, ,
\end{align*} 
for all real matrices  $X,Y$  such that $\rank (X+  i Y)\le p$.  
By Theorem \ref{theo:ChoiCondition}, this condition is equivalent to $p$-positivity of $\widetilde \Phi$. 
\end{proof}

\begin{cor}\label{cor:duality} 
Let $P\in \PSD(\K^n \otimes \K^m)$. Then:
\begin{enumerate}
\item[(i)] $P$ is $\K$-$p$-separable if and only if $\Tr (C_\Phi P)\geq 0$, for all $p$-positive $\Phi:M_n(\K)\rightarrow M_m(\K),$
\item[(ii)] for $\K  = \R$,  $P$ is $\C$-$p$-separable if and only if $\Tr (C_{\Phi}P) \geq 0$ for all $\Phi: M_n(\R) \to M_m(\R)$ such that $\wt{\Phi}$ is $p$-positive.
\end{enumerate}
\end{cor}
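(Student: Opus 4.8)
The plan is to recognize Corollary \ref{cor:duality} as the bipolar dual of Theorem \ref{dualcones}, obtained by applying the bipolar theorem for closed convex cones in a finite-dimensional real inner product space. First I would fix the real vector space $V = \Herm(\K^n \otimes \K^m)$ equipped with the pairing $\langle X, Y\rangle = \Tr(XY)$, which is real-valued (for Hermitian $X,Y$ one has $\overline{\Tr(XY)} = \Tr(Y^*X^*) = \Tr(XY)$) and nondegenerate on $V$. By Lemma \ref{lemma-pos}(i), the Choi correspondence $\Phi \mapsto C_\Phi$ is a linear bijection from the space of maps $M_n(\K) \to M_m(\K)$ that commute with the adjoint onto $V$; since every $p$-positive map commutes with the adjoint, ranging over all such $\Phi$ amounts to ranging over a subset of $V$ cut out by a positivity condition.

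For part (i) I would introduce the dual cone $\SEP_p^* = \{ H \in V : \Tr(HP) \ge 0 \text{ for all } P \in \SEP_p(\K^n\otimes\K^m) \}$. Theorem \ref{dualcones}(i), together with the Choi bijection, says precisely that $\{ C_\Phi : \Phi \text{ is } p\text{-positive} \} = \SEP_p^*$. Consequently, the condition that $\Tr (C_\Phi P) \ge 0$ for every $p$-positive $\Phi$ is exactly the statement that $P$ lies in the double dual $(\SEP_p^*)^* = \SEP_p^{**}$. The bipolar theorem gives $\SEP_p^{**} = \SEP_p(\K^n\otimes\K^m)$, using that $\SEP_p$ is a closed convex cone (convexity is immediate from its definition as a cone generated by sums, and closedness was recorded in the Remark preceding the theorem). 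The forward implication is merely the trivial inclusion $\SEP_p \subseteq \SEP_p^{**}$, which is in any case immediate from Theorem \ref{dualcones}(i); the reverse implication is the nontrivial inclusion $\SEP_p^{**} \subseteq \SEP_p$ supplied by closedness.

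For part (ii) I would run the identical argument with $\CSEP_p(\R^n \otimes \R^m)$ in place of $\SEP_p$ and $V = \Herm(\R^n \otimes \R^m)$. Here Theorem \ref{dualcones}(ii) identifies $\{ C_\Phi : \wt{\Phi} \text{ is } p\text{-positive} \}$ with the dual cone $\CSEP_p^*$, so the hypothesis places $P$ in $\CSEP_p^{**}$. Since $\CSEP_p(\R^n\otimes\R^m)$ is a closed convex cone (closed as the intersection of two closed cones, as noted in the Remark), the bipolar theorem yields $\CSEP_p^{**} = \CSEP_p(\R^n\otimes\R^m)$, which is the claim.

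The only genuine content beyond bookkeeping is the appeal to the bipolar theorem, so the step I expect to need the most care is verifying its hypotheses: that $\SEP_p$ and $\CSEP_p$ are closed convex cones (closedness is the crucial point, already established in the Remark) and that the Choi map surjects onto all of $\Herm$, so that quantifying over all maps $\Phi$ of the relevant positivity type is genuinely the same as quantifying over the entire dual cone. Granting these, both equivalences follow at once.
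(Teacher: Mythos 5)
Your proposal is correct and is essentially the paper's own argument: the paper's proof simply says to use the closedness of the cones together with the standard Hahn--Banach separation of a point from a closed convex cone, which is exactly the bipolar theorem you invoke, with Theorem~\ref{dualcones} identifying the set of Choi matrices of the relevant maps with the dual cone. Your additional care about working inside $\Herm(\K^n\otimes\K^m)$ and about the surjectivity of the Choi correspondence onto Hermitian matrices (via Lemma~\ref{lemma-pos}(i)) is sound and just makes explicit what the paper leaves implicit.
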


\begin{proof} 
The proof uses the fact that these cones are closed and apply the standard Hahn-Banach results for separating a point from a closed cone.
\end{proof}

This corollary implies that, if a real positive matrix $P$  is $\R$-$p$-entangled, then there must exist
a $p$-positive map $\Phi$ such that $\Tr (C_\Phi P)<0$.
We say that $\Phi$  is a \emph{witness of  $\R$-$p$-entanglement} for $P$.
Similarly, if a real positive matrix $P$ is $\C$-$p$-entangled, then there must exist a positive map $\Psi$ with positive complexification  such that $\Tr (C_\Psi P)<0$. 
We say that $\Psi$ is  a {\em witness of $\C$-$p$-entanglement} for  $P$.
In the literature, the term entanglement witness was typically used for $p=1$, while  witnesses for $p>1$ were sometimes called Schmidt number witnesses  \cites{terhal2000bell,sanpera2001schmidt}.

Before proceeding to the next result we need a lemma for which we refer to \cite{CMW}. The assertion holds for any field.
\begin{lem}\label{schmidt-dec}
Every vector $V\in  \K^n\otimes \K^m$  with Schmidt rank $\le p$ can be written as $$V =   (I_n  \otimes S) W=(T\otimes I_m)Z ,$$ where $W$ is a vector in $  \K^n\otimes \K^p$, $Z$ is a vector in $\K^p\otimes \K^m$ with  $S:  \K^p  \to  \K^m$  and $T:  \K^p  \to  \K^n$ are partial isometries.
\end{lem}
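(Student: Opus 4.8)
The plan is to reduce the statement to the singular value decomposition, equivalently the Schmidt decomposition of $V$. First I would pass from $V$ to its associated matrix $A = \Vect^{-1}(V) \in M_{n,m}(\K)$, under which the Schmidt rank of $V$ equals $\rank(A)$, so the hypothesis becomes $\rank(A) \le p$. Applying the singular value decomposition over $\K$, I can write
\[
 V = \sum_{k=1}^{p} \sigma_k \, u_k \otimes v_k,
\]
where $\sigma_k \ge 0$, the family $\{u_k\}_{k=1}^p \subseteq \K^n$ is orthonormal, and $\{v_k\}_{k=1}^p \subseteq \K^m$ is orthonormal (padding with zero singular values and arbitrary orthonormal completions if $\rank(A) < p$, so that exactly $p$ terms appear).

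Next I would build the two partial isometries explicitly from the Schmidt bases. Let $\{e_k\}_{k=1}^p$ denote the standard orthonormal basis of $\K^p$, and define linear maps $T : \K^p \to \K^n$ and $S : \K^p \to \K^m$ by $T e_k = u_k$ and $S e_k = v_k$. Since each of $\{u_k\}$ and $\{v_k\}$ is orthonormal, $T$ and $S$ are isometries on all of $\K^p$, hence partial isometries with initial space $\K^p$ (indeed $T^*T = S^*S = I_p$, which is a projection).

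Then I would set
\[
 Z = \sum_{k=1}^p \sigma_k \, e_k \otimes v_k \in \K^p \otimes \K^m
 \qand
 W = \sum_{k=1}^p \sigma_k \, u_k \otimes e_k \in \K^n \otimes \K^p,
\]
and verify directly, using $T e_k = u_k$ and $S e_k = v_k$, that
\[
 (T \otimes I_m) Z = \sum_{k=1}^p \sigma_k \, u_k \otimes v_k = V
 \qand
 (I_n \otimes S) W = \sum_{k=1}^p \sigma_k \, u_k \otimes v_k = V,
\]
which establishes both decompositions simultaneously.

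The only point that requires care is the existence of the Schmidt (singular value) decomposition itself: over $\R$ and $\C$ this is the classical SVD, so no genuine obstacle arises, and the construction of $T$, $S$, $W$, $Z$ is then a one-line computation. I do not expect any substantive difficulty here; the content is entirely in recognizing that the desired factorizations are just a repackaging of the SVD, with the partial isometries carrying the left and right singular vectors and the ``core'' vector ($W$ or $Z$) retaining the singular values.
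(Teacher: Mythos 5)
Your proof is correct and is the standard argument; the paper itself does not prove this lemma but instead cites \cite{CMW}, and the intended proof there is exactly this repackaging of the Schmidt (singular value) decomposition, which is valid over both $\R$ and $\C$. The only cosmetic point is the padding step: if $p>\min\{n,m\}$ you cannot complete to $p$ orthonormal vectors in $\K^n$ or $\K^m$, so for the extra indices you should instead set $Te_k=0$ (resp.\ $Se_k=0$), which is precisely why the statement asks for partial isometries rather than isometries; with that one-line adjustment the argument covers all cases.
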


The fact that $p$-positive maps are witnesses of $p$-entanglement is equivalently expressed by the following corollary,  the complex version of which was established  in \cite{Horodeckis} for  $p=1$ and in \cite{terhal2000schmidt} for general $p$.

\begin{cor}\label{horodeckis}
For a matrix $P\in \PSD(\K^n \otimes \K^m)$, the following are equivalent 
\begin{enumerate}
\item[(i)] $P$ is $\K$-$p$-separable,
\item[(ii)] $\Phi \otimes \id _m(P) \geq 0$ for every  $r\in  \N$ and for every $p$-positive map $\Phi: M_n(\K) \to M_r(\K)$,
\item[(iii)] $\Phi \otimes \id _m(P) \geq 0$ for every $p$-positive map $\Phi: M_n(\K) \to M_m(\K)$.
\item[(iv)] $\id_n \otimes \Phi(P)\geq 0$ for every $p$-positive map $\Phi:M_m(\K)\rightarrow M_l(\K)$
\end{enumerate}
\end{cor}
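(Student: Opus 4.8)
The plan is to establish the cyclic chain (i) $\Rightarrow$ (ii) $\Rightarrow$ (iii) $\Rightarrow$ (i), which gives the equivalence of the first three conditions, and then to deduce (i) $\Leftrightarrow$ (iv) from the symmetry of $p$-separability under exchange of the two tensor factors. For (i) $\Rightarrow$ (ii) I would write a $\K$-$p$-separable $P$ as $P = \sum_k \Vect(A_k)\Vect(A_k)^*$ with $\rank(A_k)\le p$, so that it suffices to treat a single rank-one generator $uu^*$, where $u=\Vect(A)$ has Schmidt rank $\le p$. By Lemma~\ref{schmidt-dec} I can factor $u=(I_n\otimes S)W$ with $W\in\K^n\otimes\K^p$ and $S:\K^p\to\K^m$ a partial isometry. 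Since $\Phi\otimes\id_m$ acts as the identity on the second tensor factor, the partial isometry can be pulled outside the map, giving
\[
(\Phi\otimes\id_m)(uu^*) = (I_r\otimes S)\,(\Phi\otimes\id_p)(WW^*)\,(I_r\otimes S^*).
\]
The inner term is positive because $\Phi$ is $p$-positive and $WW^*\ge 0$ lives in $M_n(\K)\otimes M_p(\K)$ (here the point of the factorization is that it reduces $\id_m$ to $\id_p$, so that $p$-positivity applies); conjugation preserves positivity, and summing over $k$ yields $(\Phi\otimes\id_m)(P)\ge 0$. This argument is uniform in the codomain dimension $r$, which is exactly what (ii) requires, and (ii) $\Rightarrow$ (iii) is the trivial specialization $r=m$.

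The crux is (iii) $\Rightarrow$ (i), which I would obtain by duality. Let $\Omega=\sum_{a=1}^m e^{(m)}_a\otimes e^{(m)}_a$. Writing $P=\sum_{i,j}E_{ij}\otimes P_{ij}$ with $P_{ij}\in M_m(\K)$, a short computation gives
\[
\langle \Omega |\,(\Phi\otimes\id_m)(P)\,|\Omega\rangle = \sum_{i,j}\Tr\big(\Phi(E_{ij})\,P_{ij}^{\,t}\big),
\]
whereas $\Tr(C_\Phi P)=\sum_{i,j}\Tr\big(\Phi(E_{ij})\,P_{ji}\big)$. For a real symmetric $P$ one has $P_{ij}^{\,t}=P_{ji}$, so the two scalars coincide; then (iii) forces $\Tr(C_\Phi P)=\langle\Omega|(\Phi\otimes\id_m)(P)|\Omega\rangle\ge 0$ for every $p$-positive $\Phi:M_n(\R)\to M_m(\R)$, and Corollary~\ref{cor:duality}(i) concludes that $P$ is $\R$-$p$-separable.

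Finally, (i) $\Leftrightarrow$ (iv) would follow by applying the already-proven equivalence (i) $\Leftrightarrow$ (ii) to the flipped matrix $P'=\mathsf F P\mathsf F^*$, where $\mathsf F:\K^n\otimes\K^m\to\K^m\otimes\K^n$ is the swap. Conjugation by $\mathsf F$ preserves positivity and maps Schmidt rank $\le p$ to Schmidt rank $\le p$, so $P$ is $\K$-$p$-separable iff $P'$ is; moreover, for $p$-positive $\Phi:M_m(\K)\to M_l(\K)$ one has that $(\Phi\otimes\id_n)(P')\ge 0$ is equivalent to $(\id_n\otimes\Phi)(P)\ge 0$. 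Feeding this through the general-codomain form of (ii) gives (iv) with $l$ arbitrary.

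The step I expect to demand the most care is the scalar identity in (iii) $\Rightarrow$ (i) over the complex field, where Hermiticity yields $P_{ij}^{\,t}=\overline{P_{ji}}$ rather than $P_{ij}^{\,t}=P_{ji}$. There I would replace $\Phi$ by its conjugate $\bar\Phi(X):=\overline{\Phi(\bar X)}$, which is again $p$-positive, and observe that $\langle\Omega|(\Phi\otimes\id_m)(P)|\Omega\rangle$ is real and equals $\Tr(C_{\bar\Phi}P)$; letting $\Phi$ range over all $p$-positive maps then recovers $\Tr(C_\Psi P)\ge 0$ for every $p$-positive $\Psi$, so the appeal to Corollary~\ref{cor:duality}(i) goes through uniformly in $\K$. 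The remaining verifications (the pull-out identity, the block computations, and the intertwining of $\Phi\otimes\id_n$ with $\mathsf F$) are routine bookkeeping.
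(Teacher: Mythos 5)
Your proposal is correct, and its skeleton coincides with the paper's: Lemma~\ref{schmidt-dec} with the partial isometry pulled through $\Phi\otimes\id$ for (i)$\Rightarrow$(ii), the specialization $r=m$ for (ii)$\Rightarrow$(iii), and the duality of Corollary~\ref{cor:duality} for (iii)$\Rightarrow$(i). You diverge from the paper in two places, both in the bookkeeping rather than the ideas. For (iii)$\Rightarrow$(i) the paper likewise pairs $(\Phi\otimes\id_m)(P)$ against the maximally entangled state $\sum_{i,j}E_{ij}\otimes E_{ij}=\Omega\Omega^*$, but then rewrites the resulting scalar as $\Tr(C_{\Phi^*}\,S^*PS)$ via the adjoint map and the swap $S$, using Lemma~\ref{lemma-pos}(iii) to let $\Phi^*$ range over all $p$-positive maps; you instead keep $\Phi$ and absorb the transpose into the symmetry of $P$ in the real case, or into the conjugate map $\bar\Phi$ in the complex case. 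Both are valid; your version avoids the adjoint/swap detour at the price of a case split over $\K$, and your fix with $\bar\Phi$ (which is indeed $p$-positive and ranges over all $p$-positive maps as $\Phi$ does) is sound. For (i)$\Leftrightarrow$(iv) the paper reruns the (i)$\Leftrightarrow$(iii) argument with the second factorization $V=(T\otimes I_m)Z$ from Lemma~\ref{schmidt-dec}, whereas you conjugate by the flip and quote the equivalences already established; this is cleaner and equally rigorous, provided one observes, as you implicitly do, that (iv) with $l=n$ supplies the specific-codomain condition needed to descend back to (i) for the flipped matrix.
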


\begin{proof}

(i)  $\Rightarrow$ (ii).  If $P$ is $\K$-$p$-separable, it can be written as $P  =  \sum_i   V_i  V_i^*$, 
where each $V_i$ is a vector in $\K^n\otimes \K^m$ with Schmidt rank at most $p$.    
Using Lemma \ref{schmidt-dec} we know that every vector $V_i \in  \K^n\otimes \K^m$  with Schmidt rank $\le p$ can be written 
as $V_i  =   (I_n  \otimes S_i ) W_i$, where $W_i$ is a vector in $  \K^n\otimes \K^p$ and  $S_i:  \K^p  \to  \K^m$ is a partial isometry.
Then, for every map $\Phi: M_n(\K) \to M_r(\K)$, we have 
\[
 \Phi \otimes \id _m(P) =  \sum_i   \,   (I_r\otimes   S_i)        ~(\Phi \otimes \id _p(   W_i W_i^*  )) ~  (I_r\otimes   S_i^*)\ge 0 .   
\]  
If $\Phi$ is $p$-positive, then each term in this sum is positive; so $\Phi \otimes \id _m(P) \ge 0$.       

(ii) $ \Rightarrow$ (iii).  Take $ r  = m$.  

(iii) $ \Rightarrow$ (i).  Given a map $\Phi$, the condition   $\Phi \otimes \id _m(P) \geq 0$ implies   
\begin{align*}
 0  &\le     \sum_{i,j}   \Tr  ( ( E_{i,j}  \otimes E_{i,j})    \Phi \otimes \id_m  (P) )   
 =     \sum_{i,j}   \Tr  (( \Phi^*( E_{i,j})  \otimes E_{i,j}) \, P ) \\
 & =     \Tr  ( S C_{\Phi^*}  S^* P )   =  \Tr  ( C_{\Phi^*}  S^* P  S) , 
\end{align*} 
where $S : \K^m \otimes \K^n \to \K^m \otimes \K^n$ is the linear operator defined by $S( w\otimes v) =  v\otimes w$ for all  $v\in  \K^n$ for all $w\in  \K^m$.   
If $\Phi:  M_n  (\K) \to  M_m(\K)$ is an arbitrary $p$-positive map, then $\Phi^*:  M_m  (\K) \to  M_n(\K)$ is an arbitrary $p$-positive map by Lemma \ref{lemma-pos}. 
For $\K  = \R$,  Corollary \ref{cor:duality} implies that $S^*PS$, and therefore $P$, is $\R$-$p$-separable. 

The equivalence of (i) and (iv) follows exactly like the equivalence of $(1)$ and $(3)$ using the second equality of Lemma \ref{schmidt-dec}.
\end{proof}

\begin{cor}
For every $p\in  \N$, the following are equivalent
\begin{enumerate}
\item[(i)]  there exists a $p$-positive map $\Phi:  M_n  (\R) \to M_m (\R)$ with non-$p$-positive complexification,
\item[(ii)] there exists an $\C$-$p$-separable matrix  $P   \in  \PSD (  \R^n\otimes \R^m)$ that is not $\R$-$p$-separable.    
\end{enumerate}
\end{cor}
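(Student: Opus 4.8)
The plan is to recognize this equivalence as the separating-hyperplane duality between two nested cone inclusions, and to read it off directly from the witness characterizations of Theorem~\ref{dualcones} and Corollary~\ref{cor:duality}. The starting point is the observation that $\SEP_p(\R^n \otimes \R^m) \subseteq \CSEP_p(\R^n \otimes \R^m)$: any $\R$-$p$-separable matrix $\sum_j \Vect(A_j)\Vect(A_j)^t$ with real $A_j$ is automatically $\C$-$p$-separable (viewing the $A_j$ as complex) and real positive. By Theorem~\ref{dualcones}, $p$-positivity of $\Phi$ is dual to $\SEP_p(\R^n \otimes \R^m)$ (part (i)) while $p$-positivity of $\wt\Phi$ is dual to the larger cone $\CSEP_p(\R^n \otimes \R^m)$ (part (ii)); the inclusion of separability cones is thus mirrored by the reverse inclusion on the level of maps, and the corollary asserts that the two inclusions are strict simultaneously.

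For (i) $\Rightarrow$ (ii), I would begin with a $p$-positive $\Phi$ whose complexification is not $p$-positive. Applying the contrapositive of Theorem~\ref{dualcones}(ii), the failure of $p$-positivity of $\wt\Phi$ yields a matrix $P \in \CSEP_p(\R^n \otimes \R^m)$ with $\Tr(C_\Phi P) < 0$. Since $\Phi$ itself is $p$-positive, Theorem~\ref{dualcones}(i) guarantees $\Tr(C_\Phi Q) \geq 0$ for every $Q \in \SEP_p(\R^n \otimes \R^m)$, so the witnessing $P$ cannot be $\R$-$p$-separable. This $P$ is then $\C$-$p$-separable but not $\R$-$p$-separable, which is exactly (ii).

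For (ii) $\Rightarrow$ (i), I would reverse the roles. Given a $\C$-$p$-separable $P$ that is not $\R$-$p$-separable, the fact that $P \notin \SEP_p(\R^n \otimes \R^m)$ together with the closedness of this cone (recorded in the Remark after Proposition~\ref{prop:csepp}) lets me invoke Corollary~\ref{cor:duality}(i) to produce a $p$-positive map $\Phi$ with $\Tr(C_\Phi P) < 0$. Because $P \in \CSEP_p(\R^n \otimes \R^m)$, this same negative pairing forces $\wt\Phi$ to fail $p$-positivity by Theorem~\ref{dualcones}(ii). Hence $\Phi$ is a $p$-positive map with non-$p$-positive complexification, establishing (i).

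I do not anticipate a serious obstacle: the statement is essentially a repackaging of the Hahn--Banach separation already encoded in the earlier duality results, and every ingredient---the two witness characterizations and the closedness needed to separate a point from a cone---is in hand. The only points requiring care are the logical bookkeeping (in each direction the witnessing object must lie in the larger cone while pairing negatively against the dual of the smaller cone) and the implicit use of the Choi correspondence between maps commuting with the adjoint and symmetric matrices, which ensures that both dualities are taken in the same inner-product space $\Sym_{nm}(\R)$.
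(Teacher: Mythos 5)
Your argument is correct and is exactly the route the paper intends: the paper's proof is the single line ``Immediate from Corollary~\ref{cor:duality},'' and your two directions simply unpack that duality (pairing a witness in the larger cone negatively against the dual of the smaller cone in each direction). No gaps; you have just written out the details the authors left implicit.
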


\begin{proof}
Immediate from Corollary \ref{cor:duality}. 
\end{proof}

\begin{example}
 By \cite{choi:poslin} the map $\Phi: M_n(\R) \to M_n(\R)$ given by 
 \[
  \Phi(X) = (n-1) \Tr(X) I_n - X
 \]
is $(n-1)$-positive but not $n$-positive. 
Its complexification  $\tilde{\Phi}$ is the map on $M_n(\C)$ given by the same formula, and it is also $(n-1)$-positive but not $n$-positive. 
By the above dualities, we see that
\[
 \CSEP_{(n-1)}(\R^n \otimes \R^n) \subsetneq \PSD(\R^n \otimes \R^n) = \CSEP_n(\R^n \otimes \R^n).
\]
\end{example}

\section{Examples of $\R$-entangled states that are $\C$-separable}\label{sec:examples2}  

In this section, we use the distinctions between the various cones of positive maps to provide explicit examples. 
The first example is an $\R$-$(2p-1)$-entangled but $\C$-$p$-separable matrix.
The remaining examples focus on the $p=1$ case, providing examples of quantum states that are entangled in 
real Hilbert space quantum mechanics, but separable in the complex version. 
The key observation is that real separable states must be invariant under partial transposition (IPT)  \cite{caves2001entanglement}.

\begin{example}\label{ex:pentangled}
Here we provide an example of a matrix $P  \in   \PSD ( \R^{2p} \otimes \R^{2p})$  that is $\R$-$(2p-1)$-entangled, but $\C$-$p$-separable.  The matrix is 
\[
 P  =    \frac{V V^*   +   \overline V  V^t}2 , 
\] 
where  $V  \in   \C^{2p}  \otimes  \C^{2p} $ is the complex vector    $V   =   \sum_{j=1}^p   \alpha_j \otimes \alpha_j/\sqrt p$, 
with $\alpha_j   =  (  e_j^{(2p)}  +i  \, e_{j+p}^{(2p)})/\sqrt 2$.    
It is evident from the definition  that $P$ is $\C$-$p$-separable.    On the other hand,  $P$ can be equivalently rewritten as  
\[
 P  =  \frac{ \Vect(  O_+)   \Vect(  O_+)^t  +   \Vect(O_-)   \Vect(  O_-)^t}{4p}  ,
\]
with $O_+   = \begin{spmatrix}   0  &  I_p  \\  I_p  &  0  \end{spmatrix}$ and $O_-   = \begin{spmatrix}    I_p  &  0  \\   0&   -I_p   \end{spmatrix} $.  
If $P$ were $\R$-$(2p-1)$-separable, there should exist at least one real vector with Schmidt rank $\le 2p-1$  in the linear span of $\Vect(O_+)$ and $\Vect (O_-)$.  
But this is not possible, because the real span of $O_+$ and $O_-$ contains only full-rank matrices. 
\end{example}

The following fact has been observed in \cite{caves2001entanglement}. 
\begin{lemma}\label{RSEP is IPT}
Every  $P  \in   \SEP  (\R^n\otimes \R^m)$ must satisfy the condition $P   =     \tau_n \otimes \id_m  (P)  =  \id_n \otimes \tau_m  (P)$, where  $\tau_r:   M_r  (\R)  \to  M_r  (\R)  ,  \,    A \mapsto  A^t $ is the transpose map on $M_r  (\R)$.    
\end{lemma}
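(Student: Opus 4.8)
The plan is to work directly from the definition of the separable cone. By definition, any $P \in \SEP(\R^n \otimes \R^m)$ admits a finite decomposition $P = \sum_i A_i \otimes B_i$ with $A_i \in \PSD(\R^n)$ and $B_i \in \PSD(\R^m)$. The crux of the argument is the observation that, over the real field, positivity forces symmetry: by the definition of positivity in Section~\ref{sec:defpos}, each $A_i$ has the form $A_i = R_i^t R_i$ for some $R_i \in M_n(\R)$, whence $A_i^t = (R_i^t R_i)^t = R_i^t R_i = A_i$, so $A_i \in \Sym_n(\R)$; the identical reasoning gives $B_i \in \Sym_m(\R)$. This is exactly the point where the real case differs from the complex one, in which positive matrices are merely Hermitian rather than symmetric.

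With this in hand, I would finish by linearity, computing term by term. Since $\tau_n \otimes \id_m$ acts factor-wise on elementary tensors,
\[
 \tau_n \otimes \id_m(P) = \sum_i A_i^t \otimes B_i = \sum_i A_i \otimes B_i = P,
\]
and symmetrically $\id_n \otimes \tau_m(P) = \sum_i A_i \otimes B_i^t = \sum_i A_i \otimes B_i = P$, which yields both asserted identities simultaneously.

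I do not anticipate any genuine obstacle: the linearity of the partial transpose reduces the claim to a single tensor factor, and the only substantive input is the elementary fact $\PSD(\R^n) \subseteq \Sym_n(\R)$. In particular, the argument holds for \emph{any} separable decomposition of $P$, so there is no need to invoke uniqueness of the decomposition. The one thing to state carefully is this symmetry of real positive matrices, since it is precisely the feature that makes the invariance under partial transposition automatic for $\R$-separable states and motivates the subsequent use of the IPT property.
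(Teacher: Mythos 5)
Your proof is correct: the key point, that every matrix in $\PSD(\R^n)$ is automatically symmetric because $A = R^tR$ implies $A^t = A$, is exactly what makes the lemma immediate, and the term-by-term computation with the partial transpose finishes it. The paper itself gives no proof, simply attributing the observation to \cite{caves2001entanglement}, and your argument is precisely the standard one that reference (and the paper) has in mind.
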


Invariance under partial transpose is a necessary condition for separability of real quantum states, similarly to the positive partial transpose (PPT)  criterion for separability in  the complex case \cites{peres1996separability,Horodeckis,horodecki1997separability}.  Based on this analogy, we call the invariance under partial transpose the IPT criterion for separability of real quantum states.  Note that, like the PPT criterion, the IPT criterion is generally not sufficient for separability, neither on the reals nor on the complexes.  Counterexamples can be  constructed in a standard way, starting from examples of unextendible product bases (UPBs), that is, sets of orthonormal product vectors in $\C^n\otimes \C^m$ with the property that their orthogonal complement contains no product vector \cite{bennett1999unextendible}.  For a UPB, the projector on the orthogonal complement is  entangled (on the complex domain) and has  positive partial transpose. If  the UPB  consists of vectors with real entries, then the projector on the complement is entangled   and invariant under partial transpose. Hence, it provides an example of an IPT matrix that is neither in $\SEP  (\R^n\otimes \R^m)$ nor in $\CSEP  (\R^n\otimes \R^m)$.   Explicit examples of UPBs with real entries can be found in  \cite{bennett1999unextendible}.


\begin{rem} \label{P:strict inclusion}
The above criteria gives us an easy way to see that for $n,m \ge 2$ the inclusions
\[
 \SEP(\R^n \otimes \R^m) \subsetneq \CSEP(\R^n \otimes \R^m),
\]
are strict.

For $n=m=2,$ consider the matrix 
$A=\begin{pmatrix}
0 & -i\\
i & 0
\end{pmatrix}$
and consider the Hermitian operator
\[
 P = I_2\otimes I_2+ A\otimes A
 = \left(   \begin{array}{cc|cc}  1 & 0 & 0 & -1\\  0 & 1 & 1 & 0\\ \hline 0 & 1 & 1 & 0\\ -1 & 0 & 0 &1  \end{array} \right)
  \in \PSD(\R^2\otimes \R^2)
\]
Since $I_2 \pm A \ge 0$ and
\[
 2P = (I_2 +A)\otimes (I_2+A) + (I_2-A) \otimes (I_2-A) ,
\]
$P$ belongs to $\CSEP(\R^2 \otimes \R^2)$. 
However, $\id \otimes T(P)\neq P$ and hence $P\not \in  \SEP(\R^2 \otimes \R^2)$.
 For the general case, one can  form matrices $A_n= A \oplus 0_{n-2} \in M_n(\C)$ and 
$A_m= A \oplus 0_{m-2} \in M_m(\C)$ and observe that the same argument shows that
$I_n \otimes I_m + A_n \otimes A_m$ is $\CSEP(\R^n \otimes \R^m)$ but is not $\R$-separable. 
\end{rem}

Other examples of $\C$-separable maps that are not equal to their partial transpose (and therefore are $\R$-entangled) are provided in the following.

\begin{example}
Consider the real positive matrix 
\begin{align*}
 P_s &=  \frac{s}{n(n+1)}(I_n\otimes I_n+W) + \frac{1-s}{n(n-1)}(I_n\otimes I_n-W) \\
 &= \frac{n\!+\!1 - 2s}{n^3-n} \Big( I_n\otimes I_n + \big(2ns - (n\!+\!1)\big) W \Big) \qfor  s\in  [0,1] \, , 
\end{align*}
where $W=\sum_{i,j}E_{i,j}\otimes E_{j,i}\in M_n(\C)\otimes M_n(\C)$. 
It is known \cite{werner-89} (see also chapter 6 of \cite{watrous}) that  $P_s$ is $\C$-separable for $s  \in  [\frac{1}{2},  1]$.  
On the other hand,  $P_s$ is not equal to its partial transpose unless $s=  \frac{n+1}{2n}$, 
and therefore $P_s$  is  $\R$-entangled for $s$ in $\big[ \frac 12, 1 \big] \setminus \big\{ \frac{n+1}{2n} \big\}$. \\
\end{example}

The above example can be generalized as follows:

\begin{prop}
For every  symmetric matrix $A\in   M_n(\R)\otimes M_m(R)$ with $\id \otimes T(A)\neq A$, there exists an  $s_* >  0$ such that  the matrix
\begin{align*}
 P_s =(1-s) I_n \otimes I_m + s\,  A
\end{align*}
is $\R$-entangled and $\C$-separable for every $s\in  (0,s_*]$.  
\end{prop}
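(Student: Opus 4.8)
The plan is to treat the two assertions separately: $P_s$ is $\R$-entangled for every $s\ne 0$, while $P_s$ is $\C$-separable only for sufficiently small $s>0$. The $\R$-entanglement is the easy half and follows immediately from the IPT criterion of Lemma~\ref{RSEP is IPT}. Since the transpose fixes the identity, $(\id_n\otimes\tau_m)(I_n\otimes I_m)=I_n\otimes I_m$, and therefore
\[
 (\id_n\otimes\tau_m)(P_s)-P_s = s\big((\id_n\otimes\tau_m)(A)-A\big).
\]
By hypothesis $(\id\otimes T)(A)=(\id_n\otimes\tau_m)(A)\ne A$, so this difference is nonzero for every $s\ne0$; hence $P_s$ is not invariant under partial transposition, and Lemma~\ref{RSEP is IPT} gives $P_s\notin\SEP(\R^n\otimes\R^m)$. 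Granting that $P_s$ is positive (which follows below from its $\C$-separability, since separable matrices are positive), this is exactly the statement that $P_s$ is $\R$-entangled.

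The content of the proposition lies in the $\C$-separability. The key input I would use is that the identity $I_n\otimes I_m$ lies in the \emph{interior} of the cone $\SEP(\C^n\otimes\C^m)$, taken inside the real vector space $\Herm(\C^{nm})$; equivalently, the maximally mixed state is surrounded by a ball of separable states (Gurvits--Barnum). Granting this, note that $A$, being symmetric and real, is Hermitian, and rewrite
\[
 P_s=(1-s)\,I_n\otimes I_m+sA = I_n\otimes I_m + s\big(A-I_n\otimes I_m\big),
\]
which is a ray issuing from the interior point $I_n\otimes I_m$ in the Hermitian direction $A-I_n\otimes I_m$. This direction is nonzero, since $A=I_n\otimes I_m$ would force $(\id\otimes T)(A)=A$, contrary to hypothesis. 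Choosing $r>0$ with the norm-ball $B(I_n\otimes I_m,r)\subseteq\SEP(\C^n\otimes\C^m)$ and setting $s_*=\min\{1,\,r/\|A-I_n\otimes I_m\|\}$, we obtain $P_s\in\SEP(\C^n\otimes\C^m)$ for all $s\in(0,s_*]$. As each $P_s$ is real, this membership says precisely that $P_s\in\CSEP(\R^n\otimes\R^m)$, i.e.\ $P_s$ is $\C$-separable, and in particular positive, which closes the gap left in the first step.

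The one nontrivial point, and the main obstacle, is the interior claim for $I_n\otimes I_m$. If one prefers not to cite the separable-ball result, it can be established self-containedly by a symmetry (twirling) argument. The cone $\SEP(\C^n\otimes\C^m)$ is full-dimensional, because $\Herm(\C^n)\otimes\Herm(\C^m)=\Herm(\C^{nm})$, so it contains some interior point $\sigma$. Conjugation by $U\otimes V$ with $U,V$ unitary is a linear homeomorphism preserving $\SEP(\C^n\otimes\C^m)$, so it carries $\sigma$ to interior points, and, by compactness of the unitary groups, these share a common separability radius $r>0$. Averaging over $U$ and $V$ against Haar measure sends $\sigma$ to $\tfrac{\Tr\sigma}{nm}\,I_n\otimes I_m$ by Schur's lemma, and a convexity argument shows that $B\big(\tfrac{\Tr\sigma}{nm}I_n\otimes I_m,\,r\big)\subseteq\SEP(\C^n\otimes\C^m)$, so this positive multiple of $I_n\otimes I_m$, and hence $I_n\otimes I_m$ itself, is interior to the cone. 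The only genuine bookkeeping is this uniform-radius estimate; every remaining step is routine.
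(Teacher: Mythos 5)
Your proof is correct and follows essentially the same route as the paper: the $\R$-entanglement comes from the failure of invariance under partial transposition (Lemma~\ref{RSEP is IPT}), and the $\C$-separability comes from the Gurvits--Barnum separable ball around $I_n\otimes I_m$, which is exactly the interior statement you invoke. The only difference is cosmetic: the paper cites Gurvits--Barnum with an explicit Frobenius-norm radius (yielding a concrete $s_*$), whereas you phrase it as an interior-point/ray argument and optionally re-derive the ball by twirling; both are fine.
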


\begin{proof}   
A  result of Gurvits and Barnum \cite{gur-bar}  states that, for every Hermitian matrix $H\in \Herm(\C^n\otimes \C^m)$ with  $\|H\|_2\leq 1$, 
the matrix   $I_n\otimes I_m +H$ is $\C$-separable.   
If $ A$ is a real symmetric matrix, then this result implies that the matrix $I_n\otimes I_m + \frac{s}{1-s} A$ is $\C$-separable whenever $s/(1-s)  \le \|  A\|_2$.
Therefore the matrix $P_s$ is $\C$-separable whenever $s \le  s_*:  =\|A \|_2/(\|  A\|_2  + 1)$.     
On the other hand,  $\id \otimes T(P_s)\neq P_s$ for every $s>0$, and therefore $P_s$ is $\R$-entangled for every $s>0$. 
\end{proof}

\section{Entanglement $p$-breaking maps}\label{sec:breaking}

In \cite{HSR},  Horodecki, Shor, and Ruskai  introduced the notion of {\em entanglement breaking} maps, 
that is, maps that transform every entangled state into a separable state when acting on one of the components.   
In general, we say that a  map $\Phi: M_n(\K)\rightarrow M_m(\K)$ is  \emph{$\K$-entanglement $p$-breaking} if 
\begin{align*} 
  \Phi\otimes \id _r(P) \in \SEP_p(\K^m\otimes \K^r) \qforal P \in \PSD(\K^n\otimes \K^r) \AND  r \ge1.
\end{align*}
For $p=1$, we just call the map $\Phi$ $\K$-entanglement breaking.   
In the complex case, entanglement $p$-breaking maps had been considered in \cite{skowronek2009cones}, where they were called $p$-superpositive maps.      

The cone of $\K$-entanglement breaking ($\K$-entanglement $p$-breaking) maps $\Phi: M_n(\K)\rightarrow M_m(\K)$  will be denoted  as  $\K\text{-}\EB (n,m)$  ($\K\text{-}\EB_p  (m,n)$).  We now provide a characterization of the $\K$-entanglement $p$-breaking channels. 
Note that since the cone $\SEP_p(\K^m\otimes \K^r)$ is larger for larger $p$, meaning that $\K$-entanglement $p$-breaking is a weaker condition than $\K$-entanglement breaking when $p\ge2$.

In the following result, the complex versions of (1)-(3) and  (4) were provided for $p=1$ in \cite{HSR} and \cite{JKPP}, respectively. 
The complex version of (5) and (6) was provided in \cite{skowronek2009cones} for general $p$.

\begin{theo}\label{Rentanglementbreaking characterization} 
Let $\Phi: M_n(\K) \to M_m(\K)$ be a linear map. Then the following are equivalent.
\begin{enumerate}
 \item[(i)] $\Phi$ is $\K$-entanglement $p$-breaking,
 \item[(ii)] $C_\Phi \in \SEP_p(\K^n \otimes \K^m)$,
 \item[(iii)] there exist matrices $(C_i)_{i=1}^k  \subset  M_{m,n}(\K)$ such that  $\rank  (C_i) \le p$ for every $i$ and
 \[
  \Phi(X) = \sum_{i=1}^k C_i X  C_i^*, 
 \]
 \item[(iv)] $\Phi = \Delta \circ \Gamma$ where $\Gamma: M_n(\K) \to l^\infty_k(\K)  \otimes M_p  (\K)$ and $\Delta : l^\infty_k(\K) \otimes M_p  (\K)\to M_m(\K)$ are completely positive maps for some $k\ge1$.
 \item[(v)] for every $r$ and every $p$-positive map $\Psi: M_m(\K) \to M_r(\K)$ the map $\Psi \circ \Phi$ is completely positive,
 \item[(vi)] for every $r$ and every $p$-positive map $\Psi: M_r(\K) \to M_n(\K)$ the map $\Phi \circ \Psi$ is completely positive.
\end{enumerate}
\end{theo}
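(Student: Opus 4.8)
The plan is to dispatch the three ``structural'' conditions (ii)--(iv) first, then link them to the operational condition (i), and finally extract the composition conditions (v)--(vi) from the duality between the separable cones and the cones of $p$-positive maps established above. For the equivalence of (ii) and (iii), the key computation is that a single Kraus term $\Phi(X)=CXC^*$ with $C\in M_{m,n}(\R)$ has Choi matrix $C_\Phi=\Vect(C^t)\Vect(C^t)^*$; this follows from \eqref{choiinverse} together with the standard identity $(A\otimes B)\Vect(M)=\Vect(AMB^t)$, and the Schmidt rank of $\Vect(C^t)$ equals $\rank(C^t)=\rank(C)$. Summing over Kraus terms, $\Phi(X)=\sum_i C_iXC_i^*$ has $C_\Phi=\sum_i\Vect(C_i^t)\Vect(C_i^t)^*$, and matching this against the defining expression for $\SEP_p(\K^n\otimes\K^m)$ turns the constraint $\rank(C_i)\le p$ into the Schmidt-rank-$\le p$ constraint, giving (iii)$\Leftrightarrow$(ii). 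For (iii)$\Leftrightarrow$(iv) I would use a rank factorization $C_i=W_iV_i$ with $V_i\in M_{p,n}(\K)$ and $W_i\in M_{m,p}(\K)$, available precisely because $\rank(C_i)\le p$: packaging the $V_i$ into one completely positive map $\Gamma(X)=\bigoplus_i V_iXV_i^*$ into $l^\infty_k(\K)\otimes M_p(\K)\cong\bigoplus_i M_p(\K)$ and the $W_i$ into $\Delta(\bigoplus_i Y_i)=\sum_i W_iY_iW_i^*$ realizes $\Phi=\Delta\circ\Gamma$, while conversely any such factorization yields Kraus operators $W_iV_i$ of rank $\le p$.

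To close the loop with (i), for (iii)$\Rightarrow$(i) I would write $P=\sum_k v_kv_k^*$, identify $v_k=\Vect(N_k)$ with $N_k\in M_{n,r}(\K)$, and use $(C_i\otimes I_r)\Vect(N_k)=\Vect(C_iN_k)$ with $\rank(C_iN_k)\le\rank(C_i)\le p$, so that $\Phi\otimes\id_r(P)$ is a sum of rank-one positive matrices of Schmidt rank $\le p$, i.e.\ lies in $\SEP_p(\K^m\otimes\K^r)$. For (i)$\Rightarrow$(ii) I would take $r=n$ and $P=\Vect(I_n)\Vect(I_n)^*$; then $\Phi\otimes\id_n(P)=S\,C_\Phi\,S^*$ is the coordinate swap of $C_\Phi$, and since $S$ preserves positivity, separability and Schmidt rank, membership in $\SEP_p$ transfers back to $C_\Phi$. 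This makes (i),(ii),(iii),(iv) equivalent.

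For the composition conditions, (iii)$\Rightarrow$(v) reduces to a single rank-$\le p$ Kraus term: for $A\in M_{m,n}(\K)$ with $\rank(A)\le p$ and $\Psi$ $p$-positive, the Choi matrix of $X\mapsto\Psi(AXA^*)$ equals $(\id_n\otimes\Psi)(\Vect(A^t)\Vect(A^t)^*)$; since $\Vect(A^t)\Vect(A^t)^*$ is rank-one of Schmidt rank $\le p$, hence $\K$-$p$-separable, Corollary \ref{horodeckis}(iv) forces this Choi matrix to be positive semidefinite, so the map is completely positive, and summing over Kraus terms gives that $\Psi\circ\Phi$ is completely positive. For the converse (v)$\Rightarrow$(ii): given any $p$-positive $\Xi:M_n(\K)\to M_m(\K)$, the adjoint $\Xi^*$ is $p$-positive by Lemma \ref{lemma-pos}(iii), so $\Xi^*\circ\Phi$ is completely positive by (v) and hence $C_{\Xi^*\circ\Phi}\ge0$; the identity $\Tr(C_\Xi C_\Phi)=\langle\Vect(I_n)\,|\,C_{\Xi^*\circ\Phi}\,\Vect(I_n)\rangle$, obtained from the bookkeeping $\Tr(C_\Xi C_\Phi)=\sum_{ij}\Tr(\Xi(E_{ji})\Phi(E_{ij}))=\sum_{ij}(\Xi^*\Phi(E_{ij}))_{ij}$, then yields $\Tr(C_\Xi C_\Phi)\ge0$ for every $p$-positive $\Xi$, whence $C_\Phi\in\SEP_p(\K^n\otimes\K^m)$ by Corollary \ref{cor:duality}(i). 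Finally (vi)$\Leftrightarrow$(i) follows by passing to adjoints: \eqref{choiadjoint} shows $\Phi$ is $\K$-entanglement $p$-breaking iff $\Phi^*$ is, while $\Phi\circ\Psi$ is completely positive iff $\Psi^*\circ\Phi^*$ is, so (vi) for $\Phi$ is exactly (v) for $\Phi^*$.

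The routine part is (ii)--(iv), which is pure linear algebra. The delicate step is the pair (v),(vi): in (iii)$\Rightarrow$(v) one must obtain genuine complete positivity rather than mere positivity, which is why reading the Choi matrix of $X\mapsto\Psi(AXA^*)$ through Corollary \ref{horodeckis}(iv) is essential; and in (v)$\Rightarrow$(ii) the crux is the trace identity converting the composition statement into the cone-duality pairing, where over $\R$ one must keep careful track of the fact that the adjoint is the transpose and invoke Lemma \ref{lemma-pos}(iii) to preserve $p$-positivity under adjunction.
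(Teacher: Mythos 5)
Your proposal is correct and follows essentially the same route as the paper: the Choi-matrix correspondence for (i)--(iv), a rank factorization through $\K^p$ for (iv), Corollary \ref{horodeckis} together with the cone duality for (v), and the adjoint trick via Lemma \ref{lemma-pos}(iii) and \eqref{choiadjoint} for (vi). The only differences are cosmetic reorganizations — you prove (iii)$\Rightarrow$(v) by a direct term-by-term Choi computation instead of routing through the factorization (iv), and close (v)$\Rightarrow$(ii) via the trace pairing of Corollary \ref{cor:duality} (where you should note that taking $\Psi=\id_m$ in (v) gives $C_\Phi\ge 0$, as that corollary presupposes $P\in\PSD$) rather than invoking Corollary \ref{horodeckis} directly.
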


\begin{proof}
(i) $\Rightarrow$ (ii): Applying $\id_n \otimes \Phi$ to $X=  \sum_{i,j}  E_{ij}  \otimes E_{ij}$, we obtain  that   
$C_\Phi = (\id_n \otimes \Phi) \big( \sum_{ij} E_{ij}\otimes E_{ij} \big)$ belongs to  $ \SEP_p(\K^n \otimes \K^m)$. 

(ii) $\Rightarrow$ (iii): If  $C_\Phi  \in  \SEP_p(\K^n \otimes \K^m)$, then it can be written as 
$C_\Phi   =  \sum_{i=1}^k     \Vect(A_i)  \Vect( A_i)^*$, with $\rank (A_i)  \le p$ for every $i$.  
Using Eq. (\ref{choiinverse}), we obtain 
\begin{align*} 
  \Tr  [\Phi  (    X )  \, Y]    &= \Tr  (   C_\Phi   (X^t \otimes  Y))     \\&
  =    \sum_{i=1}^k    \<  \Vect(A_i) ,   (X^t\otimes Y)  \Vect( A_i)\>  \\&
  =     \sum_{i=1}^k   \Tr  (A_i^t    X \overline A_i   Y)\, , 
\end{align*} 
for arbitrary matrices $X$ and $Y$.  Since  $Y$ is arbitrary, we conclude $\Phi  (X)   =  \sum_{i=1}^k  C_i  X  C_i^*$ with $C_i  : =  A_i^t$. 

(iii) $\Rightarrow$ (i):     If   $\Phi  (X)   =  \sum_{i=1}^k  C_i  X  C_i^*$ with $\rank  (C_i) \le p$, then for every $A  \in   M_{n,m}(\K)$, one has    
\[
 (\Phi \otimes \id_n)   (\Vect (A)  \Vect(A)^*) =  \sum_i  \Vect  (C_i  A)   \Vect ( C_i  A)^* .   
\]
Since $\rank  (C_i  A)  \le \rank  (C_i) \le p$, we deduce that  $(\Phi \otimes \id_n)   (\Vect (A)  \Vect(A)^*)$   is $\K$-$p$-separable.        
Since $A$ is arbitrary,  we conclude that  $(\Phi \otimes \id_n)   V V^* $   is $\K$-$p$-separable for every vector $V$,  
and therefore $ (\Phi \otimes \id_n) (X)$ is   $\K$-$p$-separable for every positive $X$.   
This proves the equivalence of (1), (2), and (3).

(iii)$ \Rightarrow$ (iv):   Since $\rank  (C_i)\le p$, there exists an isometry  $S_i:   \ran  (C_i) \to   \K^p$.
 Let $\delta_1,\dots,\delta_k$ be the standard basis for $l^\infty_k(\K)$. Define
\[
 \Gamma(X) = \sum_{i=1}^k  \delta_i  \otimes  S_i C_i X  C_i^*  S_i^*  \qand \Delta\big( \sum_{i=1}^k   \delta_i \otimes Y_i\big) = \sum_{i=1}^k   S_i^* Y_i  S_i .
\]
The form of $\Gamma$ and $\Delta$ guarantees  that they are completely positive. By construction, $\Phi = \Delta \circ \Gamma$.

(iv) $\Rightarrow$ (v):  The action of  $\Delta$  on a generic element of $l^\infty_k  (\K)\otimes M_p$ can be expressed as 
$\Delta  ( \sum_i  \delta_i \otimes  Y_i )  =  \sum_i       \Delta_i   (Y_i)$, with $\Delta_i:  M_p   (\K) \to M_m(\K)$ completely positive.   
Since the domain of  $\Delta_i$   is $M_p  (\K)$,  $\Delta_i$ is $\K$-entanglement $p$-breaking.  
Hence $\Delta$ is also $\K$-entanglement $p$-breaking. 
Using the implication  $(1) \Rightarrow (2)$, we conclude that each $C_{\Delta_i}$ is $\K$-$p$-separable.   
For an arbitrary $p$-positive map  $\Psi :  M_m(\K) \to M_r  (\K)$,  $(\Psi\circ \Delta_i)$ is completely positive.    
Indeed,    $C_{  \Psi\circ\Delta_i}    =  (\id_p  \otimes \Psi)   C_{\Delta_i}$, and Corollary  \ref{horodeckis} implies that   $(\id_p  \otimes \Psi)   C_{\Delta_i}$ is positive whenever $C_{\Delta_i}$ is $\K$-$p$-separable.   
We conclude that  $\Psi\circ\Phi$ is completely positive, as it factors as $\Psi\circ\Phi = (\Psi\circ \Delta)\circ\Gamma$, which is the composition of two completely positive maps.     

(v) $\Rightarrow$ (vi):   If  $\Psi \circ \Phi$ is completely positive, then  $C_{\Psi\circ \Phi}    =   ( \id_m\otimes \Psi)  (C_\Phi)\ge 0$.    
If $\Psi$ is an arbitrary $p$-positive map, this implies that $C_\Phi$ is $\K$-$p$-separable, by Corollary \ref{horodeckis}.   
Hence, also $C_{\Phi^*}$ is $\K$-$p$-separable. 
Using the implication  $(2)\Rightarrow (1)$, we obtain that $\Phi^*$ is $\K$-entanglement $p$-breaking.  
Using the implication  $(1)\Rightarrow (5)$,  we conclude that  $   \Psi  \circ \Phi^*$ is completely positive for every $p$-positive $\Psi$.  
Now,  a generic map $\Gamma$  is completely positive if and only if its adjoint  $\Gamma^*$ is completely positive  (by Lemma \ref{lemma-pos}). 
Hence, the map $\Phi  \circ \Psi^*$ is completely positive for every $p$-positive $\Psi$.  
But for an arbitrary $p$-positive map $\Psi$, $\Psi^*$ is an arbitrary $p$-positive map (by Lemma \ref{lemma-pos})).  
Hence,  $\Phi  \circ \Psi$ is completely positive for every $p$-positive $\Psi$.

(vi) $\Rightarrow$ (ii):   For every map $\Psi$, the map $\Phi \circ \Psi$  is completely positive if and only if its adjoint $\Psi^*  \circ \Phi^*$ is completely positive.   
If $\Psi$ is an arbitrary $p$-positive map, then $\Psi^*$ is an arbitrary $p$-positive map (by Lemma \ref{lemma-pos}).
Therefore  $C_{\Phi^*}$ is $\K$-$p$-separable, by Corollary \ref{horodeckis}.   
Recalling Eq. (\ref{choiadjoint}),  we conclude that $C_\Phi$ is $\K$-$p$-separable.
\end{proof}

 We state the analogous theorem for $\CSEP$ without proof.

\begin{theo}\label{CSEP-breaking}
Let $\Phi: M_n(\R) \to M_m(\R)$ be a linear map. Then the following are equivalent:
\begin{enumerate}
\item[(i)] $\Phi \otimes id_r(P) \in \CSEP_p(\R^n \otimes \R^r)$, for every $r$ and every $P \in \PSD(\R^n \otimes \R^r)$,
 \item[(ii)] $\tilde{\Phi}$ is $\C$-entanglement $p$-breaking,
 \item[(iii)] $C_\Phi \in \CSEP_p(\K^n \otimes \K^m)$,
 \item[(iv)] there exist matrices $(C_i)_{i=1}^k  \subset  M_{m,n}(\C)$ such that  $\rank  (C_i) \le p$ for every $i$ and
 \[
  \Phi(X) = \sum_{i=1}^k C_i X  C_i^*, 
 \]
 for every $X \in M_n(\R)$,
 \item[(v)] there exists  $k \ge 1$, a real C*-subalgebra $\mathcal{C} \subseteq \ell^{\infty}_k(\C) \otimes M_p(\C)$ and a factorization, $\Phi = \Delta \circ \Gamma$ where $\Gamma: M_n(\R) \to \mathcal{C}$ and $\Delta : \mathcal{C} \to M_m(\R)$ are  real linear completely positive maps,
 \item[(vi)] for every $r$ and every $p$-positive map $\Psi: M_m(\C) \to M_r(\C)$ the map $\Psi \circ \tilde{\Phi}$ is completely positive,
 \item[(vii)] for every $r$ and every $p$-positive map $\Psi: M_r(\C) \to M_n(\C)$ the map $\tilde{\Phi} \circ \Psi$ is completely positive.
\end{enumerate}
\end{theo}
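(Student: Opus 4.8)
The plan is to prove Theorem~\ref{CSEP-breaking} by reducing it to Theorem~\ref{Rentanglementbreaking characterization} wherever possible, exploiting the systematic dictionary between real maps with $\C$-separable Choi matrices and complex entanglement $p$-breaking maps established by the complexification. The central organizing observation is that condition (iii), $C_\Phi \in \CSEP_p(\R^n \otimes \R^m)$, is a purely Choi-level statement, and the defining formula $\CSEP_p(\R^n \otimes \R^m) = \SEP_p(\C^n \otimes \C^m) \cap \PSD(\R^n \otimes \R^m)$ together with the identity $C_{\widetilde\Phi} = C_\Phi$ immediately links (iii) to (ii): $\widetilde\Phi$ is $\C$-entanglement $p$-breaking iff $C_{\widetilde\Phi} = C_\Phi \in \SEP_p(\C^n \otimes \C^m)$ by the complex case of Theorem~\ref{Rentanglementbreaking characterization}(ii), and since $C_\Phi$ is automatically a real matrix this lands in $\CSEP_p$. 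So the equivalence (ii) $\Leftrightarrow$ (iii) is essentially free.

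Next I would establish (i) $\Leftrightarrow$ (iii). The forward direction mimics (i) $\Rightarrow$ (ii) of the previous theorem: applying $\id_n \otimes \Phi$ to the maximally entangled vector $\sum_{ij} E_{ij} \otimes E_{ij}$ produces $C_\Phi$, and by hypothesis (i) with $r=n$ this lands in $\CSEP_p(\R^n \otimes \R^n)$; one must check the indices and that this genuinely yields the $\CSEP_p$ membership. For the converse, the factorization argument of (iii) $\Rightarrow$ (i) can be transported through complexification: one writes $C_\Phi$ in the $\CSEP_p$ form $\sum_j \Vect(A_j)\Vect(A_j)^* + \Vect(\overline{A_j})\Vect(\overline{A_j})^*$ with $\rank(A_j) \le p$, which then gives the Kraus-type representation (iv) with complex matrices $C_i = A_i^t$; applying $\Phi \otimes \id_r$ to a rank-one real positive matrix $\Vect(B)\Vect(B)^*$ yields $\sum_i \Vect(C_i B)\Vect(C_i B)^* + \text{c.c.}$, and since $\rank(C_i B) \le p$ each conjugate pair sits in $\CSEP_p$. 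The equivalence (iii) $\Leftrightarrow$ (iv) is then the real-matrix analogue of (ii) $\Leftrightarrow$ (iii) above, using Eq.~(\ref{choiinverse}) to convert the separable decomposition of the Choi matrix into the Kraus form and back, being careful that the $C_i$ are now allowed to be complex.

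For (iv) $\Rightarrow$ (v) the new subtlety is the real C*-subalgebra $\mathcal{C} \subseteq \ell^\infty_k(\C) \otimes M_p(\C)$: since the Kraus operators $C_i$ are complex but $\Phi$ maps real matrices to real matrices, the intermediate algebra cannot simply be $\ell^\infty_k(\C) \otimes M_p(\C)$ itself but must be the real $*$-subalgebra generated by the images, and one builds $\Gamma$ and $\Delta$ as real-linear completely positive maps using isometries $S_i$ onto $\ran(C_i)$ exactly as in the previous proof, checking real-linearity and that complete positivity is preserved under restriction to a real C*-subalgebra. Finally, (v) $\Rightarrow$ (ii) $\Rightarrow$ (vi) $\Rightarrow$ (vii) $\Rightarrow$ (iii) can be run through the complexification: complexifying the factorization in (v) gives a factorization of $\widetilde\Phi$ through $\ell^\infty_k(\C) \otimes M_p(\C)$, reducing to the complex case; and (vi), (vii) are literally statements about $\widetilde\Phi$ composed with complex $p$-positive maps, so they are just the complex instances of conditions (v), (vi) of Theorem~\ref{Rentanglementbreaking characterization} applied to $\widetilde\Phi$. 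The main obstacle I anticipate is the correct bookkeeping in (iv) $\Rightarrow$ (v): ensuring that the real C*-subalgebra $\mathcal{C}$ is genuinely closed under the real $*$-operation and that the factoring maps are real-linear and completely positive as \emph{real} maps, rather than merely their complexifications being so. Everything else is a faithful translation of the complex arguments through the identity $C_{\widetilde\Phi} = C_\Phi$ and Proposition~\ref{prop:csepp}.
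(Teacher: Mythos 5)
The paper states this theorem without proof, describing it only as the ``analogous theorem for $\CSEP$,'' and your proposal is exactly the intended transliteration of the proof of Theorem~\ref{Rentanglementbreaking characterization} through the identity $C_{\widetilde\Phi}=C_\Phi$, Proposition~\ref{prop:csepp}, and the complex case of that theorem, with the one genuinely new point (the real C*-subalgebra in (iv)\,$\Rightarrow$\,(v), handled by pairing each $C_i$ with $\overline{C_i}$ and taking $S_{\bar\imath}=\overline{S_i}$) correctly identified. The only detail worth making explicit is the symmetrization in (iv)\,$\Rightarrow$\,(iii): since $\Phi(X)$ is real for real $X$, one may replace $\sum_i C_iXC_i^*$ by $\tfrac12\sum_i\bigl(C_iXC_i^*+\overline{C_i}X\overline{C_i}^*\bigr)$ so that the Kraus family comes in conjugate pairs and the resulting decomposition of $C_\Phi$ visibly lies in $\CSEP_p$.
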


Here by a {\it real C*-subalgebra} of a C*-algebra, we mean a closed subset that is a real subalgebra that is also *-closed.
We recall that a real C*-subalgebra of $\ell^{\infty}_k(\C) \otimes M_p(\C)$ need not be isomorphic to $\ell^{\infty}_j(\R) \otimes M_p(\R)$ for some $j$, even in the case $p=1$.

\begin{example}
Here we provide an example of a map  $\Phi : M_{2p}(\R) \to M_{2p}(\R)$ that has a $\C$-entanglement $p$-breaking complexification, 
but is not $\R$-entanglement $(2p-1)$-breaking (and therefore is also not entanglement $p$-breaking).    The map is 
\[
 \Phi  (A) =        O_+   A  O_+  +  O_-  A  O_-  \, ,
\]
where  $O_\pm$ are the orthogonal matrices  
$O_+   = \begin{spmatrix}   0  &  I_p  \\  I_p  &  0  \end{spmatrix}$ and $O_-   = \begin{spmatrix}    I_p  &  0  \\   0&   -I_p   \end{spmatrix} $.  
Its Choi matrix $C_{\Phi}$ was studied in example \ref{ex:pentangled}, where we showed that $C_\Phi$ is $\C$-$p$-separable, but $\R$-$(2p-1)$-entangled.  
By Theorem \ref{Rentanglementbreaking characterization}, this implies that $\widetilde\Phi$ is $\C$-entanglement $p$-breaking, while $\Phi$ is not $\R$-entanglement $(2p-1)$-breaking.  
\end{example}

For $p=1$, the above example shows that a completely positive map can break $\C$-entanglement without breaking $\R$-entanglement.

\section{IPT states and real analogues of the PPT-squared  conjecture}\label{sec:PPT}
 In this section we discuss  a conjecture, which we believe might be the best real-vector-space analogue of the so called PPT-squared conjecture. The connection between transpose invariant states and real separable states is crucial in our analysis in this section. So we note down this connection in the following.
\subsection{IPT maps and real separable states}

For $\K  =  \C$ or $\R$,  a positive matrix $P  \in  \PSD  (\K^n\otimes \K^m)$ is called 
\begin{itemize}
\item  {\em PPT  $($positive partial transpose$)$}  if   $ (   \tau_n\otimes \id_m)    (P)  \ge  0$,
\item  {\em IPT  $($invariant under partial transpose$)$}  if   $ (   \tau_n\otimes \id_m)    (P)  =  P$.  
\end{itemize}
Hereafter, the set of IPT matrices will be denoted by $\IPT(\K^n\otimes\K^m)$.

A  completely positive map   $\Phi: M_n(\K) \to M_m(\K)$ is called PPT (IPT)   if its  Choi matrix is PPT  (IPT).  The set of IPT maps can be equivalently characterized as the set of  maps for which the Choi matrix $C_\Phi$ and the Jamio\l kowski matrix $  J_\Phi   =   (\tau_n\otimes \id_m)   (C_\Phi)$ coincide and are positive.

Clearly, every entanglement breaking map is PPT:  if the Choi matrix is separable,   then it is also PPT by the Peres-Horodecki criterion \cites{peres1996separability,Horodeckis,horodecki1997separability}. 
In the real case, entanglement breaking maps and their complexifications are IPT.  The converse is generally not true:   the existence of entangled IPT matrices implies the existence of IPT maps that are not entanglement-breaking.

\begin{prop} \label{C:RPPT equals transpose}
If $\Phi: M_n(\R) \to M_m(\R)$ is $\R$-entanglement breaking, then its complexification $\widetilde \Phi$ is  IPT, 
and   $\tau_m \circ \widetilde \Phi =  \widetilde \Phi  \circ \tau_n  = \widetilde \Phi$. 
\end{prop}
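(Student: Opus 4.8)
The plan is to characterize $\R$-entanglement breaking maps through their Kraus-type representation and then compute directly how the transpose interacts with this form. By Theorem~\ref{Rentanglementbreaking characterization} (the equivalence (i)$\Leftrightarrow$(iii) for $p=1$), a map $\Phi$ is $\R$-entanglement breaking if and only if it can be written as $\Phi(X) = \sum_{i=1}^k C_i X C_i^t$ with each $C_i \in M_{m,n}(\R)$ of rank $1$. First I would reduce to this rank-one case: write $C_i = u_i v_i^t$ for real column vectors $u_i \in \R^m$ and $v_i \in \R^n$, so that $\Phi(X) = \sum_i u_i (v_i^t X v_i) u_i^t$. The key structural observation is that each term $C_i X C_i^t$ depends on $X$ only through the scalar $v_i^t X v_i$, and this scalar is \emph{manifestly invariant under transposition of $X$}, since $v_i^t X v_i = v_i^t X^t v_i$ for real $v_i$.

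The heart of the argument is then the claim that $\widetilde\Phi \circ \tau_n = \widetilde\Phi$. On real matrices this is immediate from the previous paragraph: for $X \in M_n(\R)$ we have $\Phi(X^t) = \sum_i u_i (v_i^t X^t v_i) u_i^t = \sum_i u_i (v_i^t X v_i) u_i^t = \Phi(X)$. To pass to the complexification I would invoke linearity: writing a general element of $M_n(\C)$ as $X + iY$ with $X, Y \in M_n(\R)$, the complexification satisfies $\widetilde\Phi((X+iY)^t) = \widetilde\Phi(X^t + iY^t) = \Phi(X^t) + i\Phi(Y^t) = \Phi(X) + i\Phi(Y) = \widetilde\Phi(X+iY)$, where the middle equality uses the real-matrix identity just established. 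Note that $(X+iY)^t = X^t + iY^t$ is the (non-conjugating) transpose, which is exactly $\tau_n$ extended complex-linearly, so this shows $\widetilde\Phi \circ \tau_n = \widetilde\Phi$.

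For the identity $\tau_m \circ \widetilde\Phi = \widetilde\Phi$, I would exploit the symmetry of the Kraus form. Each term $C_i X C_i^t = u_i(v_i^t X v_i)u_i^t$ is a scalar multiple of the symmetric matrix $u_i u_i^t$, hence $\Phi(X)$ is symmetric for every $X \in M_n(\R)$; that is, $\tau_m(\Phi(X)) = \Phi(X)$. The same complex-linearity argument as above then upgrades this to $\tau_m \circ \widetilde\Phi = \widetilde\Phi$ on all of $M_n(\C)$. Combining the two identities gives $\tau_m \circ \widetilde\Phi = \widetilde\Phi = \widetilde\Phi \circ \tau_n$.

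Finally, to conclude that $\widetilde\Phi$ is IPT I must verify the two defining conditions: that $C_{\widetilde\Phi}$ is positive and that $(\tau_n \otimes \id_m)(C_{\widetilde\Phi}) = C_{\widetilde\Phi}$. Positivity follows since $\Phi$, being entanglement breaking, is in particular completely positive, and by Corollary~\ref{cor:2p} its complexification is completely positive, so $C_{\widetilde\Phi} = C_\Phi$ is a positive matrix. For the partial-transpose invariance I would compute using the defining formula $C_\Phi = \sum_{ij} E_{ij} \otimes \Phi(E_{ij})$ together with $\Phi(E_{ij}^t) = \Phi(E_{ji})$; applying $\tau_n \otimes \id_m$ sends $E_{ij} \otimes \Phi(E_{ij})$ to $E_{ji} \otimes \Phi(E_{ij}) = E_{ji} \otimes \Phi(E_{ji}^t) = E_{ji} \otimes \Phi(E_{ji})$ by the relation $\Phi \circ \tau_n = \Phi$ already proved, and relabeling the summation indices recovers $C_\Phi$. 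I expect the main obstacle to be purely bookkeeping: keeping careful track of which transpose is the conjugating adjoint versus the plain (complex-linear) transpose $\tau$, since on $M_n(\C)$ these differ, and the statement involves $\tau$ rather than the adjoint. Once that distinction is handled correctly, every step is a short direct computation.
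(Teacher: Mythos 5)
Your proof is correct and is essentially the paper's argument in equivalent form: the paper writes $C_\Phi=\sum_k A_k\otimes B_k$ with $A_k,B_k$ real PSD (hence symmetric) and transposes the tensor factors, while your rank-one Kraus form $\Phi(X)=\sum_i u_i(v_i^tXv_i)u_i^t$ corresponds to exactly that decomposition with $A_i=v_iv_i^t$ and $B_i=u_iu_i^t$, the symmetry of these factors being what powers both your scalar identity $v_i^tXv_i=v_i^tX^tv_i$ and the symmetry of $\Phi(X)$. The only cosmetic difference is the order of deduction: you prove the map identities $\widetilde\Phi\circ\tau_n=\tau_m\circ\widetilde\Phi=\widetilde\Phi$ first and read off the IPT property of $C_{\widetilde\Phi}$, whereas the paper computes the partial transposes of $C_\Phi$ directly and obtains the map identities afterwards.
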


\begin{proof} 
Since $C_{\Phi} \in \SEP(\R^n \otimes \R^m)$, we can write $C_\Phi = \sum_k A_k \otimes B_k$ for $A_k \in \PSD(\R^n)$ and $B_k\in \PSD(\R^m)$.
Recalling that $C_{\widetilde \Phi}  = C_\Phi$, we obtain
\[
 \tau_n \otimes \id_m(C_{\widetilde \Phi}) = \sum_k A_k^t \otimes B_k = \sum_k A_k \otimes B_k = C_{\widetilde \Phi } \, , 
 \]
that is  $\widetilde \Phi$ is IPT.    Moreover, one has
\[
 C_{\tau_m \circ \widetilde \Phi} =  \id_m\otimes  \tau_m (C_{\widetilde \Phi})  =   \sum_k A_k \otimes B_k^t = \sum_k A_k \otimes B_k = C_{\widetilde \Phi} \,,
\]  
which implies $\tau_m \circ \widetilde \Phi  =  \Phi$, and 
and 
\[  
 C_{ \widetilde \Phi\circ \tau_n} =  C_{ \tau_m\circ \widetilde \Phi\circ \tau_n}  
 =     \sum_{i,j}    E_{i,j}  \otimes   (\widetilde \Phi   ( E_{i,j}^t))^t =   \sum_{i,j}    E_{i,j}  \otimes   (\Phi   ( E_{i,j}^t))^t =  C_{\Phi} \, ,      
\]
where the last equality follows from the fact that  $\Phi$ commutes with the adjoint.  
\end{proof}

Interestingly, the real separable states can be identified as those states in $\CSEP$ which have the IPT property:

\begin{prop}\label{IPT+CSEP=RSEP}
The following equality of sets holds
\[
 \CSEP(\R^n\otimes \R^m)\cap \IPT(\R^n\otimes \R^m)=\SEP(\R^n\otimes \R^m).
\]
 Consequently, if the complexification of an IPT map $\Phi: M_n(\R) \to M_m(\R)$ is $\C$-entanglement breaking, then $\Phi$ is $\R$-entanglement breaking. 
\end{prop}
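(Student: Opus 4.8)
The plan is to prove the set equality by two inclusions, and then deduce the statement about maps from the characterization theorems already in hand. The inclusion $\SEP(\R^n\otimes\R^m)\subseteq\CSEP(\R^n\otimes\R^m)\cap\IPT(\R^n\otimes\R^m)$ is the easy half: a real separable matrix is trivially $\C$-separable and real, hence lies in $\CSEP$, and it lies in $\IPT$ by Lemma~\ref{RSEP is IPT}. The substance is the reverse inclusion $\CSEP\cap\IPT\subseteq\SEP$, which I would obtain by showing that invariance under partial transpose forces the ``antisymmetric part'' of any $\C$-separable real matrix to vanish.

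For the reverse inclusion I would take $P\in\CSEP(\R^n\otimes\R^m)\cap\IPT(\R^n\otimes\R^m)$ and use Proposition~\ref{prop:csepp} (with $p=1$) to write $P=\sum_j \Vect(X_j)\Vect(X_j)^t+\Vect(Y_j)\Vect(Y_j)^t$ with $\rank(X_j+iY_j)\le1$. Writing each rank-one complex matrix as $X_j+iY_j=(a_j+ib_j)(c_j+id_j)^t$ with $a_j,b_j\in\R^n$ and $c_j,d_j\in\R^m$, so that $\Vect(X_j)=a_j\otimes c_j-b_j\otimes d_j$ and $\Vect(Y_j)=a_j\otimes d_j+b_j\otimes c_j$, a direct expansion of the term $Q_j:=\Vect(X_j)\Vect(X_j)^t+\Vect(Y_j)\Vect(Y_j)^t$ yields the key identity
\[
 Q_j=(a_ja_j^t+b_jb_j^t)\otimes(c_jc_j^t+d_jd_j^t)+(a_jb_j^t-b_ja_j^t)\otimes(d_jc_j^t-c_jd_j^t).
\]
Both tensor factors of the first term are symmetric positive semidefinite, so that term lies in $\SEP(\R^n\otimes\R^m)$; both factors of the second term are antisymmetric.

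I would then use that the partial transpose $\tau_n\otimes\id_m$ fixes the symmetric-times-symmetric summand and negates the antisymmetric-times-antisymmetric summand, since transposition in the first leg fixes symmetric matrices and negates antisymmetric ones. Summing over $j$ and writing $P=\Sigma+\Delta$, where $\Sigma=\sum_j(a_ja_j^t+b_jb_j^t)\otimes(c_jc_j^t+d_jd_j^t)\in\SEP(\R^n\otimes\R^m)$ and $\Delta$ collects the antisymmetric terms, the IPT hypothesis $\tau_n\otimes\id_m(P)=P$ becomes $\Sigma-\Delta=\Sigma+\Delta$, forcing $\Delta=0$. Hence $P=\Sigma\in\SEP(\R^n\otimes\R^m)$, which closes the inclusion and hence the set equality.

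Finally, the consequence follows by assembling the characterizations: by Theorem~\ref{CSEP-breaking} the complexification $\widetilde\Phi$ is $\C$-entanglement breaking iff $C_\Phi\in\CSEP(\R^n\otimes\R^m)$, while $\Phi$ being IPT means $C_\Phi\in\IPT(\R^n\otimes\R^m)$; the set equality then gives $C_\Phi\in\SEP(\R^n\otimes\R^m)$, which by Theorem~\ref{Rentanglementbreaking characterization} is equivalent to $\Phi$ being $\R$-entanglement breaking. The only genuine computation is the expansion producing the displayed identity for $Q_j$; the conceptual crux — and the step I expect to require the most care — is recognizing that, for $\C$-separable real matrices, the partial transpose is exactly the reflection separating the real-separable (symmetric) part from the obstructing antisymmetric part, so that IPT invariance is precisely what eliminates the obstruction.
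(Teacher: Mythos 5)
Your proof is correct and rests on the same idea as the paper's: split the $\C$-separable representation of $P$ into a symmetric (real-separable) part and an antisymmetric part, note that the partial transpose fixes the former and negates the latter, and conclude from IPT invariance that the antisymmetric part vanishes; the deduction of the statement about maps via Theorem~\ref{CSEP-breaking} and Theorem~\ref{Rentanglementbreaking characterization} is also exactly what the paper does. The only difference is the entry point: the paper writes $X=\sum_k P_k\otimes Q_k$ with complex PSD factors, averages $X$ with its partial transpose to symmetrize the first leg, and then takes real parts, whereas you start from the rank-one real form of Proposition~\ref{prop:csepp} and expand explicitly --- your displayed identity for $Q_j$ checks out, so this is a purely cosmetic variation.
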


\begin{proof}
The containment $$\CSEP(\R^n\otimes \R^m)\cap \IPT(\R^n\otimes \R^m)\supseteq\SEP(\R^n\otimes \R^m),$$ is trivial and we noted down this observation before (Lemma \ref{RSEP is IPT}).

For the other containment, If $X\in \CSEP$, then we can write
\[
X = \sum_k P_k \otimes Q_k,
\]
with $P_k, Q_k \ge 0$ complex matrices. Write $P_k = A_k + iB_k, \, Q_k = C_k + i D_k,$ where $A_k,B_k,C_k, D_k \in M_n(\R)$ with $A_k, C_k$ symmetric and $B_k, D_k$ skew-symmetric.  Also note that $A_k, C_k \ge 0$.

Since $X$ is IPT,
\[
X=(\tau_n\otimes \id_m)    (X)  = \sum_k P_k^t \otimes Q_k = \sum_k  (A_k - i B_k)\otimes Q_k
\]
Averaging these two expressions yields $$X= \sum_k  A_k\otimes Q_k=\sum_k (A_k\otimes C_k) +i (A_k\otimes D_k).$$
But since $X$ is a real matrix, we obtain $X=\sum_k A_k\otimes C_k$, which means $X$ is real separable.

The last statement follows from the fact that if $\Phi$ is $\C$-entanglement breaking, then the Choi matrix $C_\Phi$ is in $\CSEP$ (see Theorem \ref{CSEP-breaking}). Now the IPT property of $\Phi$ makes $C_\Phi$ an IPT state. Hence we have $C_\Phi$ to be in $\SEP(\R^n\otimes \R^m)$ and now by Theorem \ref{Rentanglementbreaking characterization} we get $\Phi$ is $\R$-entanglement breaking.
\end{proof}

  Also, the  IPT maps  can be characterized as the completely positive maps that   annihilate the space of antisymmetric matrices: 
\begin{prop}\label{IPTmaps}
For a completely positive map  $\Phi:  M_n (\K)   \to M_m (  \K) $, the following are equivalent 
\begin{enumerate}   
\item[(i)] $\Phi$ is IPT 
\item[(ii)] $\Phi  (A)  =  0 $ for every antisymmetric matrix $A  \in  \Asym_n (\K)$.
\end{enumerate}
\end{prop}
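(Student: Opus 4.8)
The plan is to reduce the IPT condition on the completely positive map $\Phi$ to a simple linear condition on the images of the matrix units, and then to recognize that condition as the vanishing of $\Phi$ on antisymmetric matrices. Since $\Phi$ is completely positive, its Choi matrix $C_\Phi$ is automatically positive, so by the characterization recalled just before the statement, $\Phi$ is IPT precisely when $C_\Phi$ coincides with its Jamio\l kowski matrix $J_\Phi = (\tau_n \otimes \id_m)(C_\Phi)$.

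First I would expand both matrices in the matrix-unit basis. Writing $C_\Phi = \sum_{i,j} E_{i,j} \otimes \Phi(E_{i,j})$ and applying the partial transpose $\tau_n \otimes \id_m$, which sends $E_{i,j} \mapsto E_{j,i}$ in the first factor, one obtains $J_\Phi = \sum_{i,j} E_{i,j} \otimes \Phi(E_{j,i})$ after relabelling the summation indices, in agreement with the definition of $J_\Phi$ given in Section~\ref{sec:choi}. Because the matrix units $\{E_{i,j}\}$ are linearly independent in $M_n(\K)$, the equality $C_\Phi = J_\Phi$ holds if and only if $\Phi(E_{i,j}) = \Phi(E_{j,i})$ for every pair $(i,j)$, equivalently $\Phi(E_{i,j} - E_{j,i}) = 0$ for all $i,j$.

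To finish, I would observe that the matrices $\{ E_{i,j} - E_{j,i} : 1 \le i < j \le n \}$ form a basis of $\Asym_n(\K)$. Hence the condition $\Phi(E_{i,j} - E_{j,i}) = 0$ for all $i,j$ is, by linearity, exactly the statement that $\Phi$ annihilates $\Asym_n(\K)$. This establishes both implications (i)$\Leftrightarrow$(ii) simultaneously.

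There is essentially no hard step here: the result is a coefficient-matching computation once one uses the equivalent description of IPT maps via $C_\Phi = J_\Phi$. The only point worth a small remark is that complete positivity enters solely to guarantee positivity of $C_\Phi$, so that the IPT requirement genuinely collapses to the single equation $C_\Phi = J_\Phi$; for a merely positivity-preserving map one would additionally have to check positivity of the Choi matrix, but that is automatic under the standing hypothesis that $\Phi$ is completely positive.
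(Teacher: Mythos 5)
Your proof is correct and follows essentially the same route as the paper's: both reduce the IPT condition to the equation $C_\Phi = J_\Phi$ (positivity of $C_\Phi$ being automatic from complete positivity) and expand both matrices over the matrix units. The only cosmetic difference is that the paper derives the forward implication via the trace pairing $\Tr(C_\Phi(A^t\otimes B)) = \Tr(\Phi(A)B)$, whereas you obtain both implications at once by matching coefficients in the linearly independent family $\{E_{i,j}\}$ and noting that $\{E_{i,j}-E_{j,i} : i<j\}$ spans $\Asym_n(\K)$; both arguments are equally valid.
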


\begin{proof}   (i)  $\Rightarrow$ (ii).  Let $\Phi$ be an IPT map.  Hence, $C_\Phi  =  J_\Phi$.  Hence, for every antisymmetric matrix  $A\in  \Asym_n (\K)$, 
Eq.  (\ref{choiinverse}) implies the relation  
\begin{align*}
\nonumber 
\Tr (\Phi(A)B)     & =    \Tr (C_\Phi (A^t\otimes B)) = -    \Tr (C_\Phi (A\otimes B))    = -  \Tr (  J_\Phi \, (A \otimes B))     \\
  &  =  -  \Tr  (  C_\Phi\,  (A^t \otimes B))   =  -  \Tr (\Phi(A)B)  \, ,  \qquad  \forall B  \in  M_m (\K)   \, ,
\end{align*}
which is equivalent to $\Phi (A)  =  0$.  

  (ii)  $\Rightarrow$ (i).     Let $\Phi$ be a CP map such that $\Phi  (A)  =  0$ for every $A \in \Asym_n  (\K)$.  Hence, for every matrix $X \in  M_ n (\K)$,  we have  $\Phi   (X)   =  \Phi   (X^t )$.    In turn, this condition implies 
  \begin{align}
C_\Phi     =  \sum_{i,j}  E_{ij}    \otimes \Phi (E_{ij} )    =    \sum_{i,j}  E_{ij}    \otimes \Phi (E_{ij}^t )   =  \sum_{i,j}  E_{ij}    \otimes \Phi (E_{ji} )   = J_{\Phi} \, .       \end{align}
Since the Choi matrix and Jamio\l kowski matrices coincide and are positive, $\Phi$ is IPT.  
\end{proof}

\subsection{PPT and IPT squared conjectures}
In a presentation at the workshop  ``Operator structures in quantum information theory''  (Banff Research Station, February 26-March 2 2012) \cite{RJKHW}, Christandl raised what has later  become known as the PPT-squared conjecture.

\begin{conj}[\textbf{PPT-squared}] \label{PPT2C}
For every PPT map $\Phi:M_n(\C)\rightarrow M_n(\C)$, $\Phi\circ\Phi$ is $\C$-entanglement breaking.
\end{conj}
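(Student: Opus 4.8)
\emph{Proof strategy (attack on the open conjecture).} The plan is to reduce the conjecture to a single separability statement and then to exploit the semigroup structure of PPT maps together with an analysis of the long-time behaviour of the iterates. By the complex case of Theorem~\ref{Rentanglementbreaking characterization} (the equivalence of (i) and (ii), originally due to \cite{HSR}), the map $\Phi\circ\Phi$ is $\C$-entanglement breaking precisely when its Choi matrix $C_{\Phi\circ\Phi}$ is separable. Thus the conjecture is equivalent to the assertion that $C_{\Phi^2}$ is separable whenever $C_\Phi$ is PPT. A direct attack on this equivalence is unattractive, because PPT is a global positivity constraint with no generating decomposition analogous to the one defining $\SEP$, so manipulating $C_{\Phi^2}$ by hand yields little; I would therefore study the dynamics of $\Phi$ rather than the two-fold composition in isolation.

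First I would record that the PPT property is preserved under composition: if $\Phi$ and $\Psi$ are PPT, then $\Psi\circ\Phi$ is completely positive, while $\tau_n\circ(\Psi\circ\Phi)=(\tau_n\circ\Psi)\circ\Phi$ is a composition of completely positive maps and hence completely positive, so $\Psi\circ\Phi$ is again PPT. Consequently $\{\Phi^k\}_{k\ge1}$ is a semigroup of PPT maps. Since $\Phi$ is completely positive it preserves the cone $\PSD(\C^n)$, so its spectral radius is an eigenvalue; after rescaling (which leaves entanglement breaking unchanged, being scale invariant) I may take the spectral radius to be $1$. The Ces\`aro means of the iterates then converge to the projection $\Phi_\infty$ onto the peripheral eigenoperators, and $\Phi_\infty$ is again PPT as a limit of PPT maps.

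The heart of the argument would be to show that this limit $\Phi_\infty$ is entanglement breaking. Here I would invoke the Perron--Frobenius structure of a positive map: on (the support of) the Perron eigenvector the peripheral eigenoperators carry a von Neumann algebra structure on which $\Phi$ acts as a $*$-automorphism. The decisive point is that the PPT constraint is incompatible with a nontrivial automorphism acting on an entangled fixed block---conjugation $X\mapsto UXU^*$ has a (rescaled) maximally entangled Choi matrix, whose partial transpose is never positive---so the peripheral algebra must be of a transpose-invariant, essentially abelian type, forcing the limiting conditional expectation $\Phi_\infty$ onto it to be entanglement breaking. This asymptotic statement is the complex counterpart of the IPT result proved elsewhere in this paper, and it settles the behaviour of $\Phi^k$ for large $k$.

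The main obstacle---and the reason the conjecture is still open in general---is the passage from the asymptotic conclusion that $\Phi_\infty$ is entanglement breaking to the finite conclusion that $\Phi\circ\Phi$ is entanglement breaking. Asymptotic entanglement breaking gives no control on the number of steps required, whereas the conjecture asserts that two always suffice; closing this gap demands quantitative control of the transient, non-peripheral part of the spectrum and a proof that it cannot generate Schmidt-rank-$\ge 2$ structure after only two applications. In dimension $3$ this is achieved through the explicit structure theory of PPT maps in \cite{CMW} (see also \cite{CYT}), but no uniform bound of this kind is known in higher dimensions, and it is precisely at this step that a genuinely new idea would be required.
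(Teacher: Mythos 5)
This statement is a \emph{conjecture}: the paper offers no proof of it, and none is known. It records the PPT-squared conjecture of Christandl precisely because it is open (the paper only cites that it holds for $n=3$ via \cite{CMW}, and uses the conjecture as a hypothesis to derive the IPT-squared conjecture). Your proposal, by your own admission in the final paragraph, does not close the argument either, so there is a genuine gap --- indeed the gap is the entire content of the open problem. Concretely: the step from ``the peripheral/asymptotic part of the dynamics is entanglement breaking'' to ``two applications of $\Phi$ already suffice'' is exactly what nobody knows how to do, and no amount of spectral bookkeeping on the transient part is currently known to control the Schmidt rank of $C_{\Phi^2}$ after only two steps.

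Two further cautions about the parts you present as settled. First, your asymptotic analysis is stated for an arbitrary PPT map after rescaling the spectral radius to $1$, but the known results you are implicitly invoking (\cite{KMP}, \cite{RJP}, \cite{HRF}) all require additional structure --- unitality, trace preservation, or at least a positive invertible fixed point --- to make the Perron--Frobenius/peripheral-algebra argument work; a general PPT map need not have a faithful invariant state, and the reduction to its support is not automatic. Second, even granting the asymptotic conclusion, ``every limit point of $\Phi^k$ is entanglement breaking'' does not by itself yield that any \emph{finite} power is entanglement breaking (the set of entanglement breaking maps is closed but a sequence can approach it from outside), which is why \cite{RJP} needs the extra hypotheses it imposes. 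So your sketch is a fair map of the known terrain, but it should not be mistaken for a proof, and the paper contains none to compare it against.
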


In \cite{CMW}, Christandl, M\"{u}ller-Hermes, and Wolf show that if this conjecture is true, then the composition $\Phi \circ \Psi$ 
is entanglement breaking for any two PPT maps with compatible domain and range.
They also show that the conjecture is true for $n=3$ \cite{CMW}*{Corollary 3.1}.

The example below shows that a direct real-vector-space analogue of Conjecture~\ref{PPT2C} is false. Namely, there exist real PPT maps whose square is not $\R$-entanglement breaking, even for $n=2$.

In \cite{KMP} it is shown that if $\Phi: M_n(\C) \to M_n(\C)$ is idempotent, i.e., $\Phi \circ \Phi = \Phi$, 
unital, i.e, $\Phi(I_n) = I_n$ and PPT, then $\Phi \circ \Phi$ is $\C$-entanglement breaking.  
In \cite{KMP} it is also shown that if $\Phi$ is only unital and PPT, then
every limit point of the sequence of iterates $\Phi^k = \Phi\circ \dots \circ \Phi$ is a $\C$-entanglement breaking map.
In \cite{RJP} it is shown that if $\Phi: M_n(\C) \to M_n(\C)$ is unital, trace-preserving and PPT, 
then $\Phi^k$ is $\C$-entanglement breaking for some power of $\Phi$. Soon after this result, in \cite{HRF}, it was shown that the unitality can be relaxed with the assumption that a trace-preserving map admits a positive invertible fixed point.  

The example below also shows that the direct real-vector-space analogues of these results are false, even for $n=2$.

\begin{example} \label{E:idem PPT not EB}
Define $\Phi : M_2(\R) \to M_2(\R)$ with
\[
  C_\Phi = \frac12 \begin{bmatrix} I_2 & \gamma \\ -\gamma & I_2 \end{bmatrix}
  \quad\text{where}\quad 
  \gamma =\begin{bmatrix} \phantom{-} 0 & 1 \\ -1 & 0 \end{bmatrix}.
\]
Then $\Phi$ is a unital, completely positive, trace preserving map which is idempotent, $\Phi^2=\Phi$.
Moreover, $\ran \Phi = \spn\{ I_2,\gamma \}$ is a real abelian C*-algebra with the usual product.
Since $C_\Phi$ and its partial transpose are positive, $\Phi$ is PPT.
Since $C_\Phi$ is not equal to its partial transpose, Corollary~\ref{C:RPPT equals transpose} shows that $\Phi^2= \Phi$ is not $\R$-entanglement breaking.

However, $C_\Phi$ is in $\CSEP(\R^2\otimes \R^2)$: let $A_\pm = \frac12 (I_2 \pm i\gamma ) \ge 0$. Then
\[
 A_+ \otimes A_- + A_- \otimes A_+ = \frac12(I_2 \otimes I_2 + \gamma \otimes \gamma) = C_\Phi 
\]
and
\[
 A_+ \otimes A_+ + A_- \otimes A_- = \frac12(I_2 \otimes I_2 - \gamma \otimes \gamma) = C_{\Phi \circ T}
\]
Therefore $\wt\Phi^2=\wt\Phi$ is $\C$-entanglement breaking. 
\end{example}

With Corollary~\ref{C:RPPT equals transpose} and Example~\ref{E:idem PPT not EB} in mind,  we formulate a real version of the PPT-squared conjecture, which we call the IPT-squared conjecture.
This conjecture also makes sense for $\C$.
We formulate both versions.

\begin{conj}[\textbf{IPT-squared for $\K$}] \label{IPT2}
For $\K \in \{ \R, \C\}$, every IPT map $\Phi:M_n(\K)\rightarrow M_n(\K)$ satisfies $\Phi\circ\Phi$ is $\K$-entanglement breaking.
\end{conj}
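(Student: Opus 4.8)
The plan is to derive Conjecture~\ref{IPT2} from the complex PPT-squared conjecture (Conjecture~\ref{PPT2C}), handling the real case by reduction to the complex one. I would first dispose of $\K = \C$. If $\Phi : M_n(\C) \to M_n(\C)$ is IPT, its Choi matrix obeys $(\tau_n \otimes \id_n)(C_\Phi) = C_\Phi \ge 0$, so $\Phi$ is in particular PPT; hence the complex instance of Conjecture~\ref{IPT2} is an immediate consequence of Conjecture~\ref{PPT2C}. The substance therefore lies in the real case.

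For a real IPT map $\Phi : M_n(\R) \to M_n(\R)$, the first step is to check that $\Phi \circ \Phi$ is again IPT. By Proposition~\ref{IPTmaps}, an IPT map is exactly a completely positive map annihilating the antisymmetric matrices. Since $\Phi(\Asym_n(\R)) = \{0\}$, we have $(\Phi \circ \Phi)(\Asym_n(\R)) = \Phi(\{0\}) = \{0\}$, and $\Phi \circ \Phi$ is completely positive as a composition of completely positive maps; so $\Phi \circ \Phi$ is IPT by Proposition~\ref{IPTmaps} once more.

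The second step is to complexify and transfer. Since the Choi matrix is unchanged under complexification, $C_{\widetilde\Phi} = C_\Phi$, and since the IPT condition is imposed on the (real, hence complex-positive) Choi matrix, $\widetilde\Phi$ is a complex IPT map, and therefore PPT. Granting Conjecture~\ref{PPT2C}, the composite $\widetilde\Phi \circ \widetilde\Phi = \widetilde{\Phi \circ \Phi}$ is $\C$-entanglement breaking. I would then invoke the consequence recorded in Proposition~\ref{IPT+CSEP=RSEP}: an IPT real map whose complexification is $\C$-entanglement breaking must itself be $\R$-entanglement breaking. Applied to the IPT map $\Phi \circ \Phi$, this gives that $\Phi \circ \Phi$ is $\R$-entanglement breaking, which is the desired conclusion.

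The main obstacle is that this reduction is conditional: it relies on the still-open PPT-squared conjecture and so does not settle Conjecture~\ref{IPT2} outright. For an unconditional result one would instead mimic the complex partial results of \cite{KMP} and \cite{RJP}, proving an asymptotic statement that every limit point of the iterates $\Phi^k$ is $\R$-entanglement breaking, and, under the additional hypotheses of trace preservation and unitality, strengthening this to the claim that some finite power $\Phi^k$ is already $\R$-entanglement breaking. The technical crux there is to carry the complex fixed-point and convergence arguments through the complexification, exploiting the fact that the IPT hypothesis kills the antisymmetric part and thereby circumvents the difficulties of working directly on the non-spanning cone in $\Sym_n(\R)$.
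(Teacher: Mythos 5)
This statement is a conjecture, and the paper does not prove it outright; what it does establish is exactly the conditional reduction you describe --- that the complex PPT-squared conjecture implies the IPT-squared conjecture, and that the $\C$ case implies the $\R$ case via Proposition~\ref{IPT+CSEP=RSEP} --- together with the asymptotic and finite-power partial results you mention at the end. Your explicit verification that $\Phi\circ\Phi$ is again IPT (via Proposition~\ref{IPTmaps}) supplies a step the paper leaves implicit when it applies Proposition~\ref{IPT+CSEP=RSEP} to $C_{\Phi\circ\Phi}$, but otherwise your route coincides with the paper's.
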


\begin{rem}
It is worth stressing  that not all IPT maps on $M_n(\R)$ are $\R$-entanglement breaking. Indeed, as discussed after Lemma \ref{RSEP is IPT}, Ref. \cite{bennett1999unextendible} provides  examples of unextendible product bases (UPBs) in $\C^3\otimes \C^3$, so that the projector on their orthogonal complement is  a positive matrix $\rho$ in $M_3(\R)\otimes M_3(\R)$ which is IPT but not complex separable and hence not real separable. Now consider the map $\Phi: M_3(\R)\rightarrow M_3(\R)$ whose Choi matrix is $\rho$. It is clear that such a map is IPT but it is not $\R$-entanglement breaking. 
\end{rem}

The following gives a further indication that the IPT-squared conjecture is the correct real counterpart of the original PPT-conjecture.

\begin{prop} 
If the $\PPT^2$ conjecture is true for $\K$, then the $\IPT^2$ Conjecture is true for $\K$. 
If the $\IPT^2$ Conjecture is true for $\C$, then it is true for $\R$.

In particular, in dimension three, the above conjectures are true.
\end{prop}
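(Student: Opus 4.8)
The plan is to prove the three claims in order, using the duality between IPT maps and complex entanglement breaking that was established in Proposition~\ref{IPT+CSEP=RSEP}.

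\medskip

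\noindent\textbf{First claim: $\PPT^2$ for $\K$ implies $\IPT^2$ for $\K$.} First I would observe that an IPT map is in particular a PPT map, since the IPT condition $(\tau_n\otimes\id_m)(C_\Phi)=C_\Phi$ trivially implies $(\tau_n\otimes\id_m)(C_\Phi)\ge 0$. Thus, assuming the $\PPT^2$ conjecture holds for $\K$, for any IPT map $\Phi:M_n(\K)\to M_n(\K)$ the composite $\Phi\circ\Phi$ is $\K$-entanglement breaking. For $\K=\C$ there is nothing more to do. For $\K=\R$, the subtlety is that ``PPT'' and the $\PPT^2$ conjecture are naturally stated over $\C$. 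The clean route is to pass to complexifications: the complexification $\wt\Phi$ has the same Choi matrix $C_{\wt\Phi}=C_\Phi$, so $\wt\Phi$ is IPT (hence PPT) over $\C$ exactly when $\Phi$ is IPT over $\R$. Applying $\PPT^2$ for $\C$ to $\wt\Phi$ gives that $\wt\Phi\circ\wt\Phi=\widetilde{\Phi\circ\Phi}$ is $\C$-entanglement breaking. Since $\Phi\circ\Phi$ is again IPT (the IPT maps form a semigroup under composition, as they annihilate antisymmetric matrices by Proposition~\ref{IPTmaps}, and this property is preserved under composition), Proposition~\ref{IPT+CSEP=RSEP} then yields that $\Phi\circ\Phi$ is $\R$-entanglement breaking. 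This simultaneously covers the real case of the first claim.

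\medskip

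\noindent\textbf{Second claim: $\IPT^2$ for $\C$ implies $\IPT^2$ for $\R$.} The argument just given actually proves this directly, and is cleaner than routing through $\PPT^2$. Given a real IPT map $\Phi$, its complexification $\wt\Phi$ is a complex IPT map; by $\IPT^2$ for $\C$, the map $\wt\Phi\circ\wt\Phi=\widetilde{\Phi\circ\Phi}$ is $\C$-entanglement breaking. Since $\Phi\circ\Phi$ is IPT, Proposition~\ref{IPT+CSEP=RSEP} (the ``Consequently'' clause) converts $\C$-entanglement breaking of the complexification into $\R$-entanglement breaking of $\Phi\circ\Phi$. The one point needing care is that $\Phi\circ\Phi$ is genuinely IPT: I would verify this via Proposition~\ref{IPTmaps}, checking that if $\Phi$ kills all antisymmetric matrices then so does $\Phi\circ\Phi$, and that $\Phi\circ\Phi$ remains completely positive.

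\medskip

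\noindent\textbf{Third claim: truth in dimension three.} Here I would invoke the known result, cited in the excerpt from \cite{CMW}*{Corollary 3.1}, that the $\PPT^2$ conjecture is true for $n=3$ over $\C$. Combining this with the first implication (with $\K=\C$ and $\K=\R$) yields that the $\IPT^2$ conjecture holds for both $\C$ and $\R$ in dimension three.

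\medskip

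The main obstacle I anticipate is not in any single implication but in the bookkeeping of the complexification/IPT compatibility: one must be careful that complexification commutes with composition (so that $\widetilde{\Phi\circ\Phi}=\wt\Phi\circ\wt\Phi$), that the IPT property is inherited by the square, and that the equivalence $\C$-entanglement breaking $\Leftrightarrow$ $\R$-entanglement breaking for IPT maps from Proposition~\ref{IPT+CSEP=RSEP} is applied to $\Phi\circ\Phi$ rather than to $\Phi$. None of these is deep, but getting the quantifiers and the field of definition right at each step is where an error could slip in.
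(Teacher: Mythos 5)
Your proposal is correct and follows essentially the same route as the paper: the first claim is the observation that IPT maps are PPT maps, the second passes to the complexification and applies the ``Consequently'' clause of Proposition~\ref{IPT+CSEP=RSEP} to $\Phi\circ\Phi$, and the dimension-three claim chains these implications with the known complex $\PPT^2$ result of \cite{CMW}. Your explicit check that $\Phi\circ\Phi$ is again IPT (via Proposition~\ref{IPTmaps}) is a detail the paper leaves implicit, but it is needed and you handle it correctly.
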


\begin{proof} 
The first statement is obvious, since Conjecture~\ref{IPT2}  is a special case of Conjecture~\ref{PPT2C}.

Assume that Conjecture~\ref{IPT2} is true for $\C$. Let $\Phi: M_n(\R) \to M_n(\R)$ be IPT. 
Then $\wt\Phi: M_n(\C) \to M_n(\C)$ is IPT and hence $\wt\Phi \circ \wt\Phi = \wt{\Phi \circ \Phi}$ is $\C$-entanglement breaking. This means the Choi matrix $C_{\Phi \circ \Phi} = C_{\wt\Phi \circ \wt\Phi}$ is in $\CSEP(\R^n\otimes \R^m)$. Using Proposition \ref{IPT+CSEP=RSEP} we obtain $C_{\Phi \circ \Phi} $ is real separable and hence $\Phi \circ \Phi$ is $\R$-entanglement breaking.

Note that, it is known that in dimension three (complex case), the PPT-sqaured conjecture is true (see \cite{CMW} and also \cite{CYT}). Hence using the earlier proof method, one can show that the $\IPT^2$ Conjecture is true for $\R$ in dimension 3. 
\end{proof}

We do not know if these conjectures are equivalent.

We can however establish the three results from \cite{KMP} and \cite{RJP} cited above hold for IPT maps.

\begin{prop} 
Let $\Phi: M_n(\R) \to M_n(\R)$ be unital, trace-preserving and IPT. Then there is a power $k$ such that $\Phi^k$ is $\R$-entanglement breaking.
\end{prop}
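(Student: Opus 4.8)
The plan is to reduce the statement to its complex counterpart, which is the theorem of \cite{RJP} asserting that a unital, trace-preserving, PPT map on $M_n(\C)$ has some power that is $\C$-entanglement breaking. The bridge is the observation that the IPT condition is strictly stronger than PPT, so everything in sight can be pushed through the complexification. First I would record that $\wt\Phi: M_n(\C) \to M_n(\C)$ inherits all the hypotheses: unitality is immediate since $\wt\Phi(I_n) = \Phi(I_n) = I_n$, trace preservation follows by splitting a complex matrix into real and imaginary parts, and $\wt\Phi$ is PPT because an IPT map is in particular PPT (its Choi matrix equals its partial transpose, which is therefore positive). Thus $\wt\Phi$ satisfies the hypotheses of \cite{RJP}, and there is a power $k$ for which $\wt\Phi^{\,k}$ is $\C$-entanglement breaking.

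Next I would transfer this back to the real setting. Complexification commutes with composition, so $\wt\Phi^{\,k} = \wt{\Phi^{\,k}}$; hence $\wt{\Phi^{\,k}}$ is $\C$-entanglement breaking. It remains to check that $\Phi^{\,k}$ is itself IPT, so that Proposition~\ref{IPT+CSEP=RSEP} applies. This is where I would invoke Proposition~\ref{IPTmaps}: $\Phi$ is IPT and completely positive, so it annihilates $\Asym_n(\R)$; consequently, for any antisymmetric $A$ the innermost application gives $\Phi^{\,k}(A) = \Phi^{\,k-1}(\Phi(A)) = \Phi^{\,k-1}(0) = 0$. Since $\Phi^{\,k}$ is a composition of completely positive maps, it is completely positive, and being CP while annihilating $\Asym_n(\R)$ it is IPT by the converse direction of Proposition~\ref{IPTmaps}.

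Finally, having an IPT map $\Phi^{\,k}$ whose complexification $\wt{\Phi^{\,k}}$ is $\C$-entanglement breaking, the last assertion of Proposition~\ref{IPT+CSEP=RSEP} delivers that $\Phi^{\,k}$ is $\R$-entanglement breaking, which is exactly the claim. I do not expect a serious obstacle here: the argument is essentially a packaging of the complex theorem together with the IPT/CSEP dictionary already established. The only points demanding a little care are the routine verifications that complexification preserves unitality, trace preservation, and the PPT property (so that \cite{RJP} is genuinely applicable), and that composition does not leave the IPT class — the latter being handled cleanly by the antisymmetric-annihilation characterization rather than by manipulating partial transposes of Choi matrices directly.
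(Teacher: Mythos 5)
Your proposal is correct and follows essentially the same route as the paper: complexify, apply the result of \cite{RJP} to the unital, trace-preserving, PPT map $\wt\Phi$, observe that $\Phi^k$ remains IPT, and conclude via Proposition~\ref{IPT+CSEP=RSEP}. Your use of the antisymmetric-annihilation characterization to show that iterates stay IPT is a slightly more explicit packaging of the paper's remark that the partial-transpose invariance passes to powers, but the argument is the same.
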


\begin{proof} 
Since $\wt\Phi$ is unital, trace-preserving and PPT, there is some $k$ so that $(\wt\Phi)^k = \widetilde{\Phi^k}$ is $\C$-entanglement breaking. 
Using the fact that $\tau_n \circ \Phi = \Phi$ implies $\tau_n \circ \Phi^k = \Phi^k$ and arguing as in the previous proof, we obtain that $C_{\Phi^k}$ is $\R$-separable.
\end{proof}

Before proving the analogues of the results in \cite{KMP}, an analogue of \cite{KMP}*{Lemma~3.1} is required. 
However it is much more difficult in the real case.

When $\Phi:M_n(\C) \to M_n(\C)$ is idempotent and unital, i.e., $\Phi\circ\Phi=\Phi$ and $\Phi(I_n) = I_n$, 
then the range of $\Phi$ is an operator system which becomes a C*-algebra when endowed with the Choi-Effros product $A \star B = \Phi(AB)$.
In \cite{KMP}, it is shown that this C*-algebra is abelian.  
They then use the fact that every finite dimensional complex abelian C*-algebra is of the form $\ell^{\infty}_k(\C)$ and the complex analogue of \ref{Rentanglementbreaking characterization}(5), to deduce that $\Phi$ is $\C$-entanglement breaking.

However, not every finite dimensional real abelian C*-algebra is of the form $\ell^{\infty}_k(\R)$.
In particular, in Example~\ref{E:idem PPT not EB}, the range of $\Phi$ is $*$-isomorphic to the real abelian C*-algebra
\[
  \ff A = \{  f \in C(\{a,b\}) : f(b) = \ol{f(a)} \} .
\]
This is a 2-dimensional abelian real C*-algebra which is not isomorphic to $l^\infty_2(\R)$ or $l^\infty_1(\C)$.
Note that $\Phi$ is PPT but not IPT.

\begin{thm}
Let $\Psi : M_n(\R) \to M_n(\R)$ be a unital IPT idempotent map. 
Then the operator system $\cl R := \Psi(M_n(\R))$ with the Choi-Effros product
is order isomorphic to $l^\infty_k(\R)$ for some $k\ge1$, and $\Psi$  is $\R$-entanglement breaking.
\end{thm}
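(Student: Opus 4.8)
The plan is to equip the range $\cl R=\Psi(M_n(\R))$ with a C*-algebra structure via the real Choi--Effros construction, use the IPT hypothesis to force that structure to have \emph{trivial involution}, and then read off both conclusions. Since $\Psi$ is unital, completely positive (it is IPT, hence CP by definition) and idempotent, the Choi--Effros theorem makes $\cl R$ a real C*-algebra under the product $A\star B=\Psi(AB)$, the transpose involution, and the inherited norm, and it makes the identity map a complete order isomorphism between $(\cl R,\star)$ and the operator system $\cl R\subseteq M_n(\R)$.

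The key step is to identify this C*-algebra. By Proposition~\ref{IPTmaps} the IPT property gives $\Psi(\Asym_n(\R))=(0)$, while complete positivity forces $\Psi$ to commute with the adjoint (Lemma~\ref{lemma-pos}), so $\Psi(\Sym_n(\R))\subseteq\Sym_n(\R)$. Hence $\cl R=\Psi(\Sym_n(\R))\subseteq\Sym_n(\R)$: every element of $\cl R$ equals its own transpose, so the involution of $(\cl R,\star)$ is trivial. A C*-algebra in which every element is self-adjoint is automatically commutative, since for $A,B\in\cl R$ one has $A\star B=(A\star B)^*=B^*\star A^*=B\star A$; and a finite-dimensional commutative real C*-algebra with trivial involution is $*$-isomorphic to $\ell^\infty_k(\R)$. (Its structure is $\R^a\oplus\C^b$, and the trivial involution excludes the $\C$-summands, on which the involution is conjugation.) This identification yields the claimed order isomorphism $\cl R\cong\ell^\infty_k(\R)$.

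For the entanglement-breaking conclusion I would factor $\Psi$ through $\ell^\infty_k(\R)$. Writing $\pi:\ell^\infty_k(\R)\to(\cl R,\star)$ for the $*$-isomorphism and $\iota:\cl R\hookrightarrow M_n(\R)$ for the inclusion, set $\Gamma:=\pi^{-1}\circ\Psi:M_n(\R)\to\ell^\infty_k(\R)$ and $\Delta:=\iota\circ\pi:\ell^\infty_k(\R)\to M_n(\R)$. Both are completely positive: $\pi$ and $\pi^{-1}$ are CP as they implement a $*$-isomorphism, $\iota$ is CP as the embedding of an operator system, and the corestriction of $\Psi$ is CP; here the Choi--Effros complete order isomorphism is what matches these complete positivities. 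Since $\Delta\circ\Gamma=\iota\circ\Psi=\Psi$, Theorem~\ref{Rentanglementbreaking characterization} (the implication (iv)$\Rightarrow$(i) with $p=1$, for which $\ell^\infty_k(\R)\otimes M_1(\R)=\ell^\infty_k(\R)$) shows that $\Psi$ is $\R$-entanglement breaking.

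I expect the main obstacle to be the identification of $(\cl R,\star)$ as $\ell^\infty_k(\R)$ rather than as a general finite-dimensional real abelian C*-algebra; it is precisely algebras like the $\ff A\cong\C$ appearing in Example~\ref{E:idem PPT not EB} that make the real case genuinely harder than the complex one. The crucial input that removes this obstacle is the observation that IPT forces $\cl R\subseteq\Sym_n(\R)$, which renders the Choi--Effros involution trivial and thereby rules out any complex summand.
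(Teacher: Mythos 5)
Your proof is correct, but it takes a genuinely different route from the paper's. The paper first complexifies $\Psi$ and invokes Lemma~3.1 of \cite{KMP} to conclude that the Choi--Effros algebra is abelian, then applies the Arens--Kaplansky structure theorem to realize $\ff A$ as $\{f\in C(Z):f\circ\tau=\ol f\}$, expands $\Psi$ in trace functionals against matrices $A_j,B_k,G_k$ (with $G_k$ antisymmetric), and finally uses invariance of $C_\Psi$ under partial transposition to force $G_k=0$; this yields an explicit separable decomposition of $C_\Psi$ and only \emph{a posteriori} the absence of complex summands ($K=0$). You instead extract everything from the single observation, via Proposition~\ref{IPTmaps} and commutation with the adjoint, that $\cl R\subseteq\Sym_n(\R)$, so the Choi--Effros involution is trivial: this gives commutativity for free (no complexification and no appeal to the KMP lemma, since $A\star B=(A\star B)^*=B^*\star A^*=B\star A$) and simultaneously rules out the $\C$-summands in the structure of a finite-dimensional commutative real C*-algebra, after which the factorization through $\ell^\infty_k(\R)$ and Theorem~\ref{Rentanglementbreaking characterization}(iv) with $p=1$ finishes the argument. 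Your route is shorter and more self-contained, and it isolates exactly where the IPT hypothesis enters (contrast with Example~\ref{E:idem PPT not EB}, where the range is symmetric-free and the involution is genuinely conjugation); the paper's route produces an explicit separable decomposition of the Choi matrix and stays closer to the complex-case template. Both arguments lean equally on the real Choi--Effros theorem to endow $\cl R$ with a real C*-structure whose matrix order is the one inherited from $M_n(\R)$.
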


\begin{proof}
The Choi-Effros product on $\cl R$ is given by $A \star B := \Psi(AB)$,
and this makes $\cl R$ into a real C*-algebra $\ff A$.
The complexification $\wt\Psi$ is PPT, unital and idempotent.
By \cite{KMP}*{Lemma~3.1}, the Choi-Effros product on $\wt\Psi(M_n(\C))$ is abelian;
and hence $\ff A$ is abelian.
Let $\Gamma : \cl R \to \ff A$ be this identification, which is a complete isometry; so $\Gamma^{-1}$ is also completely isometric.

An old result of Arens and Kaplansky shows that if $\ff A$ is an abelian real C*-algebra,
then there is a locally compact Hausdorff space $Z$ and a homeomorphism $\tau:Z \to Z$ with $\tau^2=\id_Z$
such that 
\[
 \ff A \simeq \{ f \in C_0(Z) : f(\tau(z)) = \ol{f(z)} \FORAL x \in Z \} .
\]
See \cite{rosenberg}*{Theorem 1.9} for an easy proof.
In our case, since $\ff A$ is finite dimensional, $X$ is a finite set with the discrete topology.
Hence 
\[
  Z = \{ x_j, y_k, z_k: 1 \le j \le J, 1 \le k \le K \}
\]
such that $\tau(x_j)=x_j$, $\tau(y_k) = z_k$ and $\tau(z_k) = y_k$. Thus
\[
  \ff A = \Big\{ \sum_j a_j \delta_{x_j} + \sum_k b_k \delta_{y_k} + \ol{b_k} \delta_{z_k}: a_j \in \bb R, \, b_k \in \bb C \Big\}. 
\]
A basis for $\ff A$ as a real vector space is given by $\{ \delta_{x_j}, \delta_{y_k} + \delta_{z_k}, i(\delta_{y_k} - \delta_{z_k}) \}$. 
Write these vectors as $\{ \alpha_j, \beta_k, \gamma_k \}$. 
Then $\alpha_j^*= \alpha_j$, $\beta_k^* = \beta_k$ and $\gamma_k^* = - \gamma_k$.
 
The map $\Phi = \Gamma \Psi : M_n(\bb R) \to \ff A$ can be written as
\[
  \Phi(X) = \sum_j \Tr(XA_j) \alpha_j + \sum_k \Tr(XB_k) \beta_k + \Tr(XG_k) \gamma_k
\]
for matrices $A_j, B_k, G_k \in M_n(\bb R)$.
Since $\Phi$ is CP, 
\[
 X \ge 0 \implies \Tr(XA_j) \ge 0,\ \Tr(X B_k) \ge 0 \AND \Tr(XG_k) =0 .
\]
Therefore $A_j, B_k \in \PSD(\R^n)$ and $G_k^t = -G_k$, since $G_k$ is orthogonal to all positive matrices, and hence to all symmetric matrices.

Let $P_j = \Gamma^{-1}(\alpha_j)$, $Q_k = \Gamma^{-1}(\beta_k)$ and $R_k = \Gamma^{-1}(\gamma_k)$.
Then $P_j, Q_k$ are positive, and $R_k^t = - R_k$ because $\Gamma$ and $\Gamma^{-1}$ are positive.
This is a real basis for $\cl R$.
Now we have
\[
 \Psi(X) = \Gamma^{-1} \Phi(X) = \sum_j \Tr(XA_j) P_j + \sum_k \Tr(XB_k) Q_k + \Tr(XG_k) R_k . 
\]
The Choi matrix is
\[
 C_\Psi = \sum_j A_j \otimes P_j + \sum_k B_k \otimes Q_k + \sum_k G_k \otimes R_k.
\]
The partial transpose of this map is
\[
 C_{T \circ\Psi} = \sum_j A_j \otimes P_j + \sum_k B_k \otimes Q_k - \sum_k G_k \otimes R_k.
\]
However, $C_{T \circ\Psi} = C_\Psi$ since $\Psi$ is IPT. So $G_k = 0$.
Therefore $C_\Psi$ is separable, and hence $\Psi$ is $\R$-entanglement breaking by Theorem~\ref{Rentanglementbreaking characterization}.

Now $\Gamma$ is an isomorphism, and in particular it is surjective.
Therefore $K=0$ and $Z = \{ x_j : 1 \le j \le J \}$.
Hence $\ff A \simeq C_\R(Z) \simeq l^\infty_J(\R)$.
\end{proof}

Now the asymptotic result follows as in \cite{KMP} by applying Ellis's Theorem that every compact semigroup contains idempotents (and its proof).

\begin{thm} \label{T:asymptotic REB}
Let $\Phi:M_n(\R)\rightarrow M_n(\R)$ be a unital IPT  map.
Then 
\[
 \lim_{k\to\infty} \dist(\Phi^k, \R\text{-}\EB  (n,n)) = 0 . 
\]
\end{thm}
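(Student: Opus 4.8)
The plan is to follow the method of \cite{KMP}, substituting the real analogues of the needed facts and using the preceding theorem (unital IPT idempotent maps are $\R$-entanglement breaking) in place of its complex counterpart. First I would form the semigroup $\cl S = \overline{\{ \Phi^k : k\ge 1\}}$, taking the closure in the operator-norm topology on the finite-dimensional space of linear maps on $M_n(\R)$. Three observations make $\cl S$ a compact abelian semigroup of unital IPT maps. Each $\Phi^k$ is unital and completely positive, hence has norm $\|\Phi^k\|=\|\Phi^k(I_n)\|=1$, so the orbit is bounded and, in finite dimensions, has compact closure. The conditions defining a unital IPT map (completely positive, $\Phi(I_n)=I_n$, and $C_\Phi=J_\Phi$) are all closed, so every element of $\cl S$ is again unital and IPT. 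Finally, composition is continuous and preserves the IPT property: by Proposition~\ref{IPTmaps} an IPT map annihilates $\Asym_n(\R)$, so if $\Psi$ is IPT then $\Phi\circ\Psi$ is completely positive and kills $\Asym_n(\R)$, hence is IPT; thus $\cl S$ is closed under composition. Since the powers of $\Phi$ commute, $\cl S$ is abelian.

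Next I would apply the Ellis--Numakura lemma (every compact semigroup contains an idempotent) to obtain $\Psi\in\cl S$ with $\Psi\circ\Psi=\Psi$. As a limit of powers $\Phi^{m_i}$, the map $\Psi$ is unital, IPT, and idempotent, so the preceding theorem gives that $\Psi$ is $\R$-entanglement breaking; moreover $\Psi$ commutes with $\Phi$. I would then record that $\R\text{-}\EB(n,n)$ is a two-sided ideal in the semigroup of real completely positive maps: writing $\Psi(Y)=\sum_i C_iYC_i^*$ with $\rank(C_i)\le 1$ (Theorem~\ref{Rentanglementbreaking characterization}(iii)) and taking Kraus operators $D_j$ for a completely positive $\Theta$, the maps $\Psi\circ\Theta$ and $\Theta\circ\Psi$ have Kraus operators $C_iD_j$ and $D_jC_i$, each of rank $\le 1$. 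In particular $\Psi\circ\Phi^k$ is $\R$-entanglement breaking for every $k$.

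The remaining and decisive step is the estimate $\|\Phi^k-\Psi\circ\Phi^k\|=\|(\id-\Psi)\circ\Phi^k\|\to 0$. For this I would use the spectral structure of the power-bounded operator $\Phi$ on $M_n(\R)$: decompose $M_n(\R)=V_1\oplus V_0$ into the (real) peripheral part $V_1$ spanned by eigenvectors with $|\lambda|=1$, which are semisimple by power-boundedness, and the complementary part $V_0$ of spectral radius $<1$. Both are $\Phi$-invariant, $\Phi^k|_{V_0}\to 0$, and $\Phi|_{V_1}$ is invertible with both it and its inverse power-bounded, so the closed semigroup it generates is a compact group whose only idempotent is $\id_{V_1}$. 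Since $\Psi|_{V_0}=\lim\Phi^{m_i}|_{V_0}=0$ while $\Psi|_{V_1}$ is an idempotent in that group, $\Psi$ is exactly the projection onto $V_1$ along $V_0$. As $\Phi^k$ commutes with $\Psi$, we get $(\id-\Psi)\circ\Phi^k=\Phi^k|_{V_0}\to 0$. Because $\Psi\circ\Phi^k$ lies in the closed cone $\R\text{-}\EB(n,n)$, this yields $\dist(\Phi^k,\R\text{-}\EB(n,n))\le\|\Phi^k-\Psi\circ\Phi^k\|\to 0$.

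I expect the main obstacle to be precisely this last bridge: passing from the abstract idempotent supplied by the Ellis--Numakura lemma to a quantitative statement about \emph{all} large powers. The clean resolution is to recognize that any idempotent limit of $\{\Phi^k\}$ must coincide with the peripheral spectral projection, so that $\id-\Psi$ annihilates the asymptotically dominant part of $\Phi^k$; power-boundedness, which forces the peripheral spectrum to be semisimple, is what makes this identification work. The two structural facts that let the real argument run exactly as the complex one are the preservation of the IPT property under composition (Proposition~\ref{IPTmaps}) and the two-sided ideal property of $\R$-entanglement breaking maps.
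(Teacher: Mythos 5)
Your proposal is correct and follows essentially the route the paper intends: the paper's entire proof is the remark that the result ``follows as in \cite{KMP} by applying Ellis's Theorem,'' i.e.\ form the compact abelian semigroup of unital IPT iterates, extract an idempotent $\Psi$, invoke the preceding theorem to see that $\Psi$ is $\R$-entanglement breaking, and use the two-sided ideal property of $\R\text{-}\EB(n,n)$ exactly as you do. The one place you deviate is the final convergence $\|\Phi^k-\Psi\circ\Phi^k\|\to 0$, which you get by identifying $\Psi$ with the peripheral spectral projection of the power-bounded operator $\Phi$, whereas \cite{KMP} argue purely semigroup-theoretically (the $\omega$-limit set of the iterates is a compact group whose identity is the idempotent, so every limit point factors through it); both arguments are valid, and yours has the mild advantage of being quantitative.
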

\bigskip

\section*{Acknowledgments}  

GC  was supported by the Hong Kong Research Grant Council through grant 17300920  and through 
the Senior Research Fellowship Scheme via SRFS2021-7S02, and by the Croucher foundation. 
KRD and VIP were partially supported by the Natural Sciences and Engineering Research Council of Canada (NSERC).
MR is supported by the European Research Council (ERC Grant Agreement No. 851716).

The authors would like to thank the anonymous referees for many helpful suggestions in this article.

\bibliographystyle{amsplain}

\end{document}